\numberwithin{equation}{section}
\newtheorem{Theorem}{Theorem}[section]
\newtheorem{Corollary}[Theorem]{Corollary}
\newtheorem{Lemma}[Theorem]{Lemma}
 { \theoremstyle{definition}
\newtheorem{Remark}[Theorem]{Remark} }
\def\arrowhead{angle 90}
\tikzset{>=\arrowhead}
\tikzstyle{G}=[draw, circle, minimum size=1em, scale=1, inner sep=2pt]
\tikzstyle{B}=[draw,circle,fill=black,scale=1]
\tikzstyle{H}=[draw,circle,fill=white,scale=1]
\tikzstyle{F}=[draw, rectangle, minimum width=1em, minimum height=1em, scale=1]
\tikzstyle{every picture}=[scale=1,baseline=(current bounding box.south)]
\newcommand{\ds}{\displaystyle}
\newcommand{\q}{{\mathsf q}}
\newcommand{\p}{{\mathsf p}}
\def\EXP{\textrm{{\large e}}}
\newcommand{\ii}{i}
\newcommand{\x}{{\boldsymbol{x}}}
\newcommand{\y}{{\boldsymbol{y}}}
\newcommand{\bu}{{\boldsymbol{u}}}
\newcommand{\bv}{{\boldsymbol{v}}}
\newcommand{\olW}{\overline{W}}
\def\re{\operatorname{Re}}
\def\im{\operatorname{Im}}
\newcommand{\veca}{\bm{a}}
\newcommand{\vecb}{\bm{b}}
\newcommand{\vecs}{\bm{s}}
\newcommand{\vect}{\bm{t}}
\newcommand{\vecc}{\bm{c}}
\newcommand{\vecd}{\bm{d}}
\newcommand{\vecx}{\bm{x}}
\newcommand{\vecw}{\bm{w}}
\newcommand{\tveca}{\tilde{\bm{a}}}
\newcommand{\tvecb}{\tilde{\bm{b}}}
\newcommand{\tvecs}{\tilde{\bm{s}}}
\newcommand{\tvect}{\tilde{\bm{t}}}
\newcommand{\bveca}{\bar{\bm{a}}}
\newcommand{\bvecb}{\bar{\bm{b}}}
\newcommand{\bvecs}{\bar{\bm{s}}}
\newcommand{\bvect}{\bar{\bm{t}}}
\newcommand{\vecz}{\bm{z}}
\newcommand{\vecy}{\bm{y}}
\DeclarePairedDelimiter\floor{\lfloor}{\rfloor}
\begin{document}

\allowdisplaybreaks

\newcommand{\arXivNumber}{1704.03159}

\renewcommand{\thefootnote}{}

\renewcommand{\PaperNumber}{013}

\FirstPageHeading

\ShortArticleName{Elliptic Hypergeometric Sum/Integral Transformations and Supersymmetric Lens Index}

\ArticleName{Elliptic Hypergeometric Sum/Integral\\ Transformations and Supersymmetric Lens Index\footnote{This paper is a~contribution to the Special Issue on Elliptic Hypergeometric Functions and Their Applications. The full collection is available at \href{https://www.emis.de/journals/SIGMA/EHF2017.html}{https://www.emis.de/journals/SIGMA/EHF2017.html}}}

\Author{Andrew P.~KELS~$^\dag$ and Masahito YAMAZAKI~$^\ddag$}

\AuthorNameForHeading{A.P.~Kels and M.~Yamazaki}

\Address{$^\dag$~Institute of Physics, University of Tokyo, Komaba, Tokyo 153-8902, Japan}
\EmailD{\href{mailto:andrew.p.kels@gmail.com}{andrew.p.kels@gmail.com}}

\Address{$^\ddag$~Kavli Institute for the Physics and Mathematics of the Universe (WPI), University of Tokyo,\\
\hphantom{$^\ddag$}~Chiba 277-8583, Japan}
\EmailD{\href{mailto:masahito.yamazaki@ipmu.jp}{masahito.yamazaki@ipmu.jp}}

\ArticleDates{Received April 24, 2017, in f\/inal form February 02, 2018; Published online February 16, 2018}

\Abstract{We prove a pair of transformation formulas for multivariate elliptic hypergeometric sum\slash integrals associated to the $A_n$ and $BC_n$ root systems, generalising the formulas previously obtained by Rains. The sum/integrals are expressed in terms of the lens elliptic gamma function, a generalisation of the elliptic gamma function that depends on an additional integer variable, as well as a complex variable and two elliptic nomes. As an application of our results, we prove an equality between $S^1\times S^3/\mathbb{Z}_r$ supersymmetric indices, for a pair of four-dimensional $\mathcal{N}=1$ supersymmetric gauge theories related by Seiberg duality, with gauge groups ${\rm SU}(n+1)$ and ${\rm Sp}(2n)$. This provides one of the most elaborate checks of the Seiberg duality known to date. As another application of the $A_n$ integral, we prove a star-star relation for a two-dimensional integrable lattice model of statistical mechanics, previously given by the second author.}

\Keywords{elliptic hypergeometric; elliptic gamma; supersymmetric; Seiberg duality; integrable; exactly solvable; Yang--Baxter; star-star}

\Classification{33C67; 33E20; 81T60; 81T13; 82B23; 16T25}

\renewcommand{\thefootnote}{\arabic{footnote}}
\setcounter{footnote}{0}

\section{Introduction}

Elliptic hypergeometric series and integrals comprise the top-level class of hypergeometric functions, and provide generalisations of many of the well-known classical and basic hypergeometric functions and their corresponding identities. In contrast to the latter classical and basic counterparts whose study was initiated centuries ago, the area of elliptic hypergeometric functions has only been developed in relatively recent times, following their initial discovery from the Boltzmann weights of the integrable fused RSOS models of statistical mechanics \cite{DJKMO,Frenkel1997}. Central to this paper are the elliptic hypergeometric integrals that are expressed in terms of the elliptic gamma function, and satisfy many remarkable identities \cite{Spiridonov-essays}, some of which have been found to have important applications in dif\/ferent areas of mathematical physics.

One of these areas is exactly solved models, where the elliptic beta integral \cite{SpiridonovEBF}, a central identity for the theory of elliptic hypergeometric functions, is known to be equivalent to a Yang--Baxter equation \cite{Bazhanov:2010kz} (more specif\/ically a star-triangle relation), which is a fundamental identity for integrability of two-dimensional lattice models of statistical mechanics \cite{Baxter:1982zz}. Specif\/ically, the Yang--Baxter equation implies that the row-to-row transfer matrices of the lattice model commute, and following the method of Baxter \cite{Baxter:1972hz}, this can be used to solve for the partition function in the thermodynamic limit. Such lattice models related to elliptic hypergeometric integrals are quite general \cite{Bazhanov:2016ajm,Bazhanov:2007mh,Bazhanov:2007vg,Bazhanov:2010kz,GahramanovKels,Gahramanov:2015cva,Kels:2013ola,Kels:2015bda,Spiridonov:2010em}, and reduce to many important integrable lattice models of ``Ising type'' (e.g., \cite{AuYang:1987zc,Baxter:1987eq,Faddeev:1993pe,Fateev:1982wi,Kashiwara:1986tu,Volkov:1992uv}) as special limiting cases.

The above lattice models satisfying the star-triangle relation, involve interactions between single-component spins. There exist also multi-component spin models that satisfy a so-called star-star relation, another fundamental identity for integrability of lattice models \cite{Baxter:1997tn}. In this case a solution of the star-star relation introduced by Bazhanov and Sergeev \cite{Bazhanov:2011mz}, was recently shown~\cite{Bazhanov:2013bh} to be equivalent to a special case of Rains' multivariate transformation formula associated to the $A_n$ root system, due to Rains \cite{RainsT}. This example establishes another link between integrable lattice models and elliptic hypergeometric integrals.

Another important connection arises between identities of elliptic hypergeometric integrals, and Seiberg duality \cite{Seiberg:1994pq} of the indices of four-dimensional supersymmetric gauge theories. This was f\/irst observed by Dolan and Osborn \cite{Dolan:2008qi}, who showed that supersymmetric indices for a pair of Seiberg-dual theories on $S^1\times S^3$ \cite{Kinney:2005ej}, are equivalent as a consequence of Rains transformation formulas \cite{RainsT}. This connection provides a rigorous verif\/ication of the proposed dualities for supersymmetric gauge theories, by matching the equivalence of supersymmetric indices with the mathematically proven identities of elliptic hypergeometric integrals. This particular connection is quite powerful, as in principle it provides a way to generate a large number of complicated, and generally new, identities of elliptic hypergeometric integrals (and in some cases the Yang--Baxter equation), from systematic analysis of dualities among supersymmetric gauge theories~\cite{Spiridonov:2009za,Spiridonov:2011hf}.

One interesting direction is to generalise the above indices on $S^1\times S^3$, by replacing the~$S^3$ with a lens space $S^3/\mathbb{Z}_r$~\cite{Benini:2011nc}. The resulting expressions for the indices, depend on sets of both complex and integer variables, and involve a summation over some discrete variable, as well as integration, while for $r=1$, the expressions reduce to known elliptic hypergeometric integrals~\cite{Dolan:2008qi}. Consequently, such an expression that involves both the summation and integration will be referred to here as an ``elliptic hypergeometric sum/integral'' (there has also been proposed the name ``raref\/ied elliptic hypergeometric integral''~\cite{rarified}). The sum/integrals are expressed in terms of a generalisation of the elliptic gamma function~\cite{Ruijsenaars:1997:FOA}, known as the lens elliptic gamma function~\cite{Benini:2011nc,Kels:2015bda}. Compared to the elliptic gamma function, the lens elliptic gamma function depends on an extra integer parameter $r=1,2,\ldots$, and an extra integer variable~$m$ mod~$r$, and reduces to the regular elliptic gamma function for the case $r=1$.

Equivalence of Seiberg dual indices on $S^1\times S^3/\mathbb{Z}_r$, implies corresponding identities between dif\/ferent elliptic hypergeometric sum/integrals. The simplest example of such an identity is the elliptic beta sum/integral proven by the f\/irst author \cite{Kels:2015bda}, which corresponds to ``electric-magnetic'' duality between two particular $\mathcal{N}=1$ theories \cite{GahramanovKels}. This elliptic beta sum/integral depends on six integer variables, as well as six complex variables and two complex elliptic nomes, and reduces to Spiridonov's elliptic beta integral \cite{SpiridonovEBF} for $r=1$. Spiridonov has also recently proven sum/integral evaluation formulas associated to the $BC_n$ root system~\cite{rarified}, and considered further mathematical properties of these sum/integral expressions, for example, deriving the sum/integral analogue of the elliptic Gauss hypergeometric equation~\cite{Spiridonov-essays}.

As expected, the above sum/integral identities also have relevance to integrable lattice mo\-dels. In this context, the second author found a solution of the star-star relation \cite{Yamazaki:2013nra}, that generalises the multi-spin model of Bazhanov and Sergeev~\cite{Bazhanov:2011mz}, to the case of continuous, as well as discrete, spin variables. The simplest case of this model corresponds to the above elliptic beta sum/integral~\cite{Kels:2015bda}. The star-star relation~\cite{Yamazaki:2013nra} provides one of the most general solutions of the Yang--Baxter equation known in the literature, and one of the motivations of this paper was to obtain the corresponding elliptic hypergeometric sum/integral transformation formula for the general case.

The main result of this paper, is proving a pair of new sum/integral transformation formulas that are associated respectively to the $A_n$ and $BC_n$ root systems. It is shown how these transformation formulas imply lens index duality for supersymmetric gauge theories on $S^1\times S^3/\mathbb{Z}_r$~\cite{Benini:2011nc}, and imply the above-mentioned star-star relation~\cite{Yamazaki:2013nra} (in the $A_n$ case). The sum/integral transformation formulas generalise the corresponding transformation formulas of Rains \cite{RainsT}, where the latter are equivalent to the choice $r=1$. As expected, the $BC_n$ sum/integral transformation formula derived in this paper, reduces to Spiridonov's $BC_n$ sum/integral identity~\cite{rarified} as a~special case.

The method of proof for the sum/integral formulas basically follows from Rains' proofs~\cite{RainsT} of the $r=1$ cases. That is, the transformations are f\/irst proven for some special choice of the variables, for which the transformations can be written in the form of a determinant of univariate sum/integrals of (lens) theta functions. Taking limits of these special cases provides a dense set of cases for which the transformations hold, and thus the transformations hold in general. There are some essential modif\/ications in the details of the proofs, due to the appearance of the additional integer variables, and also due to the use of the lens elliptic gamma function, which has a non-trivial normalisation factor given in terms of multiple Bernoulli polynomials~\cite{Narukawa2004247}. Otherwise the steps of the proofs are quite analogous to the case $r=1$.

The paper is arranged as follows. Section \ref{sec:defs} def\/ines the lens elliptic gamma function, and gives a number of identities that it satisf\/ies, which are used throughout the paper. Sections~\ref{sec:antrans} and~\ref{sec:bcntrans} present the respective $A_n$ and $BC_n$ sum/integral transformation formulas, along with their proofs. These sections also contain an overview of the related elliptic hypergeometric integrals that are obtained as special cases of the sum/integral transformations. Section~\ref{sec:SUSY} discusses a supersymmetric gauge theory interpretation of the~$A_n$ and~$BC_n$ transformation formulas as Seiberg dualities. Finally, Section~\ref{sec:YBE} introduces the aforementioned lattice model with multi-component real and integer valued spin variables~\cite{Yamazaki:2013nra}, and shows that the star-star relation of the model is equivalent to a special case of the $A_n$ sum/integral transformation formula.

\section{Def\/initions}\label{sec:defs}

The multiple Bernoulli polynomials $B_{n,k}(z;\omega_1,\ldots,\omega_n)$ are def\/ined through the generating function
\begin{gather*}
\frac{x^n \EXP^{zx}}{\prod\limits_{j=1}^n(\EXP^{\omega_j x}-1)}=\sum_{k=0}^\infty B_{n,k}(z;\omega_1,\ldots,\omega_n) \frac{x^k}{k!},
\end{gather*}
where $z\in\mathbb{C}$, and $\omega_1,\ldots,\omega_n\in\mathbb{C}-\{0\}$.

These functions previously appeared in relation to the modular properties of multiple gamma functions \cite{Narukawa2004247}. For this paper only a particular multiple Bernoulli polynomial $B_{3,3}(z;\omega_1,\omega_2,\omega_3)$, is needed, which is given explicitly by
\begin{gather}
B_{3,3}(z;\omega_1,\omega_2,\omega_3)= \frac{z^3}{\omega_1\omega_2\omega_3}- \frac{3z^2\sum\limits_{i=1}^3\omega_i}{2\omega_1\omega_2\omega_3}+\frac{z\left(\sum\limits_{i=1}^3\omega_i^2+3\sum_{1\leq i<j\leq 3}\omega_i\omega_j\right)}{2\omega_1\omega_2\omega_3}\nonumber\\
\hphantom{B_{3,3}(z;\omega_1,\omega_2,\omega_3)=}{} -\frac{\left(\sum\limits_{i=1}^3\omega_i\right)\left(\sum\limits_{1\leq i<j\leq3}\omega_i\omega_j\right)}{4\omega_1\omega_2\omega_3}.\label{bernoulli}
\end{gather}

Let us also introduce the two complex parameters $\sigma, \tau \in \mathbb{C}$ that satisfy
\begin{gather*}
\im(\sigma),\; \im(\tau)>0,
\end{gather*}
and def\/ine $R(z;\sigma,\tau)$, and $R_2(z,m;\sigma,\tau)$, as the following combinations of \eqref{bernoulli}:
\begin{gather}
R(z;\sigma,\tau):=\frac{B_{3,3}(z;\sigma,\tau,-1)+B_{3,3}(z-1;\sigma,\tau,-1)}{12},\nonumber\\
 R_2(z,m;\sigma,\tau):= R(z+m\sigma;r\sigma,\sigma+\tau)+R(z+(r-m)\tau;r\tau,\sigma+\tau)\nonumber\\
\hphantom{R_2(z,m;\sigma,\tau)}{} =\ds\frac{(\sigma+\tau-2z)(2z^2-2z(\sigma+\tau)+\sigma\tau(r^2+6(m-r)m)+1)}{24r\sigma\tau}\nonumber\\
\hphantom{R_2(z,m;\sigma,\tau)=}{} -\frac{(\sigma-\tau)(2m-r)(m-r)m}{12r},\label{r2def}
\end{gather}
where $z\in\mathbb{C}$, $m\in\mathbb{Z}$, and
\begin{gather*}
r=1,2,\ldots,
\end{gather*}
is an integer parameter.

The lens elliptic gamma function is def\/ined here as \cite{Benini:2011nc,GahramanovKels,Kels:2015bda,Razamat:2013opa}
\begin{gather}\label{legf2}
\Gamma(z,m;\sigma,\tau):=\EXP^{\phi_e(z, m;\sigma,\tau)}\gamma(z,m;\sigma,\tau),\qquad z\in\mathbb{C}, \qquad m\in \{0,1,\ldots,r-1\},
\end{gather}
where
\begin{gather}
 \phi_e(z,m;\sigma,\tau) = 2\pi\ii\left(R_2\left(z,0; \sigma-\frac{1}{2},\tau+\frac{1}{2}\right)-R_2\left(z,m;\sigma-\frac{1}{2},\tau+\frac{1}{2}\right)\right)\nonumber\\
\hphantom{\phi_e(z,m;\sigma,\tau)}{} =\ds 2\pi\ii\left(R_2(z,0;\sigma,\tau)+R_2\left(0,m; \frac{1}{2},-\frac{1}{2}\right)-R_2(z,m;\sigma,\tau)\right),\label{ellnorm}
\end{gather}
and
\begin{gather}\label{littlegamma}
\gamma(z,m;\sigma,\tau)=\prod_{j,k=0}^\infty\frac{1-\EXP^{-2\pi\ii z}\p^{-m}(\p\q)^{j+1}\p^{r(k+1)}}{1-\EXP^{2\pi\ii z}\p^{m}(\p\q)^j\p^{rk}}\frac{1-\EXP^{-2\pi\ii z}\q^{-r+m}(\p\q)^{j+1}\q^{r(k+1)}}{1-\EXP^{2\pi\ii z}\q^{r-m}(\p\q)^j\q^{rk}}.
\end{gather}
The elliptic nomes in \eqref{littlegamma} are def\/ined in terms of $\sigma$ and $\tau$ as
\begin{gather*}%\label{nomedef}
\p=\EXP^{2\ii\pi\sigma},\qquad\q=\EXP^{2\ii\pi\tau}.
\end{gather*}

The lens elliptic gamma function \eqref{legf2} is periodic in the complex and integer arguments respectively, satisfying
\begin{gather}
\Gamma(z+2kr,m;\sigma,\tau)=\Gamma(z,m;\sigma,\tau),\qquad\Gamma(z,m+kr;\sigma,\tau)=\Gamma(z,m;\sigma,\tau),\qquad k\in\mathbb{Z}.
\label{periodic}
\end{gather}
Accordingly, throughout the paper the notation $a\mbox{ mod }r$, for an integer $a$, is always taken to be the corresponding element of $\{0,1,\ldots,r-1\}$, as in \eqref{legf2}. Note also that $\gamma(z+k,m;\sigma,\tau)=\gamma(z,m;\sigma,\tau)$ for integer $k$, and
the $2r$-periodicity of $\Gamma(z,m; \sigma, \tau)$ comes from the factor $\phi_e$.

For $r=1$ (in which case we may take $m=0$), \eqref{legf2} is just the usual elliptic gamma function~\cite{Ruijsenaars:1997:FOA}, def\/ined as
\begin{gather}\label{egf2}
\ds\left.\Gamma(z,m;\sigma,\tau)\,\right|_{(r=1)}=\Gamma_{1}(z;\sigma,\tau)=\ds\prod_{j,k=0}^\infty\frac{1-\EXP^{-2\pi\ii z}\p^{j+1}\q^{k+1}}{1-\EXP^{2\pi\ii z}\p^j \q^k}.
\end{gather}
The lens elliptic gamma function \eqref{legf2} may be written as a product of two regular elliptic gamma functions \eqref{egf2}, as
\begin{gather*}%\label{legfprod}
\Gamma(z,m;\sigma,\tau)=\EXP^{\phi_e(z,m;\sigma,\tau)} \Gamma_{1}(z+\sigma m;r\sigma,\sigma+\tau) \Gamma_{1}(z+\tau(r-m);r\tau,\sigma+\tau).
\end{gather*}
In this paper, the lens elliptic gamma function \eqref{legf2} will usually be written as $\Gamma(z,m)$, where{\samepage
\begin{gather*}
\Gamma(z,m):=\Gamma(z,m;\sigma,\tau),
\end{gather*}
with implicit dependence on the parameters $\sigma$ and $\tau$.}

For integers $m\in\{0,1,\ldots,r-1\}$, the poles and zeroes of lens elliptic gamma function \eqref{legf2} are respectively located at the points
\begin{gather*}
z=-(\sigma+\tau)j-\sigma(rk+m)+n,\; -(\sigma+\tau)j-\tau(r(k+1)-m)+n, \\
 z=(\sigma+\tau)(j+1)+\sigma(r(k+1)-m)+n,\; (\sigma+\tau)(j+1)+\tau(rk+m)+n,
\end{gather*}
where $j,k=0,1,\ldots$, and $n\in\mathbb{Z}$.

The lens elliptic gamma function satisf\/ies some useful relations:
\begin{gather*}%\label{lensident}
\Gamma((\sigma+\tau)-z,-m)\Gamma(z,m)=1,\\
\Gamma(z+n\sigma,m-n)=\Gamma(z,m)\prod_{j=0}^{n-1}\theta_1(z+j\sigma,m-j),\\
\Gamma(z+n\tau,m+n)=\Gamma(z,m)\prod_{j=0}^{n-1}\theta_2(z+j\tau,m+j),
\end{gather*}
for integers $n=1,2,\dots$. Here the theta functions are def\/ined as
\begin{gather}
\theta_1(z,m)=\EXP^{\phi_1(z,m)} \theta\big(\EXP^{-2\pi\ii z}\EXP^{2\pi\ii\tau m}\,|\,\EXP^{2\pi\ii\tau r}\big),\nonumber\\
\theta_2(z,m)=\EXP^{\phi_2(z,m)} \theta\big(\EXP^{2\pi\ii z}\EXP^{2\pi\ii\sigma m}\,|\,\EXP^{2\pi\ii\sigma r}\big),\label{lthtdef}
\end{gather}
where the normalisation factors are
\begin{gather}
\phi_1(z,m)=\phi_e(z+\sigma,m-1;\sigma,\tau)-\phi_e(z,m;\sigma,\tau) \nonumber\\
\hphantom{\phi_1(z,m)}{} =\frac{\pi\ii}{12r}\big(3(r+1-2m)(2z+1)-(r^2-1)(\sigma-\tau-1)-6m(r-m)(\tau+1)\big), \nonumber\\
\phi_2(z,m) =\phi_e(z+\tau,m+1;\sigma,\tau)-\phi_e(z,m;\sigma,\tau)\nonumber\\
\hphantom{\phi_2(z,m)}{} =\frac{-\pi\ii}{12r}\big(3(r-1-2m)(2z-1)-(r^2-1)(\sigma-\tau-1)+6m(r-m)(\sigma-1)\big),\!\!\!\label{lthtnorm}
\end{gather}
with $\phi_e$ as def\/ined in \eqref{ellnorm}, and where $\theta(z\,|\,\q)$ is the usual theta function
\begin{gather*}
\theta(z\,|\,\q)=\left(z;\q\right)_\infty \left(\frac{\q}{z};\q\right)_\infty,
\qquad
(z;\q)_\infty=\prod_{j=0}^\infty\big(1-z\q^j\big).
\end{gather*}
The theta functions $\theta_1$, and $\theta_2$, in \eqref{lthtdef}, have non-trivial dependence on both of the parameters~$\sigma$, and~$\tau$, through the normalisation functions~\eqref{lthtnorm}.

The theta functions \eqref{lthtdef}, each satisfy the same periodicities \eqref{periodic} as the lens elliptic gamma function, i.e., for any integer~$k$
\begin{gather*}
\theta_1(z+2kr,m)=\theta_1(z,m),\qquad\theta_1(z,m+kr)=\theta_1(z,m), \\
\theta_2(z+2kr,m)=\theta_2(z,m),\qquad\theta_2(z,m+kr)=\theta_2(z,m).
\end{gather*}
The theta functions \eqref{lthtdef} also satisfy
\begin{gather*}%\label{thtinv}
\theta_1(-z,-m)=-\theta_1(z,m)\EXP^{-2\pi\ii(z-m)/r},\qquad
\theta_2(-z,-m)=-\theta_2(z,m)\EXP^{-2\pi\ii(z-m)/r},
\end{gather*}
and
\begin{gather*}%\label{thtshft}
\theta_1(z+n\tau,m+n)=\theta_1(z,m) \EXP^{-n\pi\ii(2z+(n-1)\tau+r-2m-n+1)/r},\\
\theta_2(z+n\sigma,m-n)=\theta_2(z,m) \EXP^{-n\pi\ii(2z+(n-1)\sigma+r-2m+n-1)/r},\\
\theta_1(z+rn\tau,m)=\theta_1(z,m) \EXP^{-n\pi\ii(2z+\tau(rn-1)+1)},\\
\theta_2(z+rn\sigma,m)=\theta_2(z,m) \EXP^{-n\pi\ii(2z+\sigma(rn-1)+1)},
\end{gather*}
for integers $n$.

In this paper, a set of complex variables $t_1,\ldots,t_n\in\mathbb{C}$ will frequently be represented as a~vector
\begin{gather*}
\vect=(t_1,\ldots,t_n),
\end{gather*}
and addition with a complex number $\gamma\in\mathbb{C}$ is given by
\begin{gather*}
\gamma+\vect:=(\gamma+t_1,\ldots,\gamma+t_n).
\end{gather*}
An analogous notation also applies to sets of integer variables $a_1,\ldots,a_n\in\mathbb{Z}$, and addition with integers.

\section[The $A_n\leftrightarrow A_m$ transformation]{The $\boldsymbol{A_n\leftrightarrow A_m}$ transformation}\label{sec:antrans}

\subsection{Main theorem}

Let us introduce the complex variables $\sigma$, $\tau$, $t_i$, $s_i$, and integer variables $a_i$, $b_i$, for $i=0,1,\ldots,m+n+1$, satisfying
\begin{gather}
\im(\sigma),\im(\tau)>0, \nonumber\\
\sum_{i=0}^{m+n+1} (t_i+s_i)\equiv(m+1)(\sigma+\tau)\ (\textrm{mod }2r), \qquad \sum_{i=0}^{m+n+1} (a_i+b_i)\equiv 0 \ (\textrm{mod }r).\label{balancing}
\end{gather}
In terms of these variables, we def\/ine $I^m_{A_n}(Z,Y\,|\,\vect,\veca;\vecs,\vecb)$ as the following elliptic hypergeometric sum/integral
\begin{gather}
I^m_{A_n}(Z,Y\,|\,\vect,\veca;\vecs,\vecb)\ds=\frac{\lambda^n}{(n+1)!}\sum_{\substack{y_0,\ldots,y_{n-1}=0 \\ \sum\limits_{i=0}^{n}y_i=Y}}^{r-1}\int_{\sum\limits_{i=0}^nz_i=Z}\Delta^m_{A_n}(\vecz,\vecy;\vect,\veca;\vecs,\vecb)\prod_{i=0}^{n-1} {\rm d}z_i,\label{AnIntDef}
\end{gather}
where $m,n=0,1,\ldots$,
\begin{gather*}%\label{lambdadef}
\lambda=\big(\p^r;\p^r\big)_\infty\big(\q^r;\q^r\big)_\infty,\\
%\label{integrand}
\Delta^m_{A_n}(\vecz,\vecy;\vect,\veca;\vecs,\vecb)=\frac{\prod\limits_{i=0}^n\prod\limits_{j=0}^{m+n+1}\Gamma(t_j+z_i,a_j+y_i) \Gamma(s_j-z_i,b_j-y_i)}{\prod\limits_{0\leq i<j\leq n}\Gamma(z_i-z_j,y_i-y_j) \Gamma(z_j-z_i,y_j-y_i)},
\end{gather*}
and
\begin{alignat}{4}
& \vecz=(z_0,z_1,\ldots,z_{n}),\qquad&& \vect=(t_0,t_1,\ldots,t_{m+n+1}),\qquad&& \vecs=(s_0,s_1,\ldots,s_{m+n+1}),& \nonumber\\
& \vecy=(y_0,y_1,\ldots,y_{n}),\qquad&&\veca=(a_0,a_1,\ldots,a_{m+n+1}),\qquad&& \vecb=(b_0,b_1,\ldots,b_{m+n+1}).&\label{anvariables}
\end{alignat}
Due to the periodicities of the lens elliptic gamma function \eqref{periodic}, the condition on the summation variables in \eqref{AnIntDef} is to be understood as $\sum\limits_{i=0}^ny_i=Y \; (\textrm{mod }r)$, and the condition on the integration variables is to be understood as $\sum\limits_{i=0}^nz_i=Z \; (\textrm{mod }2r)$. However in the following we will avoid writing the latter $\textrm{mod }$ conditions on the summation and integration variables for conciseness.

For the values satisfying $\im(s_i)>\frac{\im(Z)}{n+1}>-\im(t_i)$, the contour in \eqref{AnIntDef} may be chosen to be $C^{n}$, where $C$ is a straight line that connects the two points $\ii\frac{\im(Z)}{n+1}$, and $1+\ii\frac{\im(Z)}{n+1}$. Otherwise the sum/integral \eqref{AnIntDef} is def\/ined by meromorphic continuation from the latter case, with appropriately chosen contours connecting the points $z_j=k_j\ii$, respectively to the points $z_j=1+k_j\ii$, where $k_j$ are real numbers, for $j=0,1,\ldots,n-1$.

The particular case $n=0$ of \eqref{AnIntDef} is given by
\begin{gather*}
I^m_{A_{0}}(Z,Y\,|\,\vect,\veca;\vecs,\vecb)=\prod_{j=0}^{m+1}\Gamma(t_j+Z,a_j+Y)\Gamma(s_j-Z,b_j-Y).
\end{gather*}
The integrand $\Delta^m_{A_n}(\vecz,\vecy;\vect,\veca,\vecs,\vecb)$ is obviously $r$-periodic in each the discrete variables $y_i$,~$a_i$,~$b_i$ (because of $r$-periodicity \eqref{periodic} of the lens elliptic gamma function~\eqref{legf2}). The integrand is also non-trivially periodic under the shift of any integration variable $z_i$ by $z_i+k_i$ for integers $k_i$, where $\sum\limits_{i=0}^n k_i=0$ (due to the condition $\sum\limits_{i=0}^nz_i=Z$). The latter periodicity in the integration variables $z_i$ follows from the balancing condition $\sum\limits_{i=0}^{m+n+1}(a_i+b_i)\equiv 0\, (\textrm{mod }r)$. Finally, the integrand satisf\/ies the usual $2r$-periodicity in the complex variables $t_i$ and $s_i$.

In the following let us def\/ine
\begin{gather}
T=\sum_{i=0}^{m+n+1}t_i,\qquad S=\sum_{i=0}^{m+n+1}s_i,\qquad A=\sum_{i=0}^{m+n+1}a_i,\qquad B=\sum_{i=0}^{m+n+1}b_i.
\end{gather}

The main result of this section is the following elliptic hypergeometric sum/integral transformation formula.

\begin{Theorem}\label{mainthm} The sum/integral \eqref{AnIntDef}, under the balancing condition \eqref{balancing}, satisfies
\begin{gather}\label{transdef}
I^m_{A_n}(Z,Y\,|\,\vect,\veca;\vecs,\vecb)=I^n_{A_m}(Z+T,Y+A\,|\,\tvect,\tveca;\tvecs,\tvecb)\prod_{i,j=0}^{m+n+1}\Gamma(t_i+s_j,a_i+b_j),
\end{gather}
where
\begin{gather}\label{transrule}
\tvect=-\vect,\qquad\tvecs=\sigma+\tau-\vecs,\qquad\tveca=-\veca,\qquad\tvecb=-\vecb.
\end{gather}
\end{Theorem}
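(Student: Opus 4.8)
The plan is to follow the strategy outlined by the authors themselves, adapting Rains' proof of the $r=1$ case to the lens setting. The key idea is that a multivariate transformation of this type cannot be attacked directly in full generality; instead, I would first establish it on a special, lower-dimensional locus of the parameter space where the sum/integral collapses into something computable, and then use an analytic density argument to propagate the identity to all admissible parameters.

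\emph{Step 1: Reduce to a determinant evaluation on a special locus.} First I would specialize the complex variables $\vect,\vecs$ (and the integers $\veca,\vecb$) so that many of the lens elliptic gamma factors in $\Delta^m_{A_n}$ degenerate via the shift relations
\begin{gather*}
\Gamma(z+n\sigma,m-n)=\Gamma(z,m)\prod_{j=0}^{n-1}\theta_1(z+j\sigma,m-j),
\end{gather*}
together with the reflection relation $\Gamma((\sigma+\tau)-z,-m)\Gamma(z,m)=1$. For suitably chosen parameters, the balancing condition \eqref{balancing} forces enough cancellation that the integrand factorizes, and the symmetrized sum/integral over $(\vecz,\vecy)$ can be recognized as a determinant whose entries are \emph{univariate} sum/integrals of products of the lens theta functions $\theta_1,\theta_2$ defined in \eqref{lthtdef}. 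This is the lens analogue of Rains' reduction to a determinant of one-dimensional theta integrals; the antisymmetry of the denominator $\prod_{i<j}\Gamma(z_i-z_j,y_i-y_j)\Gamma(z_j-z_i,y_j-y_i)$ under permutations of the pairs $(z_i,y_i)$ is what allows the symmetrization-to-determinant step, exactly as a Vandermonde-type factor would.

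\emph{Step 2: Evaluate both sides on the special locus and match.} On this locus I would evaluate the two determinants — one coming from $I^m_{A_n}$ and one from $I^n_{A_m}(Z+T,Y+A\mid\tvect,\tveca;\tvecs,\tvecb)$ — and show they agree up to the prefactor $\prod_{i,j}\Gamma(t_i+s_j,a_i+b_j)$. The substitution \eqref{transrule}, namely $\tvect=-\vect$, $\tvecs=\sigma+\tau-\vecs$, $\tveca=-\veca$, $\tvecb=-\vecb$, is precisely the reflection that the relation $\Gamma((\sigma+\tau)-z,-m)\Gamma(z,m)=1$ is built to exploit, so the matching of the two univariate building blocks should reduce to a single identity between determinants of lens theta sum/integrals, plus bookkeeping of the Bernoulli-type normalization factors $\phi_e$, $\phi_1$, $\phi_2$. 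I would verify the exponential prefactors separately using the explicit polynomial formulas in \eqref{ellnorm} and \eqref{lthtnorm}; these are the places where the extra integer variable $m$ genuinely changes the computation relative to $r=1$, and I expect the quadratic-in-$m$ terms in $R_2$ to be the fiddly part.

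\emph{Step 3: Density and meromorphic continuation.} Once the identity holds on the special locus, I would argue that taking appropriate limits (letting the specialized parameters vary, and combining with the periodicities \eqref{periodic} and \eqref{balancing}) yields a set of parameter values that is dense in the balanced subvariety of $\mathbb{C}^{\,\cdot}\times\mathbb{Z}^{\,\cdot}$ cut out by \eqref{balancing}. Since both sides of \eqref{transdef} are meromorphic in $\vect,\vecs$ (with poles of the lens elliptic gamma function located at the explicit points listed before the statement), an identity on a dense set forces the identity everywhere by analytic continuation, with the contour prescription given after \eqref{AnIntDef} ensuring both sides define the same meromorphic function.

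\emph{Main obstacle.} I expect the hardest part to be Step 1 — namely, choosing the special values of the parameters so that the multivariate integrand genuinely collapses to a determinant of univariate lens theta sum/integrals, and then carrying out that determinant evaluation. In the $r=1$ case this is the technical heart of Rains' argument, and here it is complicated by the discrete sum over $\vecy$ running alongside the integration over $\vecz$, and by the fact that the lens theta functions carry their own integer-dependent normalizations. Getting the combinatorics of the simultaneous sum and integral to assemble into a single determinant, and correctly tracking the $m$-dependent $\phi_e$, $\phi_1$, $\phi_2$ prefactors through the reflection \eqref{transrule}, is where the real work will lie; the density/continuation argument of Step 3 should then be comparatively routine.
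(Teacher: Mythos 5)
Your overall strategy --- first prove a special case in which the sum/integral collapses to a determinant of univariate lens theta sum/integrals, then propagate by limits and continuation --- is indeed the strategy of the paper (following Rains), and your Step 1 correctly identifies the Frobenius-type determinant \eqref{detident} as the engine. But there are two genuine gaps. First, the determinant mechanism can only ever produce the \emph{equal-rank} case: the standard symmetrization argument turns a determinant of univariate sum/integrals into an $n$-fold sum/integral of a product of two determinants, and the exchange symmetry $x_j\leftrightarrow w_j$, $c_j\leftrightarrow d_j$ of the univariate kernel then yields a transformation between two sum/integrals of the \emph{same} dimension --- this is the special case \eqref{specialvars}, which lives on the diagonal $m=n$. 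Your Step 2, which proposes to compare a determinant coming from $I^m_{A_n}$ with one coming from $I^n_{A_m}$ for general $m\neq n$, cannot work as stated: no determinant identity of this type relates an $n$-fold to an $m$-fold sum/integral. In the paper, leaving the diagonal is accomplished only by the residue/pinching lemma \eqref{anlimit}: as $t_0\to-s_1$ (with $a_0=-b_1$) the contour is pinched and the residues reduce $I^m_{A_n}$ to $I^m_{A_{n-1}}$, while $t_0\to-s_1+\sigma+\tau$ lowers $m$ instead. This rank-changing limit is a substantive step (contour deformation, the symmetry factor $n+1$, the constant $\lim\limits_{z_0\to-t_0}(t_0+z_0)\Gamma(t_0+z_0,0)=\ii/(2\pi\lambda)$), not the ``comparatively routine'' continuation of your Step 3; without it the theorem for $m\neq n$ is simply out of reach.

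Second, meromorphic continuation is powerless for the integer variables: density makes no sense in $\mathbb{Z}_r$, so every admissible configuration of $\veca,\vecb$ must be reached \emph{exactly}. The paper does this by iterating the merging limits, which cycle the integer components $\beta_i=a_i+b_i$ through all residues mod $r$ (see \eqref{contrelations2} and \eqref{contrelations}); only the continuous components $\alpha_i=t_i+s_i$ are handled by a density argument, and even that requires starting from the equal-rank case at large rank and descending. Two smaller points: the special case is not proven by ``evaluating'' the two determinants (no closed form exists); it follows because the univariate kernel \eqref{unisumintdet} is invariant under $x\leftrightarrow w$, $c\leftrightarrow d$ by a change of summation/integration variables, so the equality of the two determinant expansions is automatic. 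And to convert that unconstrained identity into the constrained one required by \eqref{AnIntDef} (fixed $Z=\sum z_i$, $Y=\sum y_i$), the paper needs a further argument --- shifting $t\to t+rk\sigma$ and inserting an arbitrary holomorphic $f\big(\EXP^{2\pi\ii(X+Z)}\big)$ --- which your outline omits entirely.
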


\subsection{Corollaries}

Theorem \ref{mainthm} contains as special cases several existing results in the literature, which are summarised in the diagram below:
\begin{center}
\begin{tikzpicture}
\node (A) at (0, 2) {Theorem \ref{mainthm}};
\node (B) at (4.5, 2) {Corollary \ref{cor.mn0} };
\node (C) at (9, 2) {Corollary \ref{cor.Kels} \cite{Kels:2015bda}};
\node (D) at (0, 0) {Corollary \ref{cor.RainsT} \cite{RainsT}};
\node (E) at (4.5, 0) {\cite{RainsT, SpiridonovTheta}};
\node (F) at (9, 0) {Elliptic beta integral \cite{SpiridonovEBF}.};
\draw[->] (A) -- (B) node [midway, above] {$m=0$};
\draw[->] (B) -- (C) node [midway, above] {$n=0$};
\draw[->] (A) -- (D) node [midway, right] {$r=1$};
\draw[->] (B) -- (E) node [midway, right] {$r=1$};
\draw[->] (C) -- (F) node [midway, right] {$r=1$};
\draw[->] (D) -- (E) node [midway, above] {$m=0$};
\draw[->] (E) -- (F) node [midway, above] {$n=0$};
\end{tikzpicture}
\end{center}

First, the case $r=1$ of Theorem \ref{mainthm} is equivalent to the following $A_n\leftrightarrow A_m$ elliptic hypergeometric integral transformation formula proven by Rains \cite{RainsT}:

\begin{Corollary}[\cite{RainsT}]
\label{cor.RainsT}\begin{gather*}
\frac{\lambda^n}{(n+1)!}\int_{\sum\limits_{i=0}^nz_i=Z} \Delta^m_{A_n}(\vecz,\vecy;\vect,\vecs)\prod_{i=0}^{n-1}{\rm d}z_i = \prod_{i,j=0}^{m+n+1} \Gamma_1(t_i+t_j) \\
\qquad{} \times \frac{\lambda^m}{(m+1)!}\int_{\sum\limits_{i=0}^mz_i=Z}\Delta^n_{A_m}(\vecz,\vecy;\tvect,\tvecs)\prod_{i=0}^{m-1}{\rm d}z_i,
\end{gather*}
where
\begin{gather*}
\Delta^m_{A_n}(\vecz,\vecy;\vect,\vecs)=\frac{\prod\limits_{i=0}^n\prod\limits_{j=0}^{m+n+1}\Gamma_1(t_j+z_i) \Gamma_1(s_j-z_i)}{\prod\limits_{0\leq i<j\leq n}\Gamma_1(z_i-z_j) \Gamma_1(z_j-z_i)},
\end{gather*}
and $\Gamma_1(z):=\Gamma_1(z;\sigma,\tau)$ is the usual elliptic gamma function defined in \eqref{egf2}.
\end{Corollary}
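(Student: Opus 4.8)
The plan is to read off this corollary as the $r=1$ specialization of Theorem~\ref{mainthm}, in which every discrete ingredient of the sum/integral degenerates and the lens elliptic gamma function collapses to the ordinary one. Concretely, when $r=1$ the integer data is confined to $\{0,1,\ldots,r-1\}=\{0\}$, so the lens index together with all of the $a_i$, $b_i$, $y_i$, and the totals $Y$, $A$, $B$ are forced to vanish. The summation $\sum_{y_0,\ldots,y_{n-1}}$ in \eqref{AnIntDef} therefore reduces to the single term $\vecy=\bm{0}$, the second balancing condition in \eqref{balancing} becomes vacuous, and the first reduces to $\sum_i(t_i+s_i)\equiv(m+1)(\sigma+\tau)\pmod 2$. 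The prefactor also specializes to $\lambda=(\p;\p)_\infty(\q;\q)_\infty$, exactly as in the corollary.

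The second step is to replace every lens gamma factor by the ordinary elliptic gamma function using the identity \eqref{egf2}, namely $\Gamma(z,0;\sigma,\tau)|_{r=1}=\Gamma_1(z;\sigma,\tau)$; in particular the normalization factor $\phi_e$ from \eqref{ellnorm} contributes trivially at $(r,m)=(1,0)$, so no Bernoulli-polynomial bookkeeping survives. Under this substitution the integrand $\Delta^m_{A_n}(\vecz,\vecy;\vect,\veca;\vecs,\vecb)$ becomes precisely the $\Delta^m_{A_n}(\vecz,\vecy;\vect,\vecs)$ of the corollary, the cross term $\prod_{i,j}\Gamma(t_i+s_j,a_i+b_j)$ becomes $\prod_{i,j}\Gamma_1(t_i+s_j)$, and the transformation rule \eqref{transrule} degenerates to $\tvect=-\vect$ and $\tvecs=\sigma+\tau-\vecs$.

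The only genuinely nontrivial point is matching the normalization of the right-hand integration domain. Theorem~\ref{mainthm} delivers an integral over $\sum_i z_i=Z+T$, whereas the corollary is written over $\sum_i z_i=Z$. I would reconcile the two by translating the dual integration variables $z_i\mapsto z_i-T/(m+1)$, which restores the contour $\sum_i z_i=Z$ while shifting the arguments of the dual gamma factors; tracking this translation together with the corresponding choice of convention for the dual parameters $\tvect,\tvecs$ is the main bookkeeping obstacle, and is precisely where the present normalization meets that of Rains~\cite{RainsT}. Everything else is immediate from the $r=1$ reductions above, so the corollary requires no analytic input beyond Theorem~\ref{mainthm} and the reduction \eqref{egf2}.
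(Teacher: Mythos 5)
Your proposal is correct and is essentially the paper's own treatment: the paper gives no separate proof of this corollary, presenting it exactly as you do, namely as the $r=1$ specialization of Theorem \ref{mainthm}, under which the discrete sums and integer labels all collapse to zero and the lens elliptic gamma function reduces to $\Gamma_1$ via \eqref{egf2}. Your shift $z_i\mapsto z_i-T/(m+1)$ is indeed the elementary change of variables that brings the dual contour back to $\sum_i z_i=Z$ while absorbing $T$ into the dual parameters (which is Rains' convention); note also that the printed cross factor $\prod_{i,j}\Gamma_1(t_i+t_j)$ in the corollary should read $\prod_{i,j}\Gamma_1(t_i+s_j)$, exactly as your reduction produces.
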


Next, the $m=0$ case of \eqref{transdef}, for $\im(t_i),\im(s_i)>0$, and $Z=0$, $Y=0$, gives the following result:

\begin{Corollary}\label{cor.mn0}
\begin{gather}\label{Anebsi}
I^0_{A_n}(0,0\,|\,\vect,\veca;\vecs,\vecb)=\prod_{j=0}^{n+1}\Gamma(T-t_j,A-a_j) \Gamma(S-s_j,B-b_j)\prod_{i=0}^{n+1}\Gamma(t_i+s_j,a_i+b_j).
\end{gather}
\end{Corollary}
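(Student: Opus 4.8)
The plan is to obtain the Corollary as a direct specialisation of Theorem~\ref{mainthm}, so the ``proof'' is really a matter of reading off the $m=0$, $Z=Y=0$ case of \eqref{transdef} and simplifying. First I would set $m=0$ and $Z=Y=0$ in the transformation formula. On the right-hand side this collapses $I^n_{A_m}$ to $I^n_{A_0}$, evaluated at the shifted arguments $Z+T=T$ and $Y+A=A$, with the reflected parameters $\tvect,\tveca,\tvecs,\tvecb$ prescribed by \eqref{transrule}. The hypothesis $\im(t_i),\im(s_i)>0$ merely places us in the regime where the contour condition $\im(s_i)>\im(Z)/(n+1)>-\im(t_i)$ holds at $Z=0$, so the specialisation is literal and no meromorphic continuation is invoked.

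Next I would insert the closed-form $A_0$ evaluation recorded after \eqref{AnIntDef}, which for superscript $n$ reads $I^n_{A_0}(Z,Y\,|\,\vect,\veca;\vecs,\vecb)=\prod_{j=0}^{n+1}\Gamma(t_j+Z,a_j+Y)\Gamma(s_j-Z,b_j-Y)$. Substituting the shifts $Z\mapsto T$, $Y\mapsto A$ and the reflected parameters gives, for each $j$, the factor $\Gamma(T-t_j,A-a_j)$ coming from the $\tvect=-\vect$, $\tveca=-\veca$ part, together with the factor $\Gamma(\sigma+\tau-T-s_j,\,-A-b_j)$ coming from the $\tvecs=\sigma+\tau-\vecs$, $\tvecb=-\vecb$ part.

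The one genuine step, though it amounts only to bookkeeping, is to recognise this second factor as $\Gamma(S-s_j,B-b_j)$. Here I would invoke the balancing condition \eqref{balancing}, which for $m=0$ reads $T+S\equiv\sigma+\tau\pmod{2r}$ and $A+B\equiv 0\pmod r$. These yield $\sigma+\tau-T\equiv S\pmod{2r}$ and $-A\equiv B\pmod r$, so the arguments $\sigma+\tau-T-s_j$ and $-A-b_j$ agree with $S-s_j$ and $B-b_j$ modulo $2r$ and modulo $r$ respectively. The periodicities \eqref{periodic} of the lens elliptic gamma function, namely $2r$-periodicity in the complex variable and $r$-periodicity in the integer variable, then give the identification $\Gamma(\sigma+\tau-T-s_j,\,-A-b_j)=\Gamma(S-s_j,B-b_j)$.

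Collecting the two factors over $j=0,\ldots,n+1$ and restoring the double product $\prod_{i,j=0}^{n+1}\Gamma(t_i+s_j,a_i+b_j)$ carried along from \eqref{transdef} (which the right-hand side of \eqref{Anebsi} simply writes as a nested product) reproduces exactly the claimed identity. I expect no real obstacle: the entire content is the specialisation of the already-proven theorem, and the only point requiring care is tracking how the reflected parameter $\tvecs=\sigma+\tau-\vecs$ combines with the balancing condition and the periodicities to turn the argument of the $A_0$ factor into $S-s_j$ and $B-b_j$.
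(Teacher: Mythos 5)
Your proposal is correct and is exactly the paper's (implicit) argument: the Corollary is stated there as nothing more than the $m=0$, $Z=Y=0$ specialisation of Theorem~\ref{mainthm}, with the right-hand side simplified via the $A_0$ evaluation, the balancing condition~\eqref{balancing}, and the periodicities~\eqref{periodic}. Your careful identification of $\Gamma(\sigma+\tau-T-s_j,\,-A-b_j)$ with $\Gamma(S-s_j,B-b_j)$ is precisely the bookkeeping the paper leaves to the reader.
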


\begin{Remark}
For $r=1$, equation \eqref{Anebsi} is equivalent to an elliptic beta integral identity associated to the $A_n$ root system \cite{RainsT,SpiridonovTheta,SpiridonovShortProofs}.
\end{Remark}

By specializing further to the case $m=0$, $n=1$ in Theorem \ref{mainthm}, we obtain the following sum/integral analogue of elliptic beta integral of Spiridonov:

\begin{Corollary}[\cite{Kels:2015bda}]\label{cor.Kels}
For complex variables $\sigma$, $\tau$, $t_i$, and integer variables $a_i$, $i=1,2,\ldots,6$, satisfying
\begin{gather*}
\im(\sigma),\;\im(\tau),\;\im(t_i)>0,\qquad\sum_{i=1}^6t_i\equiv\sigma+\tau \ (\text{\rm mod }2r),\qquad\sum_{i=1}^6a_i\equiv 0 \ (\text{\rm mod }r),
\end{gather*}
the following identity holds
\begin{gather}\label{ebsidef}
\frac{\lambda}{2}\sum^{r-1}_{y=0}\int_0^1{\rm d}z \frac{\prod\limits_{i=1}^6\Gamma(t_i+z,a_i+y) \Gamma(t_i-z,a_i-y)}{\Gamma(2z,2y)\Gamma(-2z,-2y)}=\prod_{1\leq i<j\leq6}\Gamma(t_i+t_j,a_i+a_j).
\end{gather}
\end{Corollary}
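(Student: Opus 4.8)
The plan is to obtain \eqref{ebsidef} as the rank-one ($n=1$) specialization of the $A_n$ elliptic beta sum/integral in Corollary~\ref{cor.mn0}, which I may assume. After imposing $Z=0$ and $Y=0$, the constraints $z_0+z_1=0$ and $y_0+y_1=0$ leave exactly one free continuous variable and one free discrete variable, which is why a single integral and a single sum survive on the left-hand side of \eqref{ebsidef}.

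First I would put $n=1$, $Z=0$, $Y=0$ in $I^0_{A_1}(0,0\,|\,\vect,\veca;\vecs,\vecb)$ and set $z_0=z$, $z_1=-z$, $y_0=y$, $y_1=-y$. Under this substitution the denominator of $\Delta^0_{A_1}$ collapses to $\Gamma(2z,2y)\Gamma(-2z,-2y)$, while the numerator $\prod_{i=0}^1\prod_{j=0}^2\Gamma(t_j+z_i,a_j+y_i)\Gamma(s_j-z_i,b_j-y_i)$ becomes $\prod_{j=0}^2\Gamma(t_j+z,a_j+y)\Gamma(t_j-z,a_j-y)\Gamma(s_j+z,b_j+y)\Gamma(s_j-z,b_j-y)$. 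The prefactor $\lambda^n/(n+1)!$ reduces to $\lambda/2$, and since the hypotheses $\im(t_j),\im(s_j)>0$ of Corollary~\ref{cor.mn0} are precisely the conditions placing $Z=0$ in the admissible range described after \eqref{AnIntDef}, the contour is the segment from $0$ to $1$.

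Next I would relabel the six complex variables $(t_0,t_1,t_2,s_0,s_1,s_2)$ as $(t_1,\dots,t_6)$ and the six integers $(a_0,a_1,a_2,b_0,b_1,b_2)$ as $(a_1,\dots,a_6)$. The numerator then reads $\prod_{i=1}^6\Gamma(t_i+z,a_i+y)\Gamma(t_i-z,a_i-y)$, matching the integrand of \eqref{ebsidef}, and the balancing conditions $\sum_{i=0}^2(t_i+s_i)\equiv\sigma+\tau$, $\sum_{i=0}^2(a_i+b_i)\equiv0$ turn into $\sum_{i=1}^6 t_i\equiv\sigma+\tau\ (\mathrm{mod}\ 2r)$ and $\sum_{i=1}^6 a_i\equiv0\ (\mathrm{mod}\ r)$.

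Finally I would simplify the right-hand side. Because only three $t$-variables occur, $T-t_j$ is the sum of the two remaining ones, so $\prod_{j=0}^2\Gamma(T-t_j,A-a_j)=\prod_{0\le i<j\le2}\Gamma(t_i+t_j,a_i+a_j)$ and similarly $\prod_{j=0}^2\Gamma(S-s_j,B-b_j)=\prod_{0\le i<j\le2}\Gamma(s_i+s_j,b_i+b_j)$; combined with the cross term $\prod_{i,j=0}^2\Gamma(t_i+s_j,a_i+b_j)$ these three products reassemble into $\prod_{1\le i<j\le6}\Gamma(t_i+t_j,a_i+a_j)$ in the relabeled variables, which is \eqref{ebsidef}. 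The same identity can be read off directly from Theorem~\ref{mainthm} at $m=0$, $n=1$, though then one must additionally evaluate $I^1_{A_0}$ and use the reflection relation $\Gamma((\sigma+\tau)-z,-m)\Gamma(z,m)=1$ together with the periodicities \eqref{periodic} and the balancing condition. I expect no genuine analytic obstacle once Theorem~\ref{mainthm} is granted; the only point demanding care is the combinatorial bookkeeping of the relabeling and checking that the specialized contour and balancing conditions coincide with those in Corollary~\ref{cor.mn0}.
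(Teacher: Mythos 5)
Your proposal is correct and is essentially the paper's own route: the paper obtains Corollary \ref{cor.Kels} precisely by specializing Theorem \ref{mainthm} (equivalently Corollary \ref{cor.mn0}) to $m=0$, $n=1$, $Z=Y=0$, and your bookkeeping — the collapse of the denominator to $\Gamma(2z,2y)\Gamma(-2z,-2y)$ under $z_1=-z_0$, $y_1=-y_0$, the contour $[0,1]$ from $\im(t_j),\im(s_j)>0$, and the reassembly of the right-hand side into $\prod_{1\le i<j\le 6}\Gamma(t_i+t_j,a_i+a_j)$ — all checks out. (Minor remark: in your alternative route directly from Theorem \ref{mainthm}, the reflection relation is not actually needed; the periodicities \eqref{periodic} together with the balancing condition already convert $\Gamma(\sigma+\tau-T-s_j,-A-b_j)$ into $\Gamma(S-s_j,B-b_j)$.)
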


\begin{Remark}
For $r=1$, equation \eqref{ebsidef} is equivalent to the elliptic beta integral \cite{SpiridonovEBF}.
\end{Remark}

Note that there is a symmetry of the integrand in \eqref{ebsidef} under $z\rightarrow -z$, $y\rightarrow r-y$, which may be used to truncate the sum to values $0\leq y\leq \floor{r/2}$ \cite{GahramanovKels}.

The hyperbolic limit of \eqref{ebsidef} was also recently studied \cite{GahramanovKels} in connection with two-di\-men\-sio\-nal integrable lattice models of statistical mechanics, where it generalises the Faddeev--Volkov model \cite{Faddeev:1993pe,Volkov:1992uv}, and in connection with supersymmetric gauge theory, where it describes duality of three-dimensional $\mathcal{N} = 2$ theories on squashed lens spaces. As expected, the hyperbolic limit of~\eqref{ebsidef} provides a sum/integral generalisation of the hyperbolic beta integral~\cite{STOKMAN2005119}.

Finally we note that in a previous paper \cite{GahramanovKels}, it was shown that the formula \eqref{ebsidef} is also satisf\/ied when the normalisation function \eqref{r2def}, is def\/ined as
\begin{gather}\label{gennorm}
R_2(z,m;\sigma,\tau):=R(z+m\sigma;\zeta\sigma,\sigma+\tau)+R(z+(\zeta-m)\tau;\zeta\tau,\sigma+\tau),
\end{gather}
where $\zeta$ is a non-zero integer. The case $\zeta=r$ corresponds to the normalisation \eqref{r2def}, while the case $\zeta=1$ corresponds to the normalisation of the lens elliptic gamma function used in \cite{rarified}. We note that the main result in Theorem \ref{mainthm} (and also Theorem \ref{secondthm}) is also satisf\/ied when the normalisation function is chosen as \eqref{gennorm}, which can be checked by explicitly expanding both sides of \eqref{transdef}, and seeing that there is no dependence on $\zeta$. However the properties of the lens elliptic gamma function given in Section \ref{sec:defs}, are only true for the case of $\zeta=r$. Particularly, in this paper we always consider $\zeta=r$ unless explicitly stated otherwise.

\subsection{Proof of Theorem \ref{mainthm}}

The proof of Theorem \ref{mainthm} in the $r=1$ case \cite{RainsT} generalises to the $r>1$ case considered here. This involves f\/irst proving a special case of Theorem \ref{mainthm}, using a Frobenius type determinant formula for the lens theta functions \eqref{lthtdef}, and then taking limits of this special case, which are then used to show that a dense set of cases hold for the general transformation \eqref{transdef}.

First, we have the following determinant formula which follows directly from Lemma~4.3 of~\cite{RainsT}:
\begin{Lemma}\label{lem.Frobenius}
\begin{gather}
\det_{0\leq i,j<n} \left(\frac{\theta_k(t+x_i+w_j,c_i+d_j)}{\theta_k(t,0)\,\theta_k(x_i+w_j,c_i+d_j)}\right) = \frac{\theta_k(t+X+W,C+D)}{\theta_k(t,0)\prod\limits_{i,j=0}^{n-1}\theta_k(x_i+w_j,c_i+d_j)} \nonumber\\
\qquad{} \times \prod_{0\leq i<j<n} \EXP^{2\pi\ii(x_j+w_j-c_j-d_j)/r} \theta_k(x_i-x_j,c_i-c_j)\theta_k(w_i-w_j,d_i-d_j),\label{detident}
\end{gather}
where $\theta_{k=1,2}(z,m)$ represents either of $\theta_1(z,m)$ or $\theta_2(z,m)$ in \eqref{lthtdef}, and
\begin{gather*}
X=\sum_{i=0}^{n-1}x_i,\qquad W=\sum_{i=0}^{n-1}w_i,\qquad C=\sum_{i=0}^{n-1}c_i,\qquad D=\sum_{i=0}^{n-1}d_i.
\end{gather*}
\end{Lemma}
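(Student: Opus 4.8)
The plan is to strip the exponential normalizations $\EXP^{\phi_k}$ off the lens theta functions, reduce to the classical Frobenius determinant for ordinary one-variable theta functions (Lemma~4.3 of \cite{RainsT}), and then reinstall the normalizations. The key structural observation is that, from the definition \eqref{lthtdef}, each lens theta function factors as $\theta_k(z,m)=\EXP^{\phi_k(z,m)}\vartheta_k(u_k(z,m))$, where $\vartheta_1(u):=\theta(\EXP^{2\pi\ii u}\,|\,\EXP^{2\pi\ii\tau r})$ and $\vartheta_2(u):=\theta(\EXP^{2\pi\ii u}\,|\,\EXP^{2\pi\ii\sigma r})$ are ordinary theta functions, and the argument $u_1(z,m)=-z+\tau m$ (respectively $u_2(z,m)=z+\sigma m$) is \emph{additive} in $(z,m)$. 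Additivity is what makes the reduction work: writing $\lambda:=u_k(t,0)$, $U_i:=u_k(x_i,c_i)$, $V_j:=u_k(w_j,d_j)$, the $\vartheta_k$-parts of the matrix entries become exactly $\vartheta_k(\lambda+U_i+V_j)\big/\big(\vartheta_k(\lambda)\vartheta_k(U_i+V_j)\big)$, the shape to which the ordinary Frobenius determinant applies.

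First I would handle the $\EXP^{\phi_k}$ factors on the left. Using the explicit form \eqref{lthtnorm}, $\phi_k$ is linear in $z$ and quadratic in $m$ with no $z^2$ term, so the per-entry exponent
\[
\Phi_{ij}:=\phi_k(t+x_i+w_j,c_i+d_j)-\phi_k(t,0)-\phi_k(x_i+w_j,c_i+d_j)
\]
collapses: the $z$-linear, $m$-linear, and $m^2$ parts cancel outright, and the single mixed $zm$ term contributes only $\gamma_k\,t\,(c_i+d_j)$, where $\gamma_k$ is the $zm$-coefficient of $\phi_k$ read off from \eqref{lthtnorm}, because $t$ is the only argument common to all three evaluations. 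Thus $\Phi_{ij}$ separates as a row term plus a column term plus a constant, so $\EXP^{\Phi_{ij}}$ factors out of the determinant as one product, leaving precisely $\det_{i,j}\big(\vartheta_k(\lambda+U_i+V_j)/(\vartheta_k(\lambda)\vartheta_k(U_i+V_j))\big)$.

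Next I would invoke Lemma~4.3 of \cite{RainsT} to evaluate this ordinary-theta determinant as
\[
\frac{\vartheta_k\big(\lambda+\textstyle\sum_iU_i+\sum_jV_j\big)}{\vartheta_k(\lambda)\prod_{i,j}\vartheta_k(U_i+V_j)}\prod_{0\le i<j<n}\vartheta_k(U_i-U_j)\vartheta_k(V_i-V_j),
\]
and convert each $\vartheta_k$ back to a lens theta via $\vartheta_k=\EXP^{-\phi_k}\theta_k$ and the additivity of $u_k$. The antisymmetric factors reassemble as $\theta_k(x_i-x_j,c_i-c_j)$ and $\theta_k(w_i-w_j,d_i-d_j)$, the numerator as $\theta_k(t+X+W,C+D)$, and the denominator as $\theta_k(t,0)\prod_{i,j}\theta_k(x_i+w_j,c_i+d_j)$, reproducing the structure of the right-hand side of \eqref{detident}.

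The one substantive point — and the step I expect to be the main obstacle — is to show that the total leftover exponential collapses to exactly $\prod_{0\le i<j<n}\EXP^{2\pi\ii(x_j+w_j-c_j-d_j)/r}$. This leftover consists of the row/column-separable factor $\EXP^{\sum\Phi_{ij}}$ from the left together with the net $\EXP^{\phi_k}$ attached to each difference-argument $(x_i-x_j,c_i-c_j)$ and $(w_i-w_j,d_i-d_j)$, the numerator, and the denominator when the $\vartheta_k$-determinant is converted back; aligning the antisymmetric products with the $i<j$ ordering of the claimed right-hand side also brings in the reflection relation $\theta_k(-z,-m)=-\theta_k(z,m)\EXP^{-2\pi\ii(z-m)/r}$ recorded in Section~\ref{sec:defs}. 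Collecting all pieces, the claim becomes a finite identity among the explicit linear–quadratic coefficients of $\phi_k$ in \eqref{lthtnorm}, which I would verify by matching the coefficients of $x_k$, $w_k$, $c_k$, $d_k$, and the cross terms on the two sides; the computation is tightly constrained by the requirement that it reduce to the ordinary Frobenius formula of \cite{RainsT} at $r=1$, where the correction factor is absent.
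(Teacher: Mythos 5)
Your strategy is the same one the paper itself relies on: the paper gives no independent argument for this lemma, asserting only that it ``follows directly from Lemma~4.3 of \cite{RainsT}'', and your reduction is the correct way to make that assertion precise. Your preliminary steps are sound: by \eqref{lthtdef} each lens theta is $\EXP^{\phi_k}$ times an ordinary theta whose argument $u_k(z,m)$ is linear in $(z,m)$, and since $\phi_k$ in \eqref{lthtnorm} has no $z^2$ term, the per-entry exponent $\Phi_{ij}$ does collapse to row-plus-column-plus-constant data ($\gamma_k t\,c_i+\gamma_k t\,d_j-\phi_k(0,0)$), which passes through the determinant as a global prefactor.

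The genuine error sits exactly at the step you flag as the main obstacle, and it would make that step fail as you have set it up. Rains' Lemma~4.3 is \emph{not} the monomial-free Frobenius formula you quote; in the paper's multiplicative notation it reads
\begin{gather*}
\det_{0\leq i,j<n}\left(\frac{\theta(t\,u_iv_j\,|\,p)}{\theta(t\,|\,p)\,\theta(u_iv_j\,|\,p)}\right)
=\frac{\theta\big(t\textstyle\prod_{i}u_iv_i\,|\,p\big)}{\theta(t\,|\,p)\prod\limits_{i,j}\theta(u_iv_j\,|\,p)}
\prod_{0\leq i<j<n}u_jv_j\,\theta(u_i/u_j\,|\,p)\,\theta(v_i/v_j\,|\,p),
\end{gather*}
with the monomial factors $u_jv_j$ present. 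They are forced because $\theta(x\,|\,p)$ is only quasi-odd, $\theta(1/x\,|\,p)=-x^{-1}\theta(x\,|\,p)$: the Weierstrass-sigma form of Frobenius carries no such factors, but in converting $\sigma$ to $\theta$ the quadratic exponents cancel on both sides while the linear ones do not, leaving precisely $\prod_{i<j}u_jv_j$ (in your additive variables, $\prod_{i<j}\EXP^{2\pi\ii(U_j+V_j)}$). Your proposal omits these, and your proposed safeguard rests on a false premise: at $r=1$ the correction factor in \eqref{detident} is \emph{not} absent --- it becomes $\prod_{i<j}\EXP^{2\pi\ii(x_j+w_j)}$ (the $c_j$, $d_j$ being integers), which is exactly Rains' monomial factor. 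Consequently the finite coefficient-matching identity you plan to verify in the last step is false as stated, off by exactly $\prod_{i<j}\EXP^{2\pi\ii(U_j+V_j)}$, and the $r=1$ consistency check you invoke to constrain the computation would not detect this, since it embodies the same misreading of Lemma~4.3. Once the monomial factors are restored, the rest of your bookkeeping (the $\Phi_{ij}$ prefactor, the $\EXP^{-\phi_k}$ conversion factors on the numerator, denominator and difference arguments) does close up to \eqref{detident}, with the paper's ordering of the difference arguments already matching Rains' so that the reflection relation is not actually needed.
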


Note that both sides of the equation \eqref{detident} are periodic in both the complex and integer variables respectively, i.e., they each are invariant under the shifts $c_i\rightarrow c_i+k_c r$, $d_i\rightarrow d_i+k_d r$, $x_i\rightarrow x_i+k_x$, $w_j\rightarrow w_j+k_w$, for integers $k_c$, $k_d$, $k_x$, $k_w$.

The determinant \eqref{detident} will be used to prove the following special case of \eqref{transdef}.

\begin{Lemma}
Theorem {\rm \ref{mainthm}} holds for the case $m\to n-1, n\to n-1$, with the following choice of variables:
\begin{alignat}{5}
& t_{i}=x_i,\qquad && t_{n+i}=\tau-w_i,\qquad&& s_{i}=\sigma-x_i,\qquad && s_{n+i}=w_i,&\nonumber \\
& a_i=c_i,\qquad && a_{n+i}=1-d_i,\qquad && b_i=-1-c_i,\qquad && b_{n+i}=d_i,& \label{specialvars}
\end{alignat}
where $i=0,1,\ldots,n-1$.
\end{Lemma}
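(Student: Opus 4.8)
The plan is to follow Rains' argument for the $r=1$ case and reduce this special case to the Frobenius determinant \eqref{detident}. First I would substitute \eqref{specialvars} into both sides of \eqref{transdef} (with the replacements $m\to n-1$, $n\to n-1$, so that the indices $j$ run over $0,\ldots,2n-1$) and check the balancing conditions \eqref{balancing}. Since $t_i+s_i=\sigma$ and $t_{n+i}+s_{n+i}=\tau$, one finds $\sum_{i}(t_i+s_i)=n(\sigma+\tau)$, which matches $(m+1)(\sigma+\tau)=n(\sigma+\tau)$; and since $a_i+b_i=-1$ and $a_{n+i}+b_{n+i}=1$, one finds $\sum_{i}(a_i+b_i)=0$. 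Thus \eqref{specialvars} is an admissible point of the transformation, and the normalisation prefactor of the $A_{n-1}$ sum/integral is $\tfrac{\lambda^{n-1}}{n!}$.

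The central computation is to collapse the integrand $\Delta^{n-1}_{A_{n-1}}$ into a product of lens theta functions \eqref{lthtdef}, using the reflection relation $\Gamma((\sigma+\tau)-z,-m)\Gamma(z,m)=1$ together with the single-step shift relations of Section~\ref{sec:defs}. The $x$-type factor $\Gamma(x_j+z_i,c_j+y_i)\Gamma(\sigma-x_j-z_i,-1-c_j-y_i)$ reduces to $1/\theta_2(x_j+z_i,c_j+y_i)$, while the $w$-type factor $\Gamma(\tau-w_j+z_i,1-d_j+y_i)\Gamma(w_j-z_i,d_j-y_i)$ reduces, up to an explicit sign and an $\EXP^{2\pi\ii(\cdots)/r}$ factor, to $1/\theta_1(z_i-w_j,y_i-d_j)$. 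For the denominator I would use the identity $1/[\Gamma(z,m)\Gamma(-z,-m)]=\EXP^{-\pi\ii(2z+r-2m)/r}\theta_1(z,m)\theta_2(z,m)$, obtained by combining reflection with both shift relations, so that the elliptic Vandermonde supplies precisely the two theta-Vandermondes $\prod_{i<j}\theta_1(z_i-z_j,\cdots)$ and $\prod_{i<j}\theta_2(z_i-z_j,\cdots)$. After this the integrand factorises, up to exponential prefactors, into a $\theta_2$-part (built from $z_i-z_j$ and $z_i+x_j$) times a $\theta_1$-part (built from $z_i-z_j$ and $z_i-w_j$).

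Next I would apply \eqref{detident} separately to each part. Because $\sum_i z_i=Z$ and $\sum_i y_i=Y$ are fixed, the ``global'' theta factor $\theta_k(t+X+W,C+D)$ produced by \eqref{detident} is a constant, so each part equals a constant times an $n\times n$ determinant whose $(i,j)$ entry is a ratio of theta functions depending only on $(z_i,y_i)$ and the column data. The symmetrised sum/integral $\tfrac{\lambda^{n-1}}{n!}\sum_{\vecy}\int\prod{\rm d}z_i$ of the product of these two determinants then collapses, by the Andr\'eief (Heine) identity adapted to the constraint, to a single determinant whose entries are univariate sum/integrals $\sum_{y}\int(\cdots){\rm d}z$ of products of two lens theta functions, each evaluable in closed form by residues. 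Performing the identical reduction on the right-hand side of \eqref{transdef}, whose integrand has the same special shape in the tilded variables \eqref{transrule} and whose prefactor $\prod_{i,j}\Gamma(t_i+s_j,a_i+b_j)$ supplies the remaining constants, yields a second determinant; the Lemma then follows by matching the two.

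The hard part will be the bookkeeping of the integer arguments and the exponential normalisations, which is the genuinely new feature relative to $r=1$. Every shift and reflection move carries a factor built from the normalisations $\phi_1,\phi_2$ of \eqref{lthtnorm} or an $\EXP^{2\pi\ii(\cdots)/r}$ term; the reflection formulas for $\theta_1$ and $\theta_2$ contribute further signs and phases; and \eqref{detident} itself carries the factor $\EXP^{2\pi\ii(x_j+w_j-c_j-d_j)/r}$. The delicate point is to verify that all of these phases---including those produced when the constraint $\sum_i z_i=Z$ is used to discard $z$-dependent exponentials, and when the univariate entries are evaluated---combine into genuine constants and cancel consistently between the two sides, so that the left- and right-hand determinants agree exactly rather than up to a spurious phase. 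I expect the periodicities \eqref{periodic} and the stated quasi-periodicities of $\theta_1$ and $\theta_2$ to be the tools that make these cancellations transparent.
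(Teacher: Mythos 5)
Your reduction of the integrand to lens theta functions is correct: the computations $\Gamma(x_j+z_i,c_j+y_i)\,\Gamma(\sigma-x_j-z_i,-1-c_j-y_i)=1/\theta_2(x_j+z_i,c_j+y_i)$ and $1/[\Gamma(z,m)\Gamma(-z,-m)]=\EXP^{-\pi\ii(2z+r-2m)/r}\,\theta_1(z,m)\,\theta_2(z,m)$ both check out against the relations of Section~\ref{sec:defs}, and the Frobenius formula \eqref{detident} together with the Andr\'eief identity is indeed the paper's toolkit. However, your architecture has a genuine gap at its central step: the Andr\'eief identity does not apply to the constrained sum/integral. The object in the Lemma is $I^{n-1}_{A_{n-1}}$, whose variables are coupled by $\sum_i z_i=Z$ and $\sum_i y_i=Y$; Andr\'eief requires each variable to range independently over the full domain, and the constraint destroys exactly the product structure the identity relies on. The phrase ``adapted to the constraint'' is where the new idea is needed, and the paper supplies it by running the argument in the opposite direction: it first establishes the \emph{unconstrained} identity (the determinant \eqref{det1} of univariate sum/integrals equals the unconstrained multivariate sum/integral of $\tilde{\Delta}$), and then disintegrates over the slices by substituting $t\to t+rk\sigma$, which inserts $\EXP^{-2\pi\ii k(X+Z)}$; since the identity then holds with an arbitrary holomorphic $f\big(\EXP^{2\pi\ii(X+Z)}\big)$ inserted, it holds on each fixed slice. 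This Fourier-type localization is also what produces the shifted constraints $\sum z_i=Z+X-W$, $\sum y_i=Y+C-D$ on the right-hand side, matching the shift $Z\to Z+T$, $Y\to Y+A$ in \eqref{transdef}.

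The second gap is that you give no mechanism for why your left and right determinants agree. You appeal to the univariate entries being ``evaluable in closed form by residues,'' but these sum/integrals of theta ratios are never evaluated in the paper and no closed form is needed (nor is one readily available). The engine of the paper's proof is instead the elementary observation that the univariate sum/integral \eqref{unisumintdet} is invariant under $x\leftrightarrow w$, $c\leftrightarrow d$, proved by the change of variables $z\to z-x+w$, $y\to y-c+d$; this single symmetry, promoted through the determinant, is what forces the two multivariate expressions to be equal. Without that symmetry (or a substantiated evaluation of the entries), your final ``matching'' step is unproven. In short: you have the right ingredients, but the proof must be run bottom-up---univariate symmetry, then determinant, then unconstrained multivariate identity, then localization to the constrained slices---rather than top-down from the constrained identity.
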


\begin{proof}
Consider the following univariate sum/integral related to \eqref{detident}
\begin{gather}\label{unisumintdet}
\sum_{y=0}^{r-1}\int {\rm d}z \frac{\theta_1(s+w-z,d-y)}{\theta_1(s,0) \theta_1(w-z,d-y)} \frac{\theta_2(t+x+z,c+y)}{\theta_2(t,0) \theta_2(x+z,c+y)}.
\end{gather}
This sum/integral is invariant upon exchanging $x\leftrightarrow w$, and $c\leftrightarrow d$, which follows from the change of integration and summation variables $z=z-x+w$, and $y=y-c+d$, respectively.

It then follows that a determinant of instances of \eqref{unisumintdet}
\begin{gather}\label{det1}
\det_{0\leq i,j<n}\left(\sum_{y=0}^{r-1}\int {\rm d}z \frac{\theta_1(s+w_j-z,d_j-y)}{\theta_1(s,0) \theta_1(w_j-z,d_j-y)}\frac{\theta_2(t+x_i+z,c_i+y)}{\theta_2(t,0) \theta_2(x_i+z,c_i+y)}\right),
\end{gather}
is invariant under the exchange
\begin{gather}
x_j\leftrightarrow w_j , \qquad c_j\leftrightarrow d_j .\label{symmetry}
\end{gather}

Since the row and column indices in \eqref{det1} are not coupled, it may be written in terms of a~multivariate sum/integral of a product of determinants of the form \eqref{detident}, as
\begin{gather*}
 n!\det_{0\leq i,j<n}\left(\sum_{y=0}^{r-1}\int {\rm d}z \frac{\theta_1(s+w_j-z,d_j-y)}{\theta_1(s,0) \theta_1(w_j-z,d_j-y)}\frac{\theta_2(t+x_i+z,c_i+y)}{\theta_2(t,0) \theta_2(x_i+z,c_i+y)}\right) \\
 \qquad{} =\sum_{y_0,\ldots,y_{n-1}=0}^{r-1}\int \det_{0\leq i,j<n}\left(\frac{\theta_1(s+w_i-z_j,d_i-y_j)}{\theta_1(s,0)\theta_1(w_i-z_j,d_i-y_j)}\right) \\
\qquad\quad{} \times\det_{0\leq i,j<n} \left(\frac{\theta_2(t+x_i+z_j,c_i+y_j)}{\theta_2(t,0) \theta_2(x_i+z_j,c_i+y_j)}\right) \prod_{i=0}^{n-1}{\rm d}z_i \\
\qquad{} = \prod_{0\leq i<j<n} \EXP^{2\pi\ii (x_j+w_j-c_j-d_j)/r} \sum_{y_0,\ldots,y_{n-1}=0}^{r-1}\int \tilde{\Delta}(\vecz,\vecy;\vecx,\vecc;\vecw,\vecd) \prod_{i=0}^{n-1}{\rm d}z_i .
\end{gather*}
The integrand in the last line is
\begin{gather*}%\label{deltatilde}
\tilde{\Delta}(\vecz,\vecy;\vecx,\vecc;\vecw,\vecd)
=\frac{\theta_1(s + W - Z,D - Y) \theta_2(t + X + Z,C + Y)}{\theta_1(s,0) \theta_2(t,0)} \Delta(\vecz,\vecy;\vecx,\vecc;\vecw,\vecd),
\end{gather*}
where
\begin{gather*}%\label{specialintegrand}
 \Delta(\vecz,\vecy;\vecx,\vecc;\vecw,\vecd) \nonumber\\
 =\frac{\prod\limits_{0\leq i<j<n}\theta_1(w_i-w_j,d_i-d_j) \theta_2(x_i-x_j,c_i-c_j) \theta_1(z_i-z_j,y_i-y_j) \theta_2(z_j-z_i,y_j-y_i)}{\prod\limits_{i,j=0}^{n-1}\theta_1(w_j-z_i,d_j-y_i) \theta_2(x_j+z_i,c_j+y_i)},
\end{gather*}
and
\begin{gather*}
Z=\sum_{i=0}^{n-1}z_i,\qquad Y=\sum_{i=0}^{n-1}y_i.
\end{gather*}

The symmetry under \eqref{symmetry} implies the following equality
\begin{gather*}%\label{detexpand}
\sum_{y_1,\ldots,y_{n-1}=0}^{r-1}\int \tilde{\Delta}(\vecz,\vecy;\vecx,\vecc;\vecw,\vecd)\prod_{i=0}^{n-1}{\rm d}z_i = \sum_{y_1,\ldots,y_{n-1}=0}^{r-1}\int \tilde{\Delta}(\vecz,\vecy;\vecw,\vecd;\vecx,\vecc)\prod_{i=0}^{n-1}{\rm d}z_i.
\end{gather*}
Substituting $t\rightarrow t+rk\sigma$ in this relation gives
\begin{gather*}%\label{detexpand3}
\sum_{y_1,\ldots,y_{n-1}=0}^{r-1}\int \frac{\tilde{\Delta}(\vecz,\vecy;\vecx,\vecc;\vecw,\vecd)}{\EXP^{2\pi\ii(X+Z)k}}\prod_{i=0}^{n-1}{\rm d}z_i = \sum_{y_1,\ldots,y_{n-1}=0}^{r-1}\int \frac{\tilde{\Delta}(\vecz,\vecy;\vecw,\vecd;\vecx,\vecc)}{\EXP^{2\pi\ii(W+Z)k}}\prod_{i=0}^{n-1}{\rm d}z_i,
\end{gather*}
for integers $k$. This in turn means that
\begin{gather*}%\label{detexpand4}
\sum_{y_1,\ldots,y_{n-1}=0}^{r-1}\int f(\EXP^{2\pi\ii(X+Z)}) \tilde{\Delta}(\vecz,\vecy;\vecx,\vecc;\vecw,\vecd)\prod_{i=0}^{n-1} {\rm d}z_i \\
\qquad{} = \sum_{y_1,\ldots,y_{n-1}=0}^{r-1}\int  f(\EXP^{2\pi\ii(W+Z)}) \tilde{\Delta}(\vecz,\vecy;\vecw,\vecd;\vecx,\vecc)\prod_{i=0}^{n-1} {\rm d}z_i,
\end{gather*}
for any function $f(z)$ holomorphic in a neighbourhood of the contour. Writing this in terms of integration over $Z$, and summation over $Y$, it is seen that the following equality must hold
\begin{gather*}%\label{detexpand2}
\sum_{\substack{y_0,\ldots,y_{n-2}=0 \\ \sum\limits_{i=0}^{n-1} y_i=Y}}^{r-1}
\int_{\;\sum\limits_{i=0}^{n-1} z_i=Z} \Delta(\vecz,\vecy;\vecx,\vecc;\vecw,\vecd) \prod_{i=0}^{n-2}{\rm d}z_i\\
\qquad{} =
\sum_{\substack{y_0,\ldots,y_{n-2}=0 \\ \sum\limits_{i=0}^{n-1} y_i=Y+C-D}}^{r-1}\int_{\;\sum\limits_{i=0}^{n-1} z_i=Z+X-W}
\Delta(\vecz,\vecy;\vecw,\vecd;\vecx,\vecc) \prod_{i=0}^{n-2}{\rm d}z_i.
\end{gather*}
This is equivalent to the transformation \eqref{transdef} with the variables \eqref{specialvars}.
\end{proof}

The above special case \eqref{specialvars} will be used, along with the following general limit of the $A_n$ sum/integral \eqref{AnIntDef}.

\begin{Lemma}
For $a_0=-b_0$, the limit $t_0\rightarrow-s_0$ of \eqref{AnIntDef} is given by
\begin{gather}
\lim_{t_0\rightarrow -s_0}\frac{\left.I^{m}_{A_n}(Z,Y\,|\,\vect,\veca;\vecs,\vecb)\right|_{a_0=-b_0}}{\Gamma(t_0+s_0,0)
\prod\limits_{i=1}^{m+n+1}\Gamma(t_0+s_i,-b_0+b_i)\Gamma(t_i+s_0,a_i+b_0)}\nonumber\\
\qquad{} =I^{m}_{A_{n-1}}(Z+s_0,Y+b_0\,|\,\bar{\vect},\bar{\veca};\bar{\vecs},\bar{\vecb}),\label{anlimit}
\end{gather}
where $\vect$, $\veca$, $\vecs$, $\vecb$ are as defined in \eqref{anvariables}, and
\begin{alignat*}{3}
& \bar{\vect} = (t_1,t_2,\ldots,t_{m+n+1}),\qquad&& \bar{\veca} = (a_1,a_2,\ldots,a_{m+n+1}),& \nonumber\\
& \bar{\vecs} = (s_1,s_2,\ldots,s_{m+n+1}),\qquad&& \bar{\vecb} = (b_1,b_2,\ldots,b_{m+n+1}).&
\end{alignat*}
\end{Lemma}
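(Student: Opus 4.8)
The plan is to adapt the contour-pinching argument that Rains uses in the $r=1$ case. As $t_0\to -s_0$ with $a_0=-b_0$, the straight contour $C^n$ can no longer separate the poles of the integrand of \eqref{AnIntDef}, because the condition $\im(s_0)>\im(Z)/(n+1)>-\im(t_0)$ that guarantees its validity degenerates in the limit. Concretely, for each integration variable $z_i$ the factor $\Gamma(t_0+z_i,a_0+y_i)$ has, for the summand $y_i=b_0$ (so that $a_0+y_i\equiv 0\!\pmod r$), a pole approaching the contour from below at $z_i=-t_0$, while $\Gamma(s_0-z_i,b_0-y_i)$ has a pole approaching from above at $z_i=s_0$; since $-t_0\to s_0$, these pinch the contour at $z_i=s_0$. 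First I would deform the contour off the colliding poles, writing the sum/integral as a bounded regular remainder plus the residue terms that carry the divergence. As $t_0\to -s_0$ each residue term blows up like $(t_0+s_0)^{-1}$, so after dividing by the diverging factor $\Gamma(t_0+s_0,0)$ in \eqref{anlimit} only the residue contribution survives.

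Next I would compute the pinch residue. Taking the residue at $z_i=-t_0$ leaves the partner factor $\Gamma(s_0-z_i,b_0-y_i)$ evaluated there, namely $\Gamma(t_0+s_0,0)$, which is precisely the diverging normalisation in the denominator of \eqref{anlimit}; the residue of the lens elliptic gamma function at its pole supplies, together with the $2\pi\ii$ from the contour, the factor $1/\lambda$ (arising from $1-\EXP^{2\pi\ii z}$ at $j=k=0$ in \eqref{littlegamma}) needed to convert $\lambda^n$ into $\lambda^{n-1}$. Crucially, the discrete sum over $y_i$ is pinched only for the single value $y_i=b_0$, all other summands remaining regular, so the localisation freezes the entire pair $(z_i,y_i)=(s_0,b_0)$. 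Using the permutation symmetry of $\Delta^m_{A_n}$ in the pairs $(z_i,y_i)$, each of the $n+1$ variables contributes identically; these equal contributions combine with the $1/(n+1)!$ prefactor to produce the $1/n!$ of the $(n-1)$-level sum/integral.

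With one pair frozen at $(s_0,b_0)$, say $(z_n,y_n)$, I would then verify that the surviving integrand is exactly $\Delta^m_{A_{n-1}}$ in the reduced data $\bar{\vect},\bar{\veca},\bar{\vecs},\bar{\vecb}$. The $j\geq 1$ numerator factors attached to the frozen variable give $\prod_{j=1}^{m+n+1}\Gamma(t_j+s_0,a_j+b_0)\Gamma(s_j-s_0,b_j-b_0)$, which in the limit $t_0=-s_0$ coincides with the product $\prod_{i=1}^{m+n+1}\Gamma(t_0+s_i,-b_0+b_i)\Gamma(t_i+s_0,a_i+b_0)$ in the denominator of \eqref{anlimit} and is thereby removed. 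The key simplification is that the leftover $j=0$ factors attached to the unfrozen variables, namely $\Gamma(t_0+z_l,a_0+y_l)\Gamma(s_0-z_l,b_0-y_l)\to\Gamma(z_l-s_0,y_l-b_0)\Gamma(s_0-z_l,b_0-y_l)$, are exactly the cross-terms $\Gamma(z_n-z_l,y_n-y_l)\Gamma(z_l-z_n,y_l-y_n)$ appearing in the denominator of $\Delta^m_{A_n}$ evaluated at the frozen pair, and so cancel identically, leaving precisely $\Delta^m_{A_{n-1}}$. Removing the frozen pair from the constraints $\sum_{i=0}^n z_i=Z$ and $\sum_{i=0}^n y_i=Y$ shifts the centre-of-mass data and reproduces the $A_{n-1}$ sum/integral with the arguments displayed on the right-hand side of \eqref{anlimit}.

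The step I expect to be the main obstacle is the residue and normalisation bookkeeping that is genuinely new to the lens case: computing the residue of $\Gamma(z,m)$ at its pole while tracking the prefactor $\EXP^{\phi_e}$ built from the multiple Bernoulli polynomial \eqref{bernoulli}, and checking that the $\phi_e$-normalisations of all participating lens gamma factors combine consistently, so that the residue constant is exactly $1/\lambda$ and the reduced integrand carries the correct normalisation at level $n-1$. A secondary technical point is to justify rigorously that the deformed, non-residue part of the contour integral stays bounded as $t_0\to -s_0$, so that it is annihilated upon dividing by $\Gamma(t_0+s_0,0)$ and passing to the limit; here the interplay between the contour deformation and the discrete summation over the $y_i$ must be handled with care.
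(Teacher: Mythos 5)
Your proposal follows essentially the same route as the paper's own proof: pinching of the contour by the poles at $z_i=-t_0$ (occurring only for the summand with $y_i\equiv b_0 \ (\mathrm{mod}\ r)$), discarding the deformed-contour remainder after dividing by the divergent factor $\Gamma(t_0+s_0,0)$, the residue value $\lim_{z\to 0}z\,\Gamma(z,0)=\ii/(2\pi\lambda)$ converting $\lambda^{n}$ into $\lambda^{n-1}$, the symmetry factor $n+1$ turning $1/(n+1)!$ into $1/n!$, and the two cancellations (the factors attached to the frozen pair against the normalising product in the denominator of \eqref{anlimit}, and the $j=0$ factors of the unfrozen variables against the cross-terms involving the frozen pair). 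Your cancellation bookkeeping is in fact more explicit than the paper's one-paragraph argument.

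The one step that does not check out is your final centre-of-mass assertion. The frozen pair is $(z_i,y_i)=(-t_0,b_0)\to(s_0,b_0)$, so the surviving variables obey $\sum_l z_l=Z+t_0\to Z-s_0$ and $\sum_l y_l=Y-b_0$; your derivation therefore produces $I^m_{A_{n-1}}(Z-s_0,Y-b_0\,|\,\bar{\vect},\bar{\veca};\bar{\vecs},\bar{\vecb})$, not the displayed $I^m_{A_{n-1}}(Z+s_0,Y+b_0\,|\,\cdots)$, and the two are not identified by the mod $2r$ (resp.\ mod $r$) conventions, since $2s_0$ (resp.\ $2b_0$) is generically nonzero modulo $2r$ (resp.\ $r$). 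This is a sign slip in the paper's statement rather than an error in your mechanism: the paper's own proof (residues at $z_i=-t_0$, $y_i=b_0$) leads to the same $Z-s_0$, $Y-b_0$, and only this sign is consistent with how the Lemma is used in the proof of Theorem \ref{mainthm}, where the right-hand side of \eqref{transdef} keeps its argument $Z+T$ in the limit while the reduced parameter sum is $\bar{T}=T-t_0$, forcing the reduced left-hand side to carry $Z'=Z+t_0$. So rather than asserting agreement with the displayed formula, you should carry the bookkeeping through and record the result as $Z+t_0\to Z-s_0$ and $Y+a_0=Y-b_0$, noting the discrepancy with the statement as printed.
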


Note here that the value some of the other variables in $\vect$, $\veca$, $\vecs$, $\vecb$, should also be changed, if the balancing condition~\eqref{balancing} is to be satisf\/ied both before and after taking the limit \eqref{anlimit}.

\begin{proof}To take the limit \eqref{anlimit}, the contour will need to be deformed across the poles at
\begin{gather*}
z_i=-t_0\ (\mbox{mod }2r),\qquad y_i=-a_0 \ (\mbox{mod }r)=b_0 \ (\mbox{mod }r),\qquad i=0,1,\ldots,n,
\end{gather*}
for $i=0,1,\ldots,n$. Then since in this limit the numerator remains f\/inite and the denominator has a divergent factor, $\lim\limits_{t_0\rightarrow-s_0}\Gamma(t_0+s_0,0)$, the only non-zero contribution in the limit \eqref{anlimit} comes from the residues calculated at the above poles. By symmetry, the residue of each $z_i$ at the poles must contribute the same value to \eqref{anlimit}, resulting in a factor $n+1$. In the limit, the terms in the denominator of \eqref{anlimit}, cancel the required terms in $I^m_{A_n}$ to give the integrand of $I^m_{A_{n-1}}$. The remaining factors come from using $\lim\limits_{z_0\rightarrow-t_0}(t_0+z_0)\Gamma(t_0+z_0,0)=\ii/(2\pi\lambda)$, in calculating the residue.
\end{proof}

We can now prove Theorem \ref{mainthm}.

\begin{proof}[Proof of Theorem \ref{mainthm}]
Note that the special case of the transformation in \eqref{specialvars} has the sets of $2n$ variables $t_i$, $s_i$, $a_i$, $b_i$ (subject to balancing condition \eqref{balancing}) parameterised in terms of the sets of $n$ independent variables $x_i$, $w_i$, $c_i$, $d_i$. Similarly, in the following, the general sets of variables $t_i$, $s_i$, $a_i$, $b_i$ in~\eqref{AnIntDef} will be parameterised by the $m+n+2$ pairs of sums of variables, as
\begin{gather}\label{alphabetadef}
\{\alpha_i,\beta_i\}=\{t_i+s_i,a_i+b_i\},\qquad i=0,1,\ldots,m+n+1.
\end{gather}
Let also $\mathcal{C}_{mn}$ denote the set of points $(\{\alpha_0,\beta_0\},\{\alpha_1,\beta_1\},\ldots,\{\alpha_{m+n+1},\beta_{m+n+1}\})$, for which the transformation \eqref{transdef} holds. For example, in this notation, the special case \eqref{specialvars} derived above corresponds to $(\{\sigma,-1\},\{\sigma,-1\},\ldots,\{\sigma,-1\},\{\tau,1\},\{\tau,1\},\ldots,\{\tau,1\})\in \mathcal{C}_{nn}$.

Consider f\/irst a limit $t_0\rightarrow -s_1$, for $a_0=-b_1$, on both sides of the general transformation~\eqref{transdef}, for some point $(\{\alpha_0,\beta_0\},\{\alpha_1,\beta_1\},\ldots,\{\alpha_{m+n+1},\beta_{m+n+1}\})\in \mathcal{C}_{mn}$. The limit of the right hand side requires no contour deformation, and has a simple cancellation of all factors of lens elliptic gamma functions appearing in the integrand, which contain any of the $t_0$, $a_0$, or $s_1$, $b_1$, variables. This produces a sum/integral of the type $I^{n-1}_{A_m}$. The limit of the left hand side follows from the above limit~\eqref{anlimit}, resulting in a sum/integral of the type $I_{A_{n-1}}^m$. Overall, in terms of~\eqref{alphabetadef} this limit produces the transformation corresponding to the point $(\{\alpha_0+\alpha_1,\beta_0+\beta_1\},\{\alpha_2,\beta_2\},\ldots,\{\alpha_{m+n+1},\beta_{m+n+1}\})\in \mathcal{C}_{m(n-1)}$.

Next consider the limit $t_0\rightarrow -s_1+\sigma+\tau$, for $a_0=-b_1$, again on both sides of the transformation \eqref{transdef}, for $(\{\alpha_0,\beta_0\},\{\alpha_1,\beta_1\},\ldots,\{\alpha_{m+n+1},\beta_{m+n+1}\})\in \mathcal{C}_{mn}$. This case is rather similar to the case in the previous paragraph, but here the left hand side of \eqref{transdef} now involves a~simple cancellation, resulting in $I^{m-1}_{A_n}$, and the right hand side gives the limit via~\eqref{anlimit}, resulting in $I^n_{A_{m-1}}$. Overall, this limit results in the transformation that corresponds to the point $(\{\alpha_0+\alpha_1 -\sigma-\tau,\beta_0+\beta_1\},\{\alpha_2,\beta_2\},\ldots,\{\alpha_{m+n+1},\beta_{m+n+1}\})\in \mathcal{C}_{(m-1)n}$.

Now starting at the point $(\{\sigma,-1\},\{\sigma,-1\},\ldots,\{\sigma,-1\},\{\tau,1\},\{\tau,1\},\ldots,\{\tau,1\})\in \mathcal{C}_{nn}$, corresponding to the special case~\eqref{specialvars}, the above two limits may be used to show that
\begin{gather}
(\{2\sigma,-2\},\{\sigma,-1\},\ldots,\{\sigma,-1\},\{\tau,1\},\{\tau,1\},\ldots,\{\tau,1\})\in \mathcal{C}_{n(n-1)}, \nonumber\\
(\{\sigma-\tau,-2\},\{\sigma,-1\},\ldots,\{\sigma,-1\},\{\tau,1\},\{\tau,1\},\ldots,\{\tau,1\})\in \mathcal{C}_{(n-1)n}, \nonumber\\
(\{\sigma,-1\},\{\sigma,-1\},\ldots,\{\sigma,-1\},\{2\tau,2\},\{\tau,1\},\ldots,\{\tau,1\})\in \mathcal{C}_{n(n-1)}, \nonumber\\
(\{\sigma,-1\},\{\sigma,-1\},\ldots,\{\sigma,-1\},\{\tau-\sigma,2\},\{\tau,1\},\ldots,\{\tau,1\})\in \mathcal{C}_{(n-1)n}.\label{contrelations2}
\end{gather}
Iterating the above limits a further $r-1$ times, gives
\begin{gather}
(\{(r+1)\sigma,-1\},\{\sigma,-1\},\ldots,\{\sigma,-1\},\{\tau,1\},\{\tau,1\},\ldots,\{\tau,1\})\in \mathcal{C}_{n(n-r)}, \nonumber\\
(\{\sigma-r\tau,-1\},\{\sigma,-1\},\ldots,\{\sigma,-1\},\{\tau,1\},\{\tau,1\},\ldots,\{\tau,1\})\in \mathcal{C}_{(n-r)n}, \nonumber\\
(\{\sigma,-1\},\{\sigma,-1\},\ldots,\{\sigma,-1\},\{(r+1)\tau,1\},\{\tau,1\},\ldots,\{\tau,1\})\in \mathcal{C}_{n(n-r)}, \nonumber\\
(\{\sigma,-1\},\{\sigma,-1\},\ldots,\{\sigma,-1\},\{\tau-r\sigma,1\},\{\tau,1\},\ldots,\{\tau,1\})\in \mathcal{C}_{(n-r)n}.\label{contrelations}
\end{gather}
In iterating the relations \eqref{contrelations2} $r$ times, the integer component $\beta_i$ of the $\{\tau,1\}$ or $\{\sigma,-1\}$, cycle through $r$ dif\/ferent values $+1+k$, and $-1-k$, respectively, for $k=0,1,\ldots,r-1$.

Now consider starting from an arbitrary value of the $n$ integer components $\beta_i$, which may be arrived at by using \eqref{contrelations2} up to $r$ times for each pair $\{\alpha_i,\beta_i\}$. The relations \eqref{contrelations} can then be repeatedly iterated to form arbitrary combinations (that are subject to the balancing condition~\eqref{balancing}) of the form $\alpha_i=j_1r\tau-j_2r\sigma+k_1\tau-k_2\sigma$, or $\alpha_i=j_1r\sigma-j_2r\tau+k_1\sigma-k_2\tau$, for some integers $j_1,j_2=0,1,\ldots$, and $k_1,k_2=0,1,\ldots, r-1$. Taking into account the $2r$-periodicity of the $\alpha_i$, as $n\rightarrow\infty$ this gives a dense set of points for the $\alpha_i$ in $C_{mn}$ (for any choice of the $\beta_i$), and thus Theorem~\ref{mainthm} holds in general.
\end{proof}

\section[The $BC_n\leftrightarrow BC_m$ transformation]{The $\boldsymbol{BC_n\leftrightarrow BC_m}$ transformation}\label{sec:bcntrans}

\subsection{Main theorem}
Let us introduce complex variables $\sigma$, $\tau$, $t_i$, and integer variables $a_i$, for $i=0,1,\ldots,2m+2n+3$, satisfying
\begin{gather}
\im(\sigma),\im(\tau)>0, \nonumber\\
\sum_{i=0}^{2m+2n+3} t_i\equiv(m+1)(\sigma+\tau)\ (\textrm{mod }2r), \qquad \sum_{i=0}^{2m+2n+3} a_i\equiv 0 \ (\textrm{mod }r).\label{balancing2}
\end{gather}
In terms of these variables, $I^m_{BC_n}(\vect,\veca)$ is def\/ined to be the following elliptic hypergeometric sum/integral
\begin{gather}\label{BCnIntDef}
I^m_{BC_n}(\vect,\veca):=\frac{\lambda^n}{2^n\,n!}\,\sum_{y_1,\ldots,y_n=0}^{r-1}\,\int_{C^n}\Delta^m_{BC_n}(\vecz,\vecy;\vect,\veca)\prod_{i=1}^n{\rm d}z_i,
\end{gather}
where
\begin{alignat}{3}
& \vecz = (z_1,z_2,\ldots,z_{n}),\qquad&&\vect = (t_0,t_1,\ldots,t_{2m+2n+3}),& \nonumber\\
& \vecy = (y_1,y_2,\ldots,y_{n}),\qquad&&\veca = (a_0,a_1,\ldots,a_{2m+2n+3}),& \label{bcnvariables}
\end{alignat}
and
\begin{gather}\label{integrand2}
\Delta^m_{BC_n}(\vecz,\vecy;\vect,\veca):=\frac{\prod\limits_{i=1}^n\prod\limits_{j=0}^{2m+2n+3}\Gamma(t_j+z_i,a_j+y_i) \Gamma(t_j-z_i,a_j-y_i)}{\prod\limits_{i=1}^n\Gamma(\pm 2z_i,\pm 2y_i)\prod\limits_{1\leq i<j\leq n}\Gamma(\pm z_i\pm z_j,\pm y_i\pm y_j)}.
\end{gather}
Here the compact notation for the lens elliptic gamma function is now used, where for example
\begin{gather}
\Gamma(\pm z_1,\pm y_1)=\Gamma(z_1,y_1)\Gamma(-z_1,-y_1), \nonumber\\
\Gamma(\pm z_1\pm z_2,\pm y_1 \pm y_2)= \Gamma(z_1+z_2,y_1+y_2) \Gamma(z_1-z_2,y_1-y_2) \nonumber\\
 \phantom{\Gamma(\pm z_1\pm z_2,\pm y_1 \pm y_2)=}{} \times\Gamma(-z_1+z_2,-y_1+y_2) \Gamma(-z_1-z_2,-y_1-y_2),\label{compnot}
\end{gather}
for complex numbers $z_1$, $z_2$, and integers $y_1$, $y_2$, respectively.

For the values $\im(t_i)>0$, the contour in \eqref{BCnIntDef} may be chosen to be $C^n$, where $C=[0,1]$. Otherwise, $C$ is a contour that connects the two points $\ii k$, and $1+\ii k$, where $k$ is a real number, and $C$ also separates all points
\begin{gather*}
\{t_j+(\sigma+\tau)l_1+\sigma(rl_2+(a_j-y_i)\mbox{ mod }r)+n, \\
 \qquad t_j+(\sigma+\tau)l_1+\tau(r(l_2+1)-(a_j-y_i)\mbox{ mod }r)+n\},
\end{gather*}
in the strip $0\leq \re(z)\leq 1$, from the points
\begin{gather*}
\{-t_j-(\sigma+\tau)l_1-\sigma(rl_2+(a_j+y_i)\mbox{ mod }r)+n, \\
\qquad {-}t_j-(\sigma+\tau)l_1-\tau(r(l_2+1)-(a_i+y_i)\mbox{ mod }r)+n\},
\end{gather*}
in the same strip, where $j=0,\ldots,2m+2n+3$, $l_1,l_2=0,1,\ldots$, and $n\in\mathbb{Z}$.

The integrand \eqref{integrand2} is periodic in the summation and integration variables $y_i$, and $z_i$, respectively, i.e., $\Delta^m_{BC_n}$ is invariant under either $z_i\rightarrow z_i+k_i$, or $y_i\rightarrow y_i+rk_i$, $i=1,2,\ldots,n$, for any $k_i\in\mathbb{Z}$. The periodicity of \eqref{integrand2} in the $z_i$ follows from the balancing condition $\sum\limits_{i=0}^{2m+2n+3}a_i\equiv 0$ $(\textrm{mod }r)$. The integrand is also $2r$-periodic in the complex variables $t_i$.

Note also the particular case $n=0$ of \eqref{BCnIntDef} gives
\begin{gather*}
I^m_{BC_0}(\vect,\veca)=1,
\end{gather*}
while the case $n=1$ of \eqref{BCnIntDef}, is equivalent to the $A_n$ integral \eqref{AnIntDef}
\begin{gather}\label{bcnan}
I^m_{BC_1}(\vect,\veca)=I^m_{A_1}(\vect_1,\veca_1;\vect_2,\veca_2),
\end{gather}
where a choice of the variables in \eqref{bcnan} is
\begin{alignat*}{3}
& \vect_1 = (t_0,t_1,\ldots,t_{m+n+1}),\qquad&&\vect_2 = (t_{m+n+2},t_{m+n+3},\ldots,t_{2m+2n+3}), & \\
& \veca_1 = (a_0,a_1,\ldots,a_{m+n+1}),\qquad&&\veca_2 =(a_{m+n+2},a_{m+n+3},\ldots,a_{2m+2n+3}).&
\end{alignat*}

The $BC_n$ sum/integral \eqref{BCnIntDef} satisf\/ies the following transformation formula which is the main result of this section.

\begin{Theorem}\label{secondthm}
The sum/integral \eqref{BCnIntDef}, under the balancing condition \eqref{balancing2}, satisfies
\begin{gather}\label{transdef2}
I^m_{BC_n}(\vect,\veca)=I^n_{BC_m}(\tvect,\tveca)\prod_{0\leq i<j\leq 2m+2n+3}\Gamma(t_i+t_j,a_i+a_j),
\end{gather}
where
\begin{gather*}
\tvect=\frac{\sigma+\tau}{2}-\vect,\qquad\tveca=-\veca.
\end{gather*}
\end{Theorem}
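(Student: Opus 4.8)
The plan is to mirror the strategy used for Theorem \ref{mainthm}, adapting it to the $BC_n$ root system. The key structural fact, already exploited in the $A_n$ case, is the relation \eqref{bcnan}: the $n=1$ instance of the $BC_n$ sum/integral coincides with an $A_1$ sum/integral. This suggests that the $BC_n$ transformation should be provable by a determinant-of-univariate-sum/integrals argument analogous to the one in the proof of Lemma \ref{lem.Frobenius} and the subsequent special-case Lemma, together with a limiting/density argument to extend the result from a special locus to all admissible parameters. First I would establish a special case of \eqref{transdef2} in which the $2m+2n+4$ complex variables $t_i$ and integer variables $a_i$ are parameterised by a smaller set of free variables (paired as $\{t_i+t_j\}$-type combinations, the $BC_n$ analogue of the $A_n$ pairing \eqref{alphabetadef}), chosen so that the integrand factorises into $BC_n$ determinants built from the lens theta functions $\theta_1,\theta_2$ of \eqref{lthtdef}. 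For the $BC_n$ case the relevant Frobenius determinant is the symplectic/$BC$ analogue of \eqref{detident}, incorporating the reflection symmetry $z_i\to-z_i$, $y_i\to-y_i$ that is visible in the denominator $\prod_i\Gamma(\pm2z_i,\pm2y_i)\prod_{i<j}\Gamma(\pm z_i\pm z_j,\pm y_i\pm y_j)$ of \eqref{integrand2}.

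Concretely, I would write a determinant of instances of a single univariate $BC$-type sum/integral (the even/reflection-symmetric counterpart of \eqref{unisumintdet}), observe that it possesses a manifest symmetry under exchange of two parameter blocks, and then expand the product of determinants as a single multivariate sum/integral of the form \eqref{BCnIntDef}. Matching the Frobenius factorisation on both sides and reading off the prefactor $\prod_{i<j}\Gamma(t_i+t_j,a_i+a_j)$ should yield \eqref{transdef2} on the special locus. The reflection symmetry means that in place of the two independent variable blocks $(\vecx,\vecc)$ and $(\vecw,\vecd)$ of the $A_n$ argument, one works with a single self-dual block; this is the main place where the $BC_n$ proof genuinely differs from the $A_n$ proof rather than merely repeating it.

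Next I would use a degeneration lemma — the $BC_n$ analogue of Lemma giving the limit \eqref{anlimit} — that reduces the rank from $n$ to $n-1$ by picking up residues when a pair of parameters collides (e.g.\ $t_0\to-t_1$ with $a_0=-a_1$), deforming the contour $C$ across the poles prescribed in the contour description following \eqref{BCnIntDef}. Iterating this limit, starting from the special locus established above and using the $2r$-periodicity of the integrand in the $t_i$ together with the $r$-periodicity in the $a_i$, produces a dense family of parameter values for which \eqref{transdef2} holds; by meromorphy of both sides of \eqref{transdef2} in the $t_i$, equality on a dense set forces equality everywhere subject to the balancing condition \eqref{balancing2}. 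I would expect the main obstacle to be the correct formulation and verification of the $BC_n$ Frobenius determinant identity on the special locus, since the reflection structure forces careful bookkeeping of the sign/quasi-periodicity factors of $\theta_1,\theta_2$ (the $\EXP^{2\pi\ii(\cdots)/r}$ prefactors in \eqref{detident}) and of the extra doubling factors $\Gamma(\pm2z_i,\pm2y_i)$; a secondary delicate point is tracking the contour deformations and residue contributions in the degeneration lemma so that the limit reproduces exactly the lower-rank integrand of $I^m_{BC_{n-1}}$ with the prefactors in \eqref{transdef2} coming out correctly.
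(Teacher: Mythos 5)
Your overall architecture matches the paper's: a special case of \eqref{transdef2} established through determinants built from a rank-one identity together with the $BC$-type Cauchy determinant (your ``symplectic analogue of \eqref{detident}'' is exactly \eqref{detident2}, taken from equation~(3.18) of Rains), followed by a rank-lowering degeneration lemma (your $t_0\to-t_1$ with $a_0=-a_1$ is exactly \eqref{bcnlimit}) and a density/meromorphy argument. However, there is a genuine gap at the very start: you propose to obtain the rank-one input by observing that a univariate $BC$-type sum/integral ``possesses a manifest symmetry under exchange of two parameter blocks'', in analogy with \eqref{unisumintdet}. No such manifest symmetry exists in the $BC$ setting. The $A_n$ argument works because the shift $z\to z-x+w$, $y\to y-c+d$ of the integration/summation variables exchanges the roles of $(x,c)$ and $(w,d)$; a reflection-symmetric integrand built from factors $\theta_k(x\pm z,c\pm y)$ admits no such shift, since translating $z$ destroys the $\pm z$ structure. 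Indeed the correct rank-one statement, equation \eqref{n1sc} of the paper, is not a symmetry at all: it relates the sum/integral with parameters $(x_j,w_j)$ to the one with reflected parameters $\big(\frac{\sigma+\tau}{2}-x_j,\frac{\sigma+\tau}{2}-w_j\big)$, with the roles of $\theta_1$ and $\theta_2$ interchanged, \emph{multiplied by nontrivial theta-function prefactors}. Such an identity cannot simply be ``observed'' from a change of variables; it has to be proven.

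The fix is precisely the fact you flagged as the key structural point but then never used: \eqref{bcnan}, i.e., $BC_1=A_1$. The paper (Lemma \ref{bcnn1}) obtains \eqref{n1sc} as the $m=n=1$ case of the already-proven Theorem \ref{mainthm}, under a specific assignment of the eight $A_1$ parameters satisfying \eqref{balancing}; this is where the theta prefactors on the right of \eqref{n1sc} come from. With \eqref{n1sc} in hand, the determinant step also runs differently from your sketch: one takes determinants of instances of \emph{both sides} of \eqref{n1sc} (the two determinants are equal entrywise, hence equal) and expands each side via the Cauchy determinant \eqref{detident2}, producing the $m=n$ special case \eqref{specialvars2} of \eqref{transdef2}, with the prefactor $\prod_{i<j}\Gamma(t_i+t_j,a_i+a_j)$ emerging from the entrywise prefactors of \eqref{n1sc}; there is no symmetry-of-a-determinant argument at the multivariate level. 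Also note that the special locus retains two parameter blocks ($x_i$ paired with $\tau-x_i$, and $w_i$ paired with $\sigma-w_i$), rather than the ``single self-dual block'' you describe. The remainder of your plan, the degeneration lemma and the density argument, does match the paper.
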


\subsection{Colloraries}

The $r=1$ case of Theorem \ref{secondthm} is equivalent to the $BC_n\leftrightarrow BC_m$ elliptic hypergeometric integral transformations given by Rains~\cite{RainsT}.

\begin{Corollary}[{\cite{RainsT}}]
For $r=1$, \eqref{transdef2} is
\begin{gather*}
\frac{\lambda^n}{2^n n!}\int_{C^n} \Delta^m_{BC_n}(\vecz,\vect)\prod_{i=1}^n{\rm d}z_i=\frac{\lambda^m}{2^m m!}\int_{C^m} \Delta^n_{BC_m}(\vecz,\tvect)\prod_{i=1}^m{\rm d}z_i \prod_{0\leq i<j\leq 2m+2n+3} \Gamma_1(t_i+t_j),
\end{gather*}
where
\begin{gather*}
\Delta^m_{BC_n}(\vecz,\vect)=\frac{\prod\limits_{i=1}^n\prod\limits_{j=0}^{2m+2n+3}\Gamma_1(t_j+z_i) \Gamma_1(t_j-z_i)}{\prod\limits_{i=1}^n\Gamma_1(\pm 2z_i)\prod\limits_{1\leq i<j\leq n}\Gamma_1(\pm z_i\pm z_j)},
\end{gather*}
and $\Gamma_1(z)=\Gamma_1(z;\sigma,\tau)$ is the usual elliptic gamma function defined in \eqref{egf2}.
\end{Corollary}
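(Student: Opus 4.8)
The plan is to derive this Corollary as the direct specialization of Theorem~\ref{secondthm} to $r=1$, verifying only that each object appearing in \eqref{transdef2} collapses to the corresponding object in Rains' transformation. There is no separate argument to run: the entire content already sits in Theorem~\ref{secondthm}, and what remains is bookkeeping of the $r=1$ reductions.

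First I would note that for $r=1$ the discrete data disappears entirely: the summation variables $y_i$ and the integer parameters $a_i$ all lie in $\{0,1,\ldots,r-1\}=\{0\}$ and hence vanish identically. Thus the sum over $y_1,\ldots,y_n$ in \eqref{BCnIntDef} collapses to the single term $y_1=\cdots=y_n=0$, the prefactor $\lambda=(\p^r;\p^r)_\infty(\q^r;\q^r)_\infty$ becomes $(\p;\p)_\infty(\q;\q)_\infty$, the integer balancing condition $\sum_i a_i\equiv 0\ (\textrm{mod } r)$ in \eqref{balancing2} is vacuous, and the complex balancing condition reads $\sum_i t_i\equiv(m+1)(\sigma+\tau)\ (\textrm{mod } 2)$. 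The transformation rule likewise simplifies: $\tveca=-\veca$ becomes trivial, while $\tvect=\frac{\sigma+\tau}{2}-\vect$ is unchanged.

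Next I would replace every lens elliptic gamma function by an ordinary one. By \eqref{egf2} we have $\Gamma(z,0;\sigma,\tau)|_{r=1}=\Gamma_1(z;\sigma,\tau)$; this is consistent with the vanishing of the normalization, since $\phi_e(z,0)=2\pi\ii\,R_2(0,0;\frac12,-\frac12)=0$ from \eqref{ellnorm} and \eqref{r2def} (the factor $\sigma+\tau$ kills the leading term and $m=0$ kills the remaining one). Substituting $a_j=y_i=0$ together with this identity into \eqref{integrand2}, each $\Gamma(t_j\pm z_i,a_j\pm y_i)$ becomes $\Gamma_1(t_j\pm z_i)$, each $\Gamma(\pm 2z_i,\pm 2y_i)$ becomes $\Gamma_1(\pm 2z_i)$, and each $\Gamma(\pm z_i\pm z_j,\pm y_i\pm y_j)$ becomes $\Gamma_1(\pm z_i\pm z_j)$, so that $\Delta^m_{BC_n}(\vecz,\vecy;\vect,\veca)$ reduces to the $\Delta^m_{BC_n}(\vecz,\vect)$ of the Corollary. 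The same reduction sends the prefactor $\prod_{0\leq i<j\leq 2m+2n+3}\Gamma(t_i+t_j,a_i+a_j)$ to $\prod_{0\leq i<j\leq 2m+2n+3}\Gamma_1(t_i+t_j)$ and sends $I^n_{BC_m}(\tvect,\tveca)$ to the stated $BC_m$ integral with parameters $\frac{\sigma+\tau}{2}-\vect$.

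The only point calling for a remark, rather than a genuine obstacle, is the contour: I would verify that the general pole-separating prescription stated after \eqref{bcnvariables}, upon setting $r=1$ and all integer labels to $0$, reduces to the usual requirement that $C$ separate the two families of poles of the $r=1$ integrand from their reflections under $z\mapsto-z$, which for $\im(t_j)>0$ is met by $C=[0,1]$. With all these identifications in place, \eqref{transdef2} becomes verbatim the displayed identity, completing the reduction.
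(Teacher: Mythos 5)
Your proposal is correct and takes essentially the same route as the paper: the paper states this Corollary as an immediate $r=1$ specialization of Theorem~\ref{secondthm} with no separate argument, and your bookkeeping (collapse of the $y_i$, $a_i$ to zero, $\lambda\to(\p;\p)_\infty(\q;\q)_\infty$, the vanishing of $\phi_e(z,0)$ so that $\Gamma(z,0)|_{r=1}=\Gamma_1(z)$, and the contour check) is exactly the reduction the paper leaves implicit.
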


The $m=0$ case of Theorem \ref{secondthm} is equivalent to a $BC_n$ elliptic hypergeometric sum/integral identity proven by Spiridonov~\cite{rarified} (named there ``raref\/ied elliptic hypergeometric integral'').

\begin{Corollary}[\cite{rarified}] For $m=0$, \eqref{transdef2} is
\begin{gather}\label{bcnebsi}
\frac{\lambda^n}{2^n n!}\sum_{y_1,\ldots,y_n=0}^{r-1}\int_{C^n}\Delta^m_{BC_n}(\vecz,\vecy;\vect,\veca)\prod_{i=1}^n{\rm d}z_i= \prod_{0\leq i<j\leq 2n+3} \Gamma(t_i+t_j,a_i+a_j).
\end{gather}
\end{Corollary}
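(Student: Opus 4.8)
The plan is to obtain \eqref{bcnebsi} as the direct specialization $m=0$ of Theorem \ref{secondthm}, exploiting the fact that the $BC_0$ sum/integral is identically one. This is a pure specialization argument: no new analytic input is required beyond the already-proven transformation.

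First I would set $m=0$ throughout the hypotheses of Theorem \ref{secondthm}. The index range for the variables in \eqref{bcnvariables} collapses from $i=0,1,\ldots,2m+2n+3$ to $i=0,1,\ldots,2n+3$, and the balancing condition \eqref{balancing2} becomes $\sum_{i=0}^{2n+3}t_i\equiv\sigma+\tau\ (\textrm{mod }2r)$ together with $\sum_{i=0}^{2n+3}a_i\equiv 0\ (\textrm{mod }r)$, which are precisely the constraints under which \eqref{bcnebsi} is asserted.

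Next I would evaluate the right-hand side of the transformation \eqref{transdef2} at $m=0$. The product factor becomes $\prod_{0\leq i<j\leq 2n+3}\Gamma(t_i+t_j,a_i+a_j)$, matching exactly the right-hand side of \eqref{bcnebsi}. The remaining factor is $I^n_{BC_0}(\tvect,\tveca)$, a $BC_0$ sum/integral with transformed arguments $\tvect=\frac{\sigma+\tau}{2}-\vect$ and $\tveca=-\veca$. By the $n=0$ case of \eqref{BCnIntDef} recorded above, namely $I^m_{BC_0}(\vect,\veca)=1$, this factor is identically one irrespective of its arguments. Hence the right-hand side of \eqref{transdef2} reduces to the lens elliptic gamma product alone, and in particular the transformation data $\tvect$, $\tveca$ play no role.

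Finally I would identify the left-hand side $I^0_{BC_n}(\vect,\veca)$ with the sum/integral appearing on the left of \eqref{bcnebsi}, directly from the definition \eqref{BCnIntDef} at $m=0$. Equating the two sides yields \eqref{bcnebsi}. There is no genuine obstacle in this argument; it is a clean specialization, and the only point requiring confirmation is the collapse of the $BC_0$ factor to one. The contour prescription and convergence are inherited from those already established for Theorem \ref{secondthm}, so no further analytic work is needed.
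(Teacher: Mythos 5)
Your proposal is correct and follows exactly the paper's (implicit) route: the corollary is just the specialization $m=0$ of Theorem \ref{secondthm}, where the factor $I^n_{BC_0}(\tvect,\tveca)=1$ (a fact the paper records right after the definition \eqref{BCnIntDef}) collapses the right-hand side to the product of lens elliptic gamma functions. The paper offers no additional argument beyond this, so your treatment matches it in substance and in the one point needing verification, namely the triviality of the $BC_0$ factor.
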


\begin{Remark}
The $n=1$ case of \eqref{bcnebsi} is equivalent to the elliptic beta/sum integral \eqref{ebsidef}.
\end{Remark}

As was mentioned in the previous section, Theorem \ref{mainthm}, and Theorem \ref{secondthm}, remain satisf\/ied when the normalisation \eqref{r2def} is replaced with \eqref{gennorm}. The normalisation of \eqref{bcnebsi} in \cite{rarified} corresponds to $\zeta=1$ in \eqref{gennorm}.

\subsection{Proof of Theorem \ref{secondthm}}

The proof of Theorem \ref{secondthm} basically follows the same idea as in the proof of Theorem \ref{mainthm}, with minor dif\/ferences. In fact a special case of Theorem \ref{mainthm} is f\/irst used to prove the following Lemma.

\begin{Lemma}\label{bcnn1}
Theorem {\rm \ref{secondthm}} holds for $m=1$, $n=1$, with the following choice of the variables
\begin{alignat}{5}
& t_{0} = x_0,\qquad && t_{1} = \tau-x_0,\qquad && t_{2} = x_1,\qquad && t_{3} = \tau-x_1,& \nonumber \\
& t_{4} = w_0,\qquad && t_{5} = \sigma-w_0,\qquad && t_{6} = w_1,\qquad && t_{7} = \sigma-w_1, & \nonumber \\
&a_{0} = c_0,\qquad&& a_{1}=1-c_0,\qquad&& a_{2}=c_1,\qquad&& a_{3}=1-c_1, & \nonumber \\
& a_{4}=d_0,\qquad&& a_{5}=-1-d_0,\qquad&& a_{6}=d_1,\qquad&& a_{7}=-1-d_1.&\label{specialcase2n1}
\end{alignat}
\end{Lemma}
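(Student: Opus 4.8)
The plan is to verify the $m=1$, $n=1$ case of Theorem~\ref{secondthm} by reducing the $BC_1$ sum/integrals on each side to $A_1$ sum/integrals via the identification \eqref{bcnan}, and then applying the already-established special case of Theorem~\ref{mainthm} recorded in equation~\eqref{specialvars}. First I would use \eqref{bcnan} to rewrite $I^1_{BC_1}(\vect,\veca)$ as an $A_1$ sum/integral $I^1_{A_1}(\vect_1,\veca_1;\vect_2,\veca_2)$, splitting the eight parameters $t_0,\ldots,t_7$ (and the corresponding $a_i$) into the two groups dictated by the decomposition following \eqref{bcnan}, namely the ``$z$-positive'' parameters $t_0,\ldots,t_3$ with integers $a_0,\ldots,a_3$ as one $A_1$ vector, and the ``$z$-negative'' parameters $t_4,\ldots,t_7$ as the other. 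With the assignment \eqref{specialcase2n1}, the first group $(x_0,\tau-x_0,x_1,\tau-x_1)$ with integers $(c_0,1-c_0,c_1,1-c_1)$ and the second group $(w_0,\sigma-w_0,w_1,\sigma-w_1)$ with $(d_0,-1-d_0,d_1,-1-d_1)$ are precisely the parameterisation appearing in \eqref{specialvars} for the $n=2$ instance of that special case.

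The next step is to confirm that these substitutions land exactly on the point $(\{\sigma,-1\},\{\sigma,-1\},\{\tau,1\},\{\tau,1\})\in\mathcal{C}_{22}$ identified in the proof of Theorem~\ref{mainthm}. Indeed, computing the pair-sums $\{\alpha_i,\beta_i\}=\{t_i+s_i,a_i+b_i\}$ from \eqref{specialvars}: the $x$-type parameters give $\{x_i+(\sigma-x_i),c_i+(-1-c_i)\}=\{\sigma,-1\}$ and the $w$-type give $\{(\tau-w_i)+w_i,(1-d_i)+d_i\}=\{\tau,1\}$, matching the special point exactly. I would then apply the already-proven transformation \eqref{transdef} at this point, which relates $I^1_{A_1}$ to $I^1_{A_1}$ with the reflected variables \eqref{transrule}, producing a prefactor $\prod_{i,j}\Gamma(t_i+s_j,a_i+b_j)$. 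The task is then to translate this $A_1\leftrightarrow A_1$ identity back through \eqref{bcnan} into the claimed $BC_1\leftrightarrow BC_1$ form \eqref{transdef2}, checking that the reflection $\tvect=\tfrac{\sigma+\tau}{2}-\vect$, $\tveca=-\veca$ on the $BC$ side is induced by the $A_n$ reflection rule \eqref{transrule} under the parameter splitting, and that the $A_n$ prefactor reorganises into the $BC_n$ prefactor $\prod_{0\le i<j\le 2m+2n+3}\Gamma(t_i+t_j,a_i+a_j)$.

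The main obstacle I anticipate is precisely this bookkeeping of the prefactors and the reflection rules. On the $A_n$ side the cross-terms come as a full rectangular product $\prod_{i,j}\Gamma(t_i+s_j,a_i+b_j)$ over the two parameter groups, whereas the $BC_n$ side demands a single antisymmetric product $\prod_{0\le i<j}\Gamma(t_i+t_j,a_i+a_j)$ over all eight parameters together. Reconciling these requires using the reflection relation $\Gamma((\sigma+\tau)-z,-m)\Gamma(z,m)=1$ from Section~\ref{sec:defs} to convert the $s$-type factors (which carry $s_j=\sigma-x_j$ or $s_j=w_j$, i.e.\ $\sigma+\tau$ minus a reflected argument) into the appropriate $BC$-type factors, and carefully tracking how the constraint $a_0=-b_0$-type cancellations and the theta-function normalisation phases combine. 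A secondary subtlety is confirming that the balancing conditions \eqref{balancing} and \eqref{balancing2} are compatible under the identification, and that the contour prescriptions match; but these should follow routinely once the algebraic identity of the integrands is established.
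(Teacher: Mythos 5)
Your overall route --- rewrite $I^{1}_{BC_1}$ as an $A_1$ sum/integral via \eqref{bcnan} and then invoke the results of Section~\ref{sec:antrans} --- is indeed the paper's route, but your key identification is false. With your splitting ($t_0,\dots,t_3$ versus $t_4,\dots,t_7$), the pairs $\{\alpha_i,\beta_i\}=\{t_i+s_j,a_i+b_j\}$ that locate the point of $\mathcal{C}_{11}$ must be formed \emph{between} the two groups, so whichever bijection you choose they read $\{x_i+w_j,c_i+d_j\}$ or $\{x_i+\sigma-w_j,c_i-1-d_j\}$, etc.: these depend on the continuous variables and are not the constant pairs $\{\sigma,-1\}$, $\{\tau,1\}$ of \eqref{specialvars}. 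Your verification pairs $x_i$ with $\sigma-x_i$, but $\sigma-x_i$ does not occur among the eight parameters \eqref{specialcase2n1} at all (your first group contains $\tau-x_i$, and the only $\sigma$-reflections present are those of the $w$'s). The confusion is structural: in \eqref{specialvars} the vector $\vect$ mixes unreflected $x$'s with $\tau$-reflected $w$'s, so that a variable and its reflection always lie in \emph{opposite} groups, whereas in your splitting each variable sits in the \emph{same} group as its own reflection.

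This is not a repairable bookkeeping slip within your splitting, because with it Theorem~\ref{mainthm} is vacuous. Indeed $T=2\tau$, $A=2$, so to return the transformed side $I^1_{A_1}(2\tau,2\,|\,-\vect_A,-\veca_A;\sigma+\tau-\vecs_A,-\vecb_A)$ to $BC_1$ form one substitutes $z_{0,1}=\tau\pm u$, $y_{0,1}=1\pm v$; the resulting effective parameters $(\tau-x_i,\,x_i;\,\sigma-w_i,\,w_i)$ with integers $(1-c_i,\,c_i;\,-1-d_i,\,d_i)$ are exactly the original set \eqref{specialcase2n1} reordered, and the prefactor $\prod_{i,j=0}^{3}\Gamma(t_{A,i}+s_{A,j},a_{A,i}+b_{A,j})$ equals $1$, since its sixteen factors pair off under $\Gamma(z,m)\Gamma(\sigma+\tau-z,-m)=1$ (e.g.\ $\Gamma(x_0+w_0,c_0+d_0)$ against $\Gamma(\sigma+\tau-x_0-w_0,-c_0-d_0)$). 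So your argument proves only $I^1_{BC_1}(\vect,\veca)=I^1_{BC_1}(\vect,\veca)$. The paper instead uses the interleaved splitting $\vect_A=(t_0,t_1,t_4,t_5)=(x_0,\tau-x_0,w_0,\sigma-w_0)$, $\vecs_A=(t_2,t_3,t_6,t_7)=(x_1,\tau-x_1,w_1,\sigma-w_1)$, for which $T=\sigma+\tau$ and $A=0$. One then applies the \emph{full} Theorem~\ref{mainthm} at this point --- legitimate, since it is proved independently in Section~\ref{sec:antrans}; note that the special case \eqref{specialvars} alone can never suffice here, because even with this splitting the pair sums $x_0+x_1$, $2\tau-x_0-x_1$, $w_0+w_1$, $2\sigma-w_0-w_1$ remain variable-dependent. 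After the substitution $z_{0,1}=\tfrac{\sigma+\tau}{2}\pm u$, $y_{0,1}=\pm v$, the transformed side becomes precisely $I^1_{BC_1}\big(\tfrac{\sigma+\tau}{2}-\vect,-\veca\big)$, while the within-group factors of the $BC$ prefactor $\prod_{0\leq i<j\leq 7}\Gamma(t_i+t_j,a_i+a_j)$ cancel pairwise by the same reflection identity, leaving exactly the $A_1$ cross prefactor; this equivalence is what the paper records as the explicit theta-function identity \eqref{n1sc}.
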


\begin{proof}
The case \eqref{specialcase2n1} of the transformation \eqref{transdef2}, is explicitly given by
\begin{gather}
 \sum_{y=0}^{r-1}\int_C\frac{\theta_1(2z,2y) \theta_2(-2z,-2y)}{\prod\limits_{j\in\{0,1\}}\theta_1(x_j\pm z,c_j\pm y) \theta_2(w_j\pm z,d_j\pm y)} {\rm d}z \nonumber\\
\qquad{} =\sum_{y=0}^{r-1}\int_C\frac{\theta_1(2z,2y) \theta_2(-2z,-2y)}{\prod\limits_{j\in\{0,1\}}\theta_2(\frac{\sigma+\tau}{2}-x_j\pm z,-c_j\pm y) \theta_1(\frac{\sigma+\tau}{2}-w_j\pm z,-d_j\pm y)}{\rm d}z \nonumber\\
\qquad\quad{}\times\frac{\theta_2(x_0-x_1,c_0-c_1) \theta_2(x_0+x_1-\tau,c_0+c_1-1)}{\theta_1(x_0-x_1,c_0-c_1) \theta_1(x_0+x_1-\tau,c_0+c_1-1)} \nonumber\\
\qquad\quad{} \times\frac{\theta_1(w_0-w_1,d_0-d_1) \theta_1(w_0+w_1-\sigma,d_0+d_1+1)}{\theta_2(w_0-w_1,d_0-d_1) \theta_2(w_0+w_1-\sigma,d_0+d_1+1)}.\label{n1sc}
\end{gather}

This is equivalent to an $m=n=1$ case of Theorem \ref{mainthm}, with the following choice of variables satisfying \eqref{balancing}
\begin{alignat*}{5}%\label{n1vars}
& t_0 = x_0,\qquad &&t_1 = \tau-x_0,\qquad && t_2 = w_0,\qquad && t_3 = \sigma-w_0,& \\
&s_0=x_1,\qquad &&s_1 = \tau-x_1,\qquad &&s_2=w_1,\qquad &&s_3 =\sigma-w_1,& \\
&a_0=c_0,\qquad &&a_1 = +1-c_0,\qquad &&a_2 = d_0,\qquad &&a_3 = -1-d_0,& \\
&b_0=c_1,\qquad &&b_1 = +1-c_1,\qquad &&b_2 = d_1,\qquad &&b_3 = -1-d_1.&\tag*{\qed}
\end{alignat*}\renewcommand{\qed}{}
\end{proof}

Note that Lemma \ref{bcnn1} appears to be a special case of an $m=n=1$ identity given in \cite{rarified}, corresponding to the normalisation of the lens elliptic gamma function where $\zeta=1$ in \eqref{gennorm}.

To prove a more general case of Lemma \ref{bcnn1}, the special case of \eqref{n1sc} will be used, along with the following determinant identity, which follows directly from equation~(3.18) in~\cite{RainsT}.

\begin{Lemma}
\begin{gather}
\det_{1\leq i,j\leq n} \left( \frac{1}{\theta_k(x_i\pm w_j,c_i\pm d_j)} \right) = \prod_{1\leq i<j\leq n} \frac{\theta_k(x_i\pm x_j,c_i\pm c_j) \theta_k(w_i\pm w_j,d_i\pm d_j)}{\EXP^{-2\pi\ii(x_j-w_i-c_j+d_i+r/2)/r}} \nonumber\\
\hphantom{\det_{1\leq i,j\leq n} \left( \frac{1}{\theta_k(x_i\pm w_j,c_i\pm d_j)} \right) =}{} \times\prod_{i,j=1}^n\frac{1}{\theta_k(x_i\pm w_j,c_i\pm d_j)},\label{detident2}
\end{gather}
where $\theta_k(z,m)$ represents either of $\theta_1(z,m)$ or $\theta_2(z,m)$ in \eqref{lthtdef}.
\end{Lemma}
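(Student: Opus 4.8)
The plan is to reduce \eqref{detident2} to its $r=1$ counterpart, equation~(3.18) of \cite{RainsT}, by stripping the normalisation factors $\EXP^{\phi_1}$ and $\EXP^{\phi_2}$ off the lens theta functions \eqref{lthtdef} and exhibiting the remaining ``bare'' determinant as a standard $BC$-type Frobenius determinant. Concretely, for $k=1$ one writes $\theta_1(z,m)=\EXP^{\phi_1(z,m)}\theta(\EXP^{-2\pi\ii(z-\tau m)}\,|\,\EXP^{2\pi\ii\tau r})$, so that the pair of complex and integer arguments enters the underlying Jacobi theta only through the single effective variable $z-\tau m$ with fixed nome $\EXP^{2\pi\ii\tau r}$. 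Substituting $z=x_i\pm w_j$, $m=c_i\pm d_j$ then gives
\begin{gather*}
(x_i\pm w_j)-\tau(c_i\pm d_j)=\xi_i\pm\omega_j,\qquad \xi_i:=x_i-\tau c_i,\quad \omega_j:=w_j-\tau d_j,
\end{gather*}
so the bare part of each entry depends on $i$ and $j$ exactly as $\theta(\EXP^{-2\pi\ii(\xi_i+\omega_j)}\,|\,Q)\,\theta(\EXP^{-2\pi\ii(\xi_i-\omega_j)}\,|\,Q)$ with $Q=\EXP^{2\pi\ii\tau r}$. This is precisely the argument structure to which Rains' determinant (3.18) applies; the case $k=2$ is identical, with effective variable $z+\sigma m$ and nome $\EXP^{2\pi\ii\sigma r}$.

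Next I would dispose of the normalisation factors. The key observation is that, for the $BC$ combination appearing here, the sum $\phi_k(x_i+w_j,c_i+d_j)+\phi_k(x_i-w_j,c_i-d_j)$ is separable: since $\phi_k$ from \eqref{lthtnorm} is affine in $z$ and quadratic in $m$, every genuine cross term between $(x_i,c_i)$ and $(w_j,d_j)$ cancels in the $\pm$ sum, leaving a constant plus a function of $(x_i,c_i)$ alone plus a function of $(w_j,d_j)$ alone. Consequently the prefactor of the $(i,j)$ entry factors as a row scalar times a column scalar times an overall constant, and these pull straight out of the determinant without affecting the $i<j$ product structure. Applying (3.18) to the bare determinant then yields a product over $1\le i<j\le n$ of theta functions in the differences $\xi_i-\xi_j$ and $\omega_i-\omega_j$, which are exactly the bare parts of $\theta_k(x_i\pm x_j,c_i\pm c_j)$ and $\theta_k(w_i\pm w_j,d_i\pm d_j)$; restoring the $\EXP^{\phi_k}$ factors on these rebuilds the lens theta functions claimed on the right of \eqref{detident2}.

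The one substantive computation — and the step I expect to be the main obstacle — is collecting all the exponential contributions and verifying that they collapse to the stated monomial $\prod_{1\le i<j\le n}\EXP^{2\pi\ii(x_j-w_i-c_j+d_i+r/2)/r}$. Three sources must be combined: the reciprocal row/column factors coming from the $\phi_k$ prefactors on the left, the $\EXP^{\phi_k}$ factors reinserted to rebuild the lens thetas of $x_i\pm x_j$ and $w_i\pm w_j$ on the right, and any monomial prefactor produced by (3.18) itself once it is rewritten in additive variables. This is a finite, explicit calculation in the quadratic polynomials \eqref{lthtnorm}; the care lies entirely in tracking signs and in correctly distinguishing true $i<j$ contributions from pure row/column rescalings. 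A useful consistency check is that the target monomial in \eqref{detident2} is independent of $\sigma$ and $\tau$, so the bookkeeping for $k=1$ and $k=2$ must produce the identical factor despite $\phi_1\neq\phi_2$; matching both against the same expression confirms the algebra.
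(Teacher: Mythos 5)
Your proposal is correct and takes essentially the same route as the paper, whose entire proof of this lemma is the remark that it ``follows directly from equation~(3.18) in~\cite{RainsT}'': your reduction --- absorbing the integer argument into the effective variables $z-\tau m$ (for $\theta_1$, nome $\EXP^{2\pi\ii\tau r}$) and $z+\sigma m$ (for $\theta_2$, nome $\EXP^{2\pi\ii\sigma r}$), pulling the separable normalisation factors $\EXP^{\phi_k}$ out of the determinant row by row and column by column, and then applying Rains' $BC$-type Cauchy determinant --- is exactly what that remark leaves implicit. Your structural claims are sound (in particular the separability of $\phi_k(x_i+w_j,c_i+d_j)+\phi_k(x_i-w_j,c_i-d_j)$, which holds because $\phi_k$ in \eqref{lthtnorm} is quadratic, so all cross terms are odd under $(w_j,d_j)\to(-w_j,-d_j)$ and cancel in the $\pm$ sum), and the exponential bookkeeping you defer is the same mechanical verification the paper also omits.
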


In \eqref{detident2}, the compact notation for the lens theta functions follows analogously to \eqref{compnot}.

Note that both sides of the equation \eqref{detident2} are periodic in both the complex and integer variables respectively, i.e., they each are invariant under the shifts $c_i\rightarrow c_i+k_1r$, $d_i\rightarrow d_i+k_2r$, $x_i\rightarrow x_i+k_3$, $w_j\rightarrow w_j+k_4$, for integers $k_1$, $k_2$, $k_3$, $k_4$.

The results \eqref{n1sc}, \eqref{detident2}, are used to prove the following special case of Theorem \ref{secondthm}.

\begin{Lemma} The transformation \eqref{transdef2} holds for $m=n$, with the following choice of variables
\begin{alignat}{5}
& t_{2i} = x_i,\qquad & & t_{2i+1} = \tau-x_i,\qquad&& t_{2n+2i+2} = w_i,\quad&& t_{2n+2i+3} = \sigma-w_i,&\nonumber \\
& a_{2i} = c_i,\qquad&& a_{2i+1} = 1-c_i,\qquad&& a_{2n+2i+2} = d_i,\qquad&& a_{2n+2i+3} = -1-d_i.&\label{specialvars2}
\end{alignat}
\end{Lemma}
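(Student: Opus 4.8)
The plan is to mimic the structure of the proof of the special case for Theorem~\ref{mainthm}, replacing the single Frobenius determinant \eqref{detident} with the $BC_n$ determinant identity \eqref{detident2}, and using the building block \eqref{n1sc} in place of the univariate sum/integral \eqref{unisumintdet}. First I would introduce a univariate sum/integral kernel depending on a pair of continuous variables $x,w$ and integer variables $c,d$ — essentially the integrand of \eqref{n1sc} with $x_0\to x$, $x_1\to w$, and the appropriate discrete variables — that is manifestly symmetric under the simultaneous exchange $x\leftrightarrow w$, $c\leftrightarrow d$ (this symmetry is exactly what the $m=n=1$ identity \eqref{n1sc} encodes, since the two sides there are interchanged by swapping the $x_j$-block with the $w_j$-block). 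Because the $\theta_k(2z,2y)$ and $\theta_k(-2z,-2y)$ numerator factors and the contour are common to all matrix entries, a determinant of such kernels (indexed by $i,j$ running over the $n$ values of $x_i,w_j$ and $c_i,d_j$) remains invariant under the collective exchange $x_i\leftrightarrow w_i$, $c_i\leftrightarrow d_i$.

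The key computational step is then to expand this determinant using \eqref{detident2}. Since the row index couples only to the $(x_i,c_i)$ data and the column index only to the $(w_j,d_j)$ data, the Cauchy--Binet / multilinearity manoeuvre that was used in the $A_n$ proof applies again: the determinant of sum/integrals becomes a multivariate sum/integral of a product of two determinants of reciprocal-theta type, each evaluated by \eqref{detident2}. This produces the symmetric $BC_n$ measure $\prod_i\theta_k(\pm 2z_i,\pm 2y_i)^{\mp}\prod_{i<j}\theta_k(\pm z_i\pm z_j,\pm y_i\pm y_j)$ in the Vandermonde-type factors, together with the cross factors $\theta_k(x_i\pm x_j,\dots)$ and $\theta_k(w_i\pm w_j,\dots)$ that will assemble into the prefactor $\prod_{i<j}\Gamma(t_i+t_j,a_i+a_j)$ after the substitution \eqref{specialvars2}. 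Tracking the exponential prefactors $\EXP^{-2\pi\ii(\cdots)/r}$ from \eqref{detident2} and the normalisation factors $\phi_1,\phi_2$ hidden in the $\theta_k$ of \eqref{lthtnorm} is the bookkeeping part; here I would lean on the fact that both sides of \eqref{detident2} are periodic in all variables, so these phases must combine consistently.

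Finally, invoking the $x\leftrightarrow w$, $c\leftrightarrow d$ symmetry of the determinant gives an equality between a multivariate $BC_n$ sum/integral built from the $(x_i,c_i)$ data and the same expression built from the $(w_i,d_i)$ data. Recognising the $x$-side as $I^{n}_{BC_n}$ and the $w$-side as $I^{n}_{BC_n}$ with variables reflected through $\tfrac{\sigma+\tau}{2}$ and negated discrete labels — precisely the substitution \eqref{specialvars2} together with $\tvect=\tfrac{\sigma+\tau}{2}-\vect$, $\tveca=-\veca$ — yields \eqref{transdef2} in the case $m=n$. The main obstacle I expect is not the structural argument, which parallels the $A_n$ case, but rather verifying that the accumulated phase and normalisation factors (from \eqref{detident2}, from the $\phi_1,\phi_2$ in \eqref{lthtnorm}, and from converting the auxiliary $\theta$-products into the $\Gamma$-prefactor) match exactly; the doubling of variables in the $BC_n$ measure and the antisymmetric reflection $z\to -z$, $y\to -y$ make this phase accounting considerably more delicate than in the $A_n$ proof, and it is the step most likely to require careful use of the explicit formulas \eqref{lthtnorm} and the reflection properties of $\theta_1,\theta_2$ recorded in Section~\ref{sec:defs}.
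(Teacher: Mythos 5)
Your overall architecture---an $n\times n$ determinant of univariate building blocks, the Cauchy--Binet manoeuvre turning a determinant of sum/integrals into a sum/integral of a product of two determinants, and the evaluation of those determinants by \eqref{detident2}---is the same as the paper's, but the engine you propose to drive it is wrong, and this is a genuine gap rather than a bookkeeping issue. You claim the univariate kernel is ``manifestly symmetric under the simultaneous exchange $x\leftrightarrow w$, $c\leftrightarrow d$'' and that this symmetry ``is exactly what \eqref{n1sc} encodes.'' Neither statement is true. In the $A_n$ proof the kernel \eqref{unisumintdet} really is exchange-symmetric, because the shift $z\to z-x+w$, $y\to y-c+d$ is an admissible change of variables; that trivial symmetry is the entire input, and the determinant manipulation converts it into the nontrivial transformation. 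In the $BC_n$ kernel the variables enter through $\theta_1(x\pm z,c\pm y)$ and $\theta_2(w\pm z,d\pm y)$, so no translation of $z$ is compatible with the $\pm z$ structure (the only available involution $z\to-z$, $y\to-y$ leaves the integrand invariant and exchanges nothing), and moving $x$ from a $\theta_1$-slot to a $\theta_2$-slot genuinely changes the function, since $\theta_1$ and $\theta_2$ are built from the different nomes $\q^r$ and $\p^r$. What \eqref{n1sc} actually encodes is not an exchange but the reflection-with-swap $(x_j,c_j;w_j,d_j)\mapsto\bigl(\tfrac{\sigma+\tau}{2}-w_j,-d_j;\tfrac{\sigma+\tau}{2}-x_j,-c_j\bigr)$ \emph{together with nontrivial theta-ratio prefactors}; it is precisely the $m=n=1$ case of Theorem \ref{secondthm}, is in no sense manifest, and has to be proved separately (the paper does this in Lemma \ref{bcnn1}, by reduction to the $m=n=1$ case of the $A_n$ Theorem \ref{mainthm}).

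The false premise also breaks your concluding step, and the inconsistency is visible inside your own text: a determinant invariant under a pure exchange of the $(x_i,c_i)$-data with the $(w_i,d_i)$-data could only yield an identity relating a sum/integral to the same sum/integral with those data sets relabelled; it can never produce the reflection through $\tfrac{\sigma+\tau}{2}$, the negated discrete labels, the interchange of the roles of $\theta_1$ and $\theta_2$, or the prefactor $\prod_{i<j}\Gamma(t_i+t_j,a_i+a_j)$ demanded by \eqref{transdef2}---yet your final paragraph asserts that all of these appear. The correct mechanism, which is the paper's, is to use \eqref{n1sc} \emph{entrywise}: form the $n\times n$ matrix whose $(i,j)$ entry is the left-hand side of \eqref{n1sc} with variable pairs $(x_0,x_i;c_0,c_i)$ and $(w_0,w_j;d_0,d_j)$ (the shared anchor variables $x_0$, $w_0$ are what allow row/column factors to be stripped off so that \eqref{detident2} applies); observe that by \eqref{n1sc} this matrix equals, entry by entry, the matrix of right-hand sides; and then expand the two determinants separately, the left into \eqref{detlhs} and the right into \eqref{detrhs}. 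Equality of \eqref{detlhs} and \eqref{detrhs} is then exactly \eqref{transdef2} with the variables \eqref{specialvars2}. So the needed repair is not the more careful phase accounting you anticipate at the end, but replacing the symmetry argument by the entrywise application of the already-established nontrivial identity \eqref{n1sc}.
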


\begin{proof}
Consider taking a determinant of particular instances of \eqref{n1sc}. Using \eqref{detident2}, the relevant determinant coming from the left hand side of \eqref{n1sc} may be written as
\begin{gather}
 n!\det_{1\leq i,j\leq n}\left(\sum_{y=0}^{r-1}\int_{C}\frac{\theta_1(2z,2y) \theta_2(-2z,-2y){\rm d}z}{\prod\limits_{k\in\{0,i\}}\theta_1(x_k\pm z,c_k\pm y) \prod\limits_{k\in\{0,j\}}\theta_2(w_k\pm z,d_k\pm y)}\right) \nonumber\\
\qquad {} = \sum_{y_1,\ldots,y_n=0}^{r-1}\int_{C^n}\det_{1\leq i,j\leq n} \left(\frac{\theta_1(2z_j,2y_j)}{\prod\limits_{k\in\{0,i\}}\theta_1(x_k\pm z_j,c_k\pm y_j)} \right)\nonumber\\
\qquad\quad{}\times \det_{1\leq i,j\leq n} \left(\frac{\theta_2(-2z_j,-2y_j)}{\prod\limits_{k\in\{0,i\}}\theta_2(w_k\pm z_j,d_k\pm y_j)} \right)\prod_{i=1}^n{\rm d}z_i \nonumber\\
 \qquad {} = \sum_{y_1,\ldots,y_n=0}^{r-1}\int_{C^n}\Delta(\vecz,\vecy;\vecx,\vecc;\vecw,\vecd)\prod_{i=1}^n{\rm d}z_i,\label{detlhs}
\end{gather}
where the integrand in the last line is
\begin{gather}
 \Delta(\vecz,\vecy;\vecx,\vecc;\vecw,\vecd)=\frac{\prod\limits_{i=1}^n\theta_1(2z_i,2y_i) \theta_2(-2z_i,-2y_i)}{\prod\limits_{i=1}^n\prod\limits_{j=0}^n\theta_1(x_j\pm z_i,c_j\pm y_i) \theta_2(w_j\pm z_i,d_j\pm y_i)}\label{specialintegrand2}\\
{} \times \prod_{1\leq i<j\leq n} \frac{\theta_1(x_i\pm x_j,c_i\pm c_j) \theta_2(w_i\pm w_j,d_i\pm d_j)}{\EXP^{-2\pi \ii(x_j+w_j-c_j-d_j)/r}} \theta_1(z_i\pm z_j,y_i\pm y_j) \theta_2(-z_i\pm z_j,-y_i\pm y_j).\nonumber
\end{gather}

Similarly, the corresponding determinant coming from the right hand side of \eqref{n1sc} is
\begin{gather}
 n! \det_{1\leq i,j\leq n} \left(\sum_{y=0}^{r-1}\int_{C}\frac{\theta_1(2z,2y) \theta_2(-2z,-2y){\rm d}z}{\prod\limits_{k\in\{0,j\}}\theta_1(\frac{\sigma+\tau}{2}-w_k\pm z,-d_k\pm y) \prod\limits_{k\in\{0,i\}}\theta_2(\frac{\sigma+\tau}{2}-x_k\pm z,-c_k\pm y)} \right. \nonumber\\
 \left.\qquad\quad{} \times\frac{\theta_2(x_0-x_i,c_0-c_i) \theta_2(x_0+x_i-\tau,c_0+c_i-1)}{\theta_1(x_0-x_i,c_0-c_i) \theta_1(x_0+x_i-\tau,c_0+c_i-1)}\right. \nonumber\\
 \left.\qquad\quad{} \times\frac{\theta_1(w_0-w_j,d_0-d_j) \theta_1(w_0+w_j-\sigma,d_0+d_j+1)}{\theta_2(w_0-w_j,d_0-d_j) \theta_2(w_0+w_j-\sigma,d_0+d_j+1)}\right) \nonumber\\
\qquad{} =\prod_{i=1}^n\frac{\theta_2(x_0-x_i,c_0-c_i) \theta_2(x_0+x_i-\tau,c_0+c_i-1)}{\theta_1(x_0-x_i,c_0-c_i) \theta_1(x_0+x_i-\tau,c_0+c_i-1)} \nonumber\\
 \qquad\quad{} \times\frac{\theta_1(w_0-w_i,d_0-d_i) \theta_1(w_0+w_i-\sigma,d_0+d_i+1)}{\theta_2(w_0-w_i,d_0-d_i) \theta_2(w_0+w_i-\sigma,d_0+d_i+1)} \nonumber\\
 \qquad\quad{}\times \sum_{y_1,\ldots,y_n=0}^{r-1} \int_{C^n} \Delta\left(\vecz,\vecy;\frac{\sigma+\tau}{2}-\vecw,-\vecd;\frac{\sigma+\tau}{2}-\vecx,-\vecc\right)\prod_{i=1}^n{\rm d}z_i,
\label{detrhs}\end{gather}
where $\Delta$ is def\/ined in \eqref{specialintegrand2}.

Since they were constructed from instances of \eqref{n1sc}, the expressions \eqref{detlhs} and \eqref{detrhs} must be equal, and this is exactly the transformation \eqref{transdef2} with variables \eqref{specialvars2}.
\end{proof}

Consider next the following limit of the $BC_n$ sum/integral \eqref{BCnIntDef}.

\begin{Lemma}
For $a_0=-a_1$, the limit $t_0\rightarrow-t_1$ of \eqref{BCnIntDef} is given by
\begin{gather}\label{bcnlimit}
\lim_{t_0\rightarrow -t_1}\frac{\left.I^{m}_{BC_n}(\vect,\veca)\right|_{a_0=-a_1}}{\Gamma(t_0+t_1,0) \!\!\prod\limits_{2\leq i\leq 2m+2n+3}\!\!\Gamma(t_0+t_i,-a_1+a_i) \Gamma(t_1+t_i,a_1+a_i)}=I^{m}_{BC_{n-1}}(\bar{\vect},\bar{\veca}), \!\!\!
\end{gather}
where $\vect, \veca$, are as given in \eqref{bcnvariables}, and
\begin{gather*}
\bar{\vect}=(t_2,t_3,\ldots,t_{2m+2n+3}),\qquad\bar{\veca}=(a_2,a_3,\ldots,a_{2m+2n+3}).
\end{gather*}
\end{Lemma}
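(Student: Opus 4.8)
The plan is to mirror the proof of the $A_n$ limit \eqref{anlimit}, adapted to the $BC_n$ integrand \eqref{integrand2}. As $t_0\to -t_1$ with $a_0=-a_1$, the sum/integral $I^m_{BC_n}(\vect,\veca)$ diverges because the contour $C$ is pinched, and once the explicit product in the denominator of \eqref{bcnlimit} (which contains the divergent factor $\Gamma(t_0+t_1,0)$) is divided out, the finite remainder should be exactly $I^m_{BC_{n-1}}(\bar{\vect},\bar{\veca})$. First I would locate the pinching poles: for each integration variable $z_i$ the factor $\Gamma(t_0+z_i,a_0+y_i)$ has a pole below the contour at $z_i=-t_0$ (when $y_i\equiv -a_0=a_1\bmod r$), while $\Gamma(t_1-z_i,a_1-y_i)$ has a pole above the contour at $z_i=t_1$; since $-t_0\to t_1$ these collide and pinch $C$. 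By the $z_i\to -z_i$, $y_i\to -y_i$ symmetry of \eqref{integrand2}, the mirror pair $\Gamma(t_0-z_i,a_0-y_i)$ and $\Gamma(t_1+z_i,a_1+y_i)$ pinches at $z_i=-t_1$ and contributes equally.

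Because the numerator stays finite in the limit while the denominator of \eqref{bcnlimit} diverges through $\Gamma(t_0+t_1,0)$, only the residues at these pinching poles survive. By symmetry all $n$ variables and both ($\pm t_1$) pinch points contribute the same value, giving a combinatorial factor $2n$. Together with the residue constant $\lim_{z\to -t_0}(t_0+z)\Gamma(t_0+z,0)=\ii/(2\pi\lambda)$ and the $2\pi\ii$ from deforming the contour across the pole, this turns the prefactor $\lambda^n/(2^n n!)$ of $I^m_{BC_n}$ into $\lambda^{n-1}/(2^{n-1}(n-1)!)$, which is exactly the prefactor of $I^m_{BC_{n-1}}$; the spare power of $\lambda$ cancels just as in the $A_n$ case.

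Next I would evaluate the residue at $z_n=-t_0$, $y_n=a_1$ and track the cancellations, writing $\Delta^m_{BC_n}=\Delta^m_{BC_{n-1}}(\bar{\vecz},\bar{\vecy};\bar{\vect},\bar{\veca})\, N_{\mathrm{extra}}/D_{\mathrm{extra}}$, where $\bar{\vecz}=(z_1,\ldots,z_{n-1})$, $N_{\mathrm{extra}}$ gathers the $j=0,1$ numerator factors of the surviving variables together with all $z_n$ numerator factors, and $D_{\mathrm{extra}}$ gathers $\Gamma(\pm 2z_n,\pm 2y_n)$ and the cross terms $\prod_{i<n}\Gamma(\pm z_i\pm z_n,\pm y_i\pm y_n)$. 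I expect three cancellations: first, at $z_n=t_1$, $y_n=a_1$ (using $t_0=-t_1$, $a_0=-a_1$) the four factors $\Gamma(t_1\pm z_i,a_1\pm y_i)$ and $\Gamma(-t_1\pm z_i,-a_1\pm y_i)$ from the $j=0,1$ part match the four cross terms $\Gamma(\pm z_i\pm z_n,\pm y_i\pm y_n)$ in $D_{\mathrm{extra}}$; second, the self factors $\Gamma(2t_1,2a_1)$ (from $j=1$) and $\Gamma(-2t_1,-2a_1)$ (from $j=0$) cancel the denominator $\Gamma(\pm 2z_n,\pm 2y_n)$; third, the remaining $j\geq 2$ numerator factors $\Gamma(t_j+t_1,a_j+a_1)\Gamma(t_j-t_1,a_j-a_1)$ are, in the limit, exactly $\prod_{i\geq 2}\Gamma(t_1+t_i,a_1+a_i)\Gamma(t_0+t_i,-a_1+a_i)$, the explicit product divided out in \eqref{bcnlimit}, while the one leftover factor $\Gamma(t_1-z_n,a_1-y_n)|_{z_n=-t_0}=\Gamma(t_0+t_1,0)$ is the divergent factor also divided out. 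What survives is precisely $I^m_{BC_{n-1}}(\bar{\vect},\bar{\veca})$.

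The hard part will be the bookkeeping of the $BC_n$ pole structure rather than any conceptual step. One must confirm that only the leading poles described above actually pinch $C$, while the subleading towers of poles listed in the contour prescription for \eqref{BCnIntDef} remain on one side, so that no spurious residues enter. One must also check that the self-pairing factors $\Gamma(\pm 2z_i,\pm 2y_i)$ and cross-pairing factors $\Gamma(\pm z_i\pm z_j,\pm y_i\pm y_j)$ --- which have no analogue in the $A_n$ integrand --- neither obstruct the three cancellations nor change the overall constant. Finally, fixing the sign through the orientation of the contour deformation, and confirming the factor $2n$ produced by the Weyl symmetry, are the points where this argument genuinely departs from that of \eqref{anlimit}.
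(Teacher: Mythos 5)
Your proposal is correct and follows essentially the same route as the paper's own proof: deform the contour across the pinching poles at $z_i=-t_0$ (for $y_i\equiv a_1$) and $z_i=+t_0$ (for $y_i\equiv -a_1$), collect the factor $2n$ from the symmetry of the integrand, use the residue constant $\lim_{z\to -t_0}(t_0+z)\Gamma(t_0+z,0)=\ii/(2\pi\lambda)$, and cancel the surviving factors against the explicit product in the denominator of \eqref{bcnlimit}. Your explicit bookkeeping of the three cancellations (the cross terms $\Gamma(\pm z_i\pm z_n,\pm y_i\pm y_n)$, the self factors $\Gamma(\pm 2z_n,\pm 2y_n)$, and the $j\geq 2$ numerator factors) simply spells out what the paper states in one line, and it checks out.
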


Note here that the value some of the other variables in $\vect$, $\veca$, should also be changed, if the balancing condition \eqref{balancing2} is to be satisf\/ied both before and after taking the limit \eqref{bcnlimit}.

\begin{proof}
Similarly to \eqref{anlimit}, the contour needs be deformed to cross the poles at
\begin{alignat*}{3}
& z_i = -t_0\ (\mbox{mod }2r), \qquad && \text{for} \quad y_i=-a_0 \ (\mbox{mod }r)=+a_1 \ (\mbox{mod }r), & \\
& z_i = +t_0\ (\mbox{mod }2r),\qquad && \text{for} \quad y_i = +a_0 \ (\mbox{mod }r)=-a_1 \ (\mbox{mod }r),&
\end{alignat*}
for $i=0,1,\ldots,n$. The numerator remains f\/inite in the limit, and the only non-zero contribution to~\eqref{bcnlimit} comes from the residues at the above poles, which by symmetry each have the same value, resulting in a factor $2n$. At the poles, the terms in the denominator of \eqref{bcnlimit}, cancel the required terms in $I^m_{BC_n}$ to give the integrand of $I^m_{BC_{n-1}}$. The remaining factors come from using $\lim\limits_{z_0\rightarrow -t_0}(t_0+z_0)\Gamma(t_0+z_0,0)=i/(2\pi\lambda)$, and similarly for $z_0\rightarrow t_0$, in calculating the residues.
\end{proof}

The proof of Theorem \ref{secondthm}, now follows analogously to Theorem \ref{mainthm}, however with $\mathcal{C}_{mn}$ now consisting of the points of the type $(\{\alpha_0,\beta_0\},\{\alpha_1,\beta_1\},\ldots,\{\alpha_{m+n+1},\beta_{m+n+1}\})$, where
\begin{gather*}
\alpha_i=t_{2m+2n+2-2i}+t_{2m+2n+3-2i},\qquad \beta_i=a_{2m+2n+2-2i}+a_{2m+2n+3-2i},
\end{gather*}
where $i=0,1,\ldots, m+n+1$.

The limit \eqref{bcnlimit} implies that if $(\{\alpha_0,\beta_0\},\{\alpha_1,\beta_1\},\ldots,\{\alpha_{m+n+1},\beta_{m+n+1}\})\in \mathcal{C}_{mn}$, then
\begin{gather}
(\{\alpha_0+\alpha_1,\beta_0+\beta_1\},\{\alpha_2,\beta_2\},\ldots,\{\alpha_{m+n+1},\beta_{m+n+1}\})\in \mathcal{C}_{m(n-1)}, \nonumber\\
(\{\alpha_0+\alpha_1-(\sigma+\tau),\beta_0+\beta_1\},\{\alpha_2,\beta_2\},\ldots,\{\alpha_{m+n+1},\beta_{m+n+1}\})\in \mathcal{C}_{(m-1)n}.\label{abovelimits}
\end{gather}

Then starting from the special case \eqref{specialvars2}, corresponding to the point
\begin{gather*}
(\{\sigma,-1\},\{\sigma,-1\},\ldots,\{\sigma,-1\},\{\tau,1\},\{\tau,1\},\ldots,\{\tau,1\})\in \mathcal{C}_{nn},
\end{gather*}
the above limits \eqref{abovelimits} imply that analogous relations to \eqref{contrelations2}, \eqref{contrelations} hold for Theorem \ref{secondthm}. Then following the same argument of the $A_n$ case, this implies that for $n\rightarrow\infty$ we have a dense set of points for the $\alpha_i$ in $\mathcal{C}_{mn}$ (for any $\beta_i$), and thus Theorem~\ref{secondthm} holds in general.

\section{Application to supersymmetric gauge theories}\label{sec:SUSY}

We now come to some applications of the $A_n$ and $BC_n$ elliptic hypergeometric sum/integral transformation formulas, given in Theorems~\ref{mainthm} and \ref{secondthm} respectively.

The f\/irst application, discussed in this section, is in the area of the supersymmetric gauge theories: quantitative checks of the Seiberg duality \cite{Seiberg:1994pq} for supersymmetric QCD (SQCD) with gauge groups ${\rm SU}(N_c)$ and ${\rm Sp}(2N_c)$,
at the level of the lens index \cite{Benini:2011nc}. This generalises the similar considerations for the $r=1$ case \cite{Dolan:2008qi}, to more general cases with $r>1$.

\subsection[${\rm SU}(N_c)$]{$\boldsymbol{{\rm SU}(N_c)}$}

Let us f\/irst consider the gauge group ${\rm SU}(N_c)$. The Seiberg duality claims an equivalence between the following two theories, electric and magnetic.

{\bf Electric theory.}
The electric theory is the ${\rm SU}(N_c)$ SQCD with $N_f$ f\/lavors. In addition to an $\mathcal{N}=1$ vector multiplet $V$ associated with the ${\rm SU}(N_c)$ gauge group, we also have $\mathcal{N}=1$ chiral multiplets $q$ and $\bar{q}$, which are in the fundamental and anti-fundamental representation of~${\rm SU}(N_c)$ gauge symmetry, respectively.
This theory has no superpotential, $W_{\rm electric}=0$.

This theory has ${\rm SU}(N_f)_L\times {\rm SU}(N_f)_R \times {\rm U}(1)_B \times {\rm U}(1)_R$ non-anomalous f\/lavor symmetries,
where ${\rm U}(1)_R$ is the $R$-symmetry. The charge assignment of the f\/ields under the gauge/f\/lavor symmetries
is listed in Table~\ref{SUE}.

\begin{table}[htbp]\centering
\caption{Electric theory for ${\rm SU}(N_c)$ Seiberg duality.}\label{SUE}
\vspace{2mm}

\begin{tabular}{c||c|c|c|c|c}
 & ${\rm SU}(N_c)$ & ${\rm SU}(N_f)_L$ &${\rm SU}(N_f)_R$& ${\rm U}(1)_B$ & ${\rm U}(1)_R$ \\
 \hline
 \hline
$q$ & $\square$ & $\square$ & $\bm{1}$ &$1$ &$r_q=1-N_c/N_f$ \\
 \hline
$\bar{q}$ & $\overline{\square}$ & $\bm{1}$ & $\overline{\square}$ &$-1$ & $r_q=1-N_c/N_f$ \\
 \hline
 $V$ & adj.& $\bm{1}$ & $\bm{1}$&$0$ &$1$
\end{tabular}
\end{table}

To write down the lens index for this theory, let us prepare continuous/discrete fugacities for each f\/lavor symmetry. We have $(\vecz, \vecy)$ for ${\rm SU}(N_c)$ gauge symmetry, where both $\vecz$ and $\vecy$ are $N_c$-component vectors and the components of $\vecz$ ($\vecy$) are
complex parameters (integers taking values in $\mathbb{Z}_r$); these represent the discrete holonomies of the gauge f\/ields along the $S^1$ and the torsion cycle of $S^3/\mathbb{Z}_r$ of the geometry $S^1\times S^3/\mathbb{Z}_r$. Similarly, we have $(\bvect, \bveca)$ and $(\bvecs, \bvecb)$ for ${\rm SU}(N_f)_L$ and
${\rm SU}(N_f)_R$ f\/lavor symmetries. Since these correspond to $\rm SU$ symmetries (and not~$\rm U$), we have the constraints
\begin{gather*}
\sum_{i=1}^{N_c-1} z_i=\sum_{i=1}^{N_f-1} \bar{t}_i=\sum_{i=1}^{N_f-1} \bar{s}_i=0 , \qquad
\sum_{i=1}^{N_c-1} y_i \equiv \sum_{i=1}^{N_f-1} \bar{a}_i \equiv \sum_{i=1}^{N_f-1} \bar{b}_i\equiv 0\quad (\textrm{mod } r) .
%\label{sum}
\end{gather*}
We also need a pair $(B, n_B)$ for the ${\rm U}(1)_B$ symmetry, where again $B$ is a continuous parameter and $n_B$ is an integer. Finally, $R$-symmetry plays a~distinguished role and is associated with the pair $(R, 0)$, where $R$ is determined by $\sigma$ and $\tau$ as
\begin{gather*}
R=\frac{\tau+\sigma}{2} .%\label{R_def}
\end{gather*}
Note that integer-valued fugacity for $R$-symmetry is absent.

The lens index of this electric theory is then written as \cite{Benini:2011nc,Razamat:2013opa}
\begin{gather}
(\mathcal{I}_e)^{N_f}_{{\rm SU}(N_c)}(\vecz, \vecy; \bvect, \bveca; \bvecs, \bvecb; B, n_B)\nonumber\\
\qquad{}= \sum_{\substack{y_0,\ldots,y_{n-1}=0 \\ \sum\limits_{i=0}^{n}y_i=0}}^{r-1}\int_{\sum\limits_{i=0}^nz_i=0}
\mathcal{I}_V(\vecz, \vecy)
\mathcal{I}_q(\vecz, \vecy; \bvect, \bveca; \bvecs, \bvecb; B, n_B) .\label{I_SUE}
\end{gather}
The variables $(\vecz, \vecy)$ are integrated/summed over since they are associated with the gauge symmetry.
Inside the integrand, $\mathcal{I}_V$ and $\mathcal{I}_C$ are the one-loop contributions from the f\/ields
$V$ and~$q$,~$\bar{q}$, respectively, and are given by
\begin{gather*}
 \mathcal{I}_{V}(\vecz, \vecy)=
\frac{\lambda^{N_c-1}}{N_c!} \frac{1}{\prod\limits_{0\leq i<j\leq N_c-1}\Gamma(z_i-z_j,y_i-y_j)\,\Gamma(z_j-z_i,y_j-y_i)}, \\
 \mathcal{I}_{q}(\vecz, \vecy; \bvect, \bveca; \bvecs, \bvecb; B, n_B)  \\
 \qquad{} =\prod_{i=0}^{N_c-1} \prod_{j=0}^{N_f-1}
\overbrace{\Gamma(\bar{t}_j+z_i+B+r_q R,\bar{a}_j+y_i+n_B )}^{q}\,\overbrace{\Gamma(\bar{s}_j-z_i-B+r_q R,\bar{b}_j-y_i-n_B)}^{\bar{q}} .
\end{gather*}
As indicated above, the gamma function factors inside the expression for $\mathcal{I}_{q}$ are contributions from the quarks $q$, $\bar{q}$, and their symmetry charges are ref\/lected in the arguments of gamma functions. For example, the ${\rm U}(1)_B$ and ${\rm U}(1)_R$ charges of the quark $q$ are $+1$ and $r_q$, and hence the combination $B+ r_q R$ appears inside the gamma functions for the quark $q$.

Let us def\/ine the unbarred vectors $\vect$, $\vecs$, $\veca$, $\vecb$ by adding trace parts to the barred vectors $\bvect$, $\bvecs$, $\bveca$, $\bvecb$:
\begin{gather}
\vect =\bvect+ r_q R+B , \qquad\vecs=\bvecs+r_qR -B ,\qquad\veca=\bveca +n_B ,\qquad\vecb=\bvecb -n_B .\label{tsab}
\end{gather}
After this rewriting $B$ and $n_B$ are now included in the trace parts of $\vect$, $\vecs$, $\veca$, $\vecb$.
The one-loop determinant for the quark f\/ields $q$, $\bar{q}$ is written as
\begin{gather*}
\mathcal{I}_{q}(\vecz, \vecy; \vect, \veca; \vecs, \vecb)=\prod_{i=0}^{N_c-1} \prod_{j=0}^{N_f-1} \Gamma(t_j+z_i,a_j+y_i)\,\Gamma(s_j-z_i, b_j-y_i) ,
\end{gather*}
and the constraints \eqref{sum_2_} are written as
\begin{gather}
 \sum_{i=0}^{N_c-1} z_i=0 ,
\qquad \sum_{i=0}^{N_f-1} (t_i+s_i)=2 N_f r_q R=(N_f-N_c) (\tau+\sigma) , \nonumber\\
 \sum_{i=0}^{N_c-1} y_i \equiv \sum_{i=0}^{N_f-1} (a_i+b_i) \equiv 0 \quad(\textrm{mod } r) .\label{sum_2}
\end{gather}
We can now easily verify that the lens index \eqref{I_SUE} coincides with the sum/integral \eqref{AnIntDef} with $Z=Y=0$, with the identif\/ication $n=N_c-1$ and $m=N_f-N_c-1$:
\begin{gather}
(\mathcal{I}_e)^{N_f}_{{\rm SU}(N_c)}(\vect, \veca; \vecs, \vecb) =I^{N_f-N_c-1 }_{A_{N_c-1}}(0,0\,|\, \vect, \veca; \vecs, \vecb) .
\label{IE_A}
\end{gather}
Notice that the constraint \eqref{sum_2} also matches with \eqref{balancing} under this parameter identif\/ication.

{\bf Magnetic theory.} Let us next come to the magnetic theory. This is ${\rm SU}(\tilde{N}_c)$ SQCD with~$N_f$ f\/lavors, where the $N_c$ of the electric theory is replaced by
\begin{gather*}
\tilde{N}_c=N_f-N_c .
\end{gather*}
We have an ${\rm SU}(N_f-N_f)$ $\mathcal{N}=1$ vector multiplet $\tilde{V}$ and $N_f$ chiral multiplets $Q$ and $\bar{Q}$. In addition we also have a meson f\/ield $M$, and the non-trivial superpotential involving it:
\begin{gather*}
W_{\rm magnetic}= Q M \bar{Q} .
\end{gather*}

The f\/lavor symmetry of the magnetic theory is the same as that for the electric theory. The charge assignments of the f\/ields under gauge/f\/lavor symmetries is summarized in Table~\ref{SUM}. Note in particular that the ${\rm U}(1)_R$-charge $r_Q$ of the dual quarks $Q$, $\bar{Q}$ is the same as that for the quarks of the electric theory, under the substitution $N_c\to \tilde{N}_c$:
\begin{gather}
r_Q=r_q |_{N_c\to \tilde{N}_c} . \label{rqQ}
\end{gather}

\begin{table}[htbp]\centering
\caption{Magnetic theory for ${\rm SU}(N_c)$ Seiberg duality.}\label{SUM}
\vspace{2mm}
\begin{tabular}{c||c|c|c|c|c}
 & ${\rm SU}(\tilde{N}_c)$ & ${\rm SU}(N_f)_L$ &${\rm SU}(N_f)_R$& ${\rm U}(1)_B$ & ${\rm U}(1)_R$ \\
 \hline
 \hline
$Q$ & $\square$ & $\overline{\square}$ & $\bm{1}$ &$r_B=N_c/\tilde{N}_c$ &$r_Q=N_c/N_f$ \\
 \hline
$\bar{Q}$ & $\overline{\square}$ & $\bm{1}$ & $\square$ &$-r_B=-N_c/\tilde{N}_c$ & $r_Q=N_c/N_f$ \\
 \hline
 $M$ & $\bm{1}$ & $\square$ & $\overline{\square}$ &$0$ & $2r_q=2\left(1-N_c/N_f\right)$ \\
 \hline
 $\tilde{V}$ & adj.& $\bm{1}$ & $\bm{1}$&$0$ & $1$
\end{tabular}
\end{table}

Let us now come to the lens index of the theory. This is similar to that of the electric theory, but there are some important dif\/ferences. First we need to change $N_c\to N_f-N_c$. This is also ref\/lected in the change of the $R$-charge $r_q\to r_Q$, as stated in \eqref{rqQ}. Second, compared with the electric case we need to invert the signs of the ${\rm SU}(N_f)_L\times {\rm SU}(N_f)_R$ fugacities $(\bvect, \bvecs, \bveca, \bvecb)$; for example the electric quark $q$ was in the fundamental representation under the ${\rm SU}(N_f)_L$ symmetry, whereas the magnetic quark $Q$ is in the anti-fundamental representation under the same symmetry. We also need to take into account the dif\/ference in the ${\rm U}(1)_B$-symmetry charges, which has the ef\/fect of changing $B\to r_B B$. Finally, we also need to take into account contributions from mesons, which do not exist in the electric theory:
\begin{gather*}
\mathcal{I}_M(\bvect, \bveca; \bvecs, \bvecb) = \prod_{i,j=1}^{N_f} \Gamma(\bar{t}_i+\bar{s}_j +2 r_q R, \bar{a}_i+\bar{b}_j) .
\end{gather*}
By combining all the ingredients, we have
\begin{gather}
(\mathcal{I}_m)^{N_f}_{{\rm SU}(N_c)}(\bvect, \bveca; \bvecs, \bvecb)
=\mathcal{I}_M (\bvect, \bveca; \bvecs, \bvecb)
(\mathcal{I}_e)^{N_f}_{{\rm SU}(\tilde{N}_c)}(-\bvect, -\bveca; -\bvecs, -\bvecb; r_B B, r_B n_B) .
\label{IM_tmp}
\end{gather}

There is one subtlety here. The formula \eqref{IM_tmp} does not make sense as it is, since $r_B$ and hence $r_B n_B$ is in general not an integer.
This is related to the fact that charges under the ${\rm U}(1)$ symmetry (such as the ${\rm U}(1)_B$ symmetry as discussed here)
is not quantized~-- this means that the discrete holonomy for the torsion cycle of $S^1\times S^3/\mathbb{Z}_r$ can be turned on only if the charges of all the f\/ields under the ${\rm U}(1)$ symmetry is an integer. For our purposes, we can simply choose to take the integer parameter $n_B$ to be
\begin{gather}
n_B \in \tilde{N}_c \mathbb{Z} .\label{quantized}
\end{gather}
Equivalently, we choose the normalization of ${\rm U}(1)_B$ such that the charges of all the f\/ields under this symmetry are integers.

We can now include the trace part into the def\/initions of the vectors, as in the case of the electric theory:
\begin{gather}
\tvect =-\bvect+ r_QR +r_B B , \qquad
\tvecs=-\bvecs+r_QR-r_B B ,\nonumber\\
\tveca=-\bveca +r_B n_B ,\qquad
\tvecb=-\bvecb -r_B n_B ,\label{tsab_2}
\end{gather}
and these variables obey the constraints
\begin{gather}
 \sum_i z_i=0 ,\qquad \sum_i (\tilde{t}_i+\tilde{s}_i)=N_f r_Q (\tau+\sigma)=N_c (\tau+\sigma) , \nonumber\\
 \sum_i y_i \equiv \sum_i (a_i+b_i) \equiv 0 \quad(\textrm{mod } r) . \label{sum_2_}
\end{gather}
The magnetic lens index \eqref{IM_tmp} now reads
\begin{gather}
(\mathcal{I}_m)^{N_f}_{{\rm SU}(N_c)}
 =\left( \prod_{i,j=1}^{N_f}
\Gamma(t_i+s_j, a_i+b_j) \right)
(\mathcal{I}_e)^{N_f}_{{\rm SU}(\tilde{N}_c)}(\tvect, \tveca; \tvecs, \tvecb) \nonumber \\
\hphantom{(\mathcal{I}_m)^{N_f}_{{\rm SU}(N_c)}}{} = \left( \prod_{i,j=1}^{N_f}
\Gamma(t_i+s_j, a_i+b_j) \right) I_{A_{N_f-N_c-1}}^{N_c-1} (0,0\,|\,\tvect, \tveca; \tvecs, \tvecb) .
\label{IM_A}
\end{gather}

By comparing \eqref{tsab} and \eqref{tsab_2}, we can directly express the variables $(\tvect, \tvecs, \tveca, \tvecb)$ in terms of variables $(\vect, \vecs, \veca, \vecb)$ without tildes:
\begin{gather}
\tvect=\frac{T}{N_f-N_c} -\vect , \qquad
\tvecs=\frac{S}{N_f-N_c}-\vecs,\nonumber\\
\tveca=\frac{A}{N_f-N_c}-\veca,\qquad\tvecb=\frac{B}{N_f-N_c} -\vecb,\label{t_trans_SU}
\end{gather}
with
\begin{gather*}
 T=\sum_{i=1}^{N_f} t_i= N_f B+ (N_f-N_c) R , \qquad S=\sum_{i=1}^{N_f} t_i= -N_f B+ (N_f-N_c) R , \\
 A=\sum_{i=1}^{N_f} a_i=N_f n_B, \qquad B=\sum_{i=1}^{N_f} b_i=- N_f n_B .
\end{gather*}
Notice that $A/(N_f-N_c)$ and $B/(N_f-N_c)$ are integers thanks to the quantization condition~\eqref{quantized}.

{\bf Duality.} We are now ready to state the equality of the lens indices of the electric and magnetic theory, which simply states
\begin{gather}
(\mathcal{I}_e)^{N_f}_{{\rm SU}(N_c)}= (\mathcal{I}_m)^{N_f}_{{\rm SU}(N_c)} .
\label{IEM_A}
\end{gather}
In view of \eqref{IE_A} and \eqref{IM_A} we have by now shown
that this equation coincides with the transformation formula \eqref{transdef},
with $Z=Y=0$. Indeed, the relations between parameters as stated in \eqref{t_trans_SU} can be seen to coincide with the transformation rules \eqref{transrule}, after a straightforward change of variables, as follows from the quantization condition \eqref{quantized}.
This concludes our proof of the relation \eqref{IEM_A}.

The identity \eqref{IEM_A} has previously been checked only up to certain orders in a series-expansion
with respect to fugacities. The results of this section settles the problem of proving the identity mathematically in general, and
provides the one of the most
elaborate quantitative check of the Seiberg duality known to date.

\subsection[${\rm Sp}(2N_c)$]{$\boldsymbol{{\rm Sp}(2N_c)}$}

Let us next discuss Seiberg duality for ${\rm Sp}(2N_c)$ gauge groups.

{\bf Electric theory.} The electric theory is ${\rm Sp}(2N_c)$ theory\footnote{The convention for the ${\rm Sp}(2N_c)$ gauge group here is that its argument is always even, for example ${\rm Sp}(2)={\rm SU}(2)$.} with $N_f$ f\/lavors, which in practice means that we have a matter f\/ield $q$
in the fundamental $2N_c$-dimensional representation of~${\rm Sp}(2N_c)$. The superpotential is absent, as in the ${\rm SU}(N_c)$ case. This theory has ${\rm SU}(2N_f)\times {\rm U}(1)_R$ global symmetry, and the charge assignment of the f\/ields are listed in Table~\ref{SpE}.

\begin{table}[htbp]\centering
\caption{Electric theory for ${\rm Sp}(2N_c)$ Seiberg duality.}\label{SpE}
\vspace{2mm}

\begin{tabular}{c||c|c|c}
 & ${\rm Sp}(2N_c)$ & ${\rm SU}(2 N_f)$ & ${\rm U}(1)_R$ \\
 \hline
 \hline
 $q$ & $\square$ & $\square$ & $r'_q=1-(N+1)/N_f$ \\
 \hline
 $V$ & adj.& $1$ & $1$
\end{tabular}
\end{table}

As in the case of ${\rm SU}(N_c)$ theory, we need fugacities for the lens index, which are $(\vecz, \vecy)$ for ${\rm SU}(N_c)$ gauge symmetry and $(\bvect, \veca)$ for ${\rm SU}(2N_f)$ f\/lavor symmetry, with the constraints
\begin{gather}
\sum_{i=1}^{N_c-1} z_i=\sum_{i=1}^{N_f-1} \bar{t}_i=0 , \qquad
\sum_{i=1}^{N_c-1} y_i\equiv \sum_{i=1}^{N_f-1} a_i \equiv 0\quad (\textrm{mod } r) .
\label{sum_Sp}
\end{gather}
The lens index of this electric theory is then written as
\begin{gather}
(\mathcal{I}_e)^{N_f}_{{\rm Sp}(2 N_c)}(\bvect, \veca)
=\sum_{\substack{y_0,\ldots,y_{n-1}=0 \\ \sum\limits_{i=0}^{n}y_i=0}}^{r-1}\int_{\sum\limits_{i=0}^nz_i=0}
\mathcal{I}_V(\vecz, \vecy)
\mathcal{I}_q(\vecz, \vecy; \bvect, \veca) ,
\label{I_SpE}
\end{gather}
where $\mathcal{I}_V$ and $\mathcal{I}_C$ are the one-loop contributions from the f\/ields $V$ and $q$, respectively, and are given by
\begin{gather*}
 \mathcal{I}_{V}(\vecz, \vecy)=
\frac{\lambda^{N_c}}{2^{N_c} N_c!}
\frac{1}{\prod\limits_{i=0}^{N_c-1}\Gamma(\pm 2 z_i, \pm 2 y_i )}
\frac{1}{\prod\limits_{0\leq i<j\leq N_c-1}\Gamma(\pm z_i \pm z_j, \pm y_i \pm y_j)} , \\
 \mathcal{I}_{q}(\vecz, \vecy; \bvect, \veca) =\prod_{i=0}^{N_c-1} \prod_{j=0}^{2N_f-1}
\Gamma(\bar{t}_j\pm z_i+r'_q R, a_j\pm y_i ) , %\label{Iq}
\end{gather*}
where the factor $2^{N_c} N_c!$ comes from the Weyl group for $C_{N_c}={\rm Sp}(2N_c)$.

We can def\/ine the unbarred vector $\bvect$ by
\begin{gather*}
\vect =\bvect+ r_q R , \qquad \veca=\bveca , %\label{tsab_Sp}
\end{gather*}
and the constraints \eqref{sum_Sp} are written as
\begin{gather*}
\sum_{i=0}^{N_c-1} z_i=0 ,
\qquad \sum_{i=0}^{2N_f-1} t_i=2 N_f r'_q R=(N_f-N_c-1) (\tau+\sigma) , \\
\sum_{i=0}^{N_c-1} y_i\equiv \sum_{i=0}^{N_f-1} a_i \equiv 0 \quad(\textrm{mod } r) .
%\label{sum_2_Sp}
\end{gather*}

We can now easily verify that the lens index \eqref{I_SpE} coincides with the sum/integral \eqref{BCnIntDef}, with the identif\/ication $n=N_c$ and $m=N_f-N_c-2$:
\begin{gather}
(\mathcal{I}_e)^{N_f}_{{\rm Sp}(2N_c)}(\vect, \veca) =I^{N_f-N_c-2 }_{BC_{N_c}}(\vect, \veca) .\label{IE_BC}
\end{gather}
Notice that the constraint \eqref{sum_2} also matches with \eqref{balancing2}
under this parameter identif\/ication.

{\bf Magnetic theory.} Let us next discuss the magnetic theory. We will be brief here since the analysis is similar to previous cases.

The f\/ields, symmetries and charge assignments are summarized in Table~\ref{SpM}. Similar to the case of the ${\rm SU}(N_c)$ theory we have a meson f\/ield $M$ with the superpotential $W=M Q Q$, except now $M$ is in the anti-symmetric representation under the f\/lavor ${\rm SU}(2N_f)$ symmetry.
Another important dif\/ference from the ${\rm SU}(N_c)$ case is that the value of $N_c$ in the magnetic theory is given by
\begin{gather*}
\tilde{N}_c=N_f-N_c-2 .
\end{gather*}

\begin{table}[htbp]\centering
\caption{Magnetic theory for ${\rm Sp}(2N_c)$ Seiberg duality.}\label{SpM}
\vspace{2mm}

\begin{tabular}{c||c|c|c}
 & ${\rm Sp}(2\tilde{N}_c)$ & ${\rm SU}(2 N_f)$ & ${\rm U}(1)_R$ \\
 \hline
 \hline
 $Q$ & $\square$ & $\overline{\square}$ & $r'_Q=(N_c+1)/N_f$ \\
 \hline
 $\tilde{V}$ & adj.& $1$ & $1$\\
 \hline
 $M$ & $1$& $\textrm{anti-symm.}$ & $2(\tilde{N}_c+1)/N_f$
\end{tabular}
\end{table}

By repeating the similar manipulations as in the previous cases, we obtain the lens index of the magnetic theory to be
\begin{gather}
(\mathcal{I}_m)^{N_f}_{{\rm Sp}(2N_c)}(\tvect, \veca) =
\left( \prod_{i,j=1}^{N_f}
\Gamma(t_i+t_j, a_i+a_j) \right)
I^{N_c}_{BC_{N_f-N_c-2}}(\vect, -\veca) , \label{IM_BC}
\end{gather}
where we def\/ined
\begin{gather*}
\tvect = -\bvect +r'_Q R = -\vect +(r'_q+r'_Q) R=-\vect +\frac{\tau+\sigma}{2} .
\end{gather*}

{\bf Duality.} We can now easily check from \eqref{IE_BC} and \eqref{IM_BC} that the duality relation
\begin{gather*}
(\mathcal{I}_e)^{N_f}_{{\rm Sp}(2N_c)}=(\mathcal{I}_m)^{N_f}_{{\rm Sp}(2N_c)},
%\label{IEM_BC}
\end{gather*}
reduces to the $BC_n$ elliptic hypergeometric sum/integral transformation formula as stated in~\eqref{transdef2}. This is what we wanted to show.

It would be interesting to prove identities for more general Seiberg dualities, for more general gauge groups and more general matters (for example matters in spinor representations). The case of Seiberg dualities for ${\rm SO}(N)$ and ${\rm Spin}(N)$ gauge groups is currently under investigation.

\section{Application to integrable lattice models}\label{sec:YBE}

Let us now come to our second application for the $A_n$ elliptic hypergeometric sum/integral transformation formula \eqref{transdef}, this time to integrable lattice models.

A particular case of Theorem \ref{mainthm} (when $m=n$) is equivalent to an identity in statistical mechanics known as the star-star relation. The star-star relation is a particular condition of integrability for lattice models of statistical mechanics, which implies that the Boltzmann weights of the model satisfy the Yang--Baxter equation. This in turn implies that the row-to-row transfer matrices of the lattice model commute in pairs \cite{Baxter:1997tn}, allowing for an exact solution of the model.

The $r=1$ case of Theorem \ref{mainthm} \cite{RainsT} was previously shown \cite{Bazhanov:2013bh} to imply a multi-spin solution of the star-star relation obtained by Bazhanov and Sergeev \cite{Bazhanov:2011mz}. The general $r\geq1$ solution of the star-star relation corresponding to Theorem \ref{mainthm} was discovered by the second author \cite{Yamazaki:2013nra} using the gauge/YBE correspondence \cite{Terashima:2012cx,Yamazaki:2012cp,Yamazaki:2013nra} between 2d integrable lattice models and 4d $\mathcal{N}=1$ supersymmetric gauge theories. In this section this lattice model and corresponding star-star relation are introduced, and it is explicitly shown that the latter star-star relation reduces to the $m=n$ case of Theorem \ref{mainthm}.

\subsection{Square lattice model}

Let us f\/irst def\/ine the lattice model of statistical mechanics. Denote the square lattice by~$L$, consisting of a set of vertices and a set of edges, the latter denoted respectively by $V(L)$ and~$E(L)$. An edge $(ij)\in E(L)$ connects two vertices $i,j\in V(L)$.

The square lattice $L$ is shown in Fig.~\ref{fig-lattice}, along with a directed rapidity lattice, which will be denoted by $\mathscr{L}$. The rapidity lattice $\mathscr{L}$ is made up of four dif\/ferent types of directed rapidity lines, which are distinguished by their orientation (horizontal or vertical), and by whether they are solid or dashed lines. The four types of rapidity lines are labelled by four real valued rapidity variables $u$, $u'$, $v$, $v'$.

\begin{figure}[htb]
\centering
\begin{tikzpicture}[scale=1]
\draw[-] (-0.5,-0.5)--(3.5,3.5);
\draw[-] (-0.5,3.5)--(3.5,-0.5);
\draw[-] (-0.5,1.5)--(1.5,3.5)--(3.5,1.5)--(1.5,-0.5)--(-0.5,1.5);
\draw[-] (-4.5,-0.5)--(-0.5,3.5);
\draw[-] (-4.5,3.5)--(-0.5,-0.5);
\draw[-] (-4.5,1.5)--(-2.5,3.5)--(-0.5,1.5)--(-2.5,-0.5)--(-4.5,1.5);
\foreach \x in {-4,-2,...,2}{
\draw[->,very thin] (\x,-1) -- (\x,4);
\fill[white!] (\x,-1) circle (0.08pt)
node[below=3.1pt]{\color{black}\small $v$};}
\foreach \x in {-3,-1,...,3}{
\draw[->,thick,dashed] (\x,-1) -- (\x,4);
\fill[white!] (\x,-1) circle (0.08pt)
node[below=0.05pt]{\color{black}\small $v'$};}
\foreach \y in {1,3}{
\draw[->,thick,dashed] (-5,\y) -- (4,\y);
\fill[white!] (-5,\y) circle (0.08pt)
node[left=0.05pt]{\color{black}\small $u'$};}
\foreach \y in {0,2}{
\draw[->,very thin] (-5,\y) -- (4,\y);
\fill[white!] (-5,\y) circle (0.08pt)
node[left=2.9pt]{\color{black}\small $u$};}
\foreach \y in {-0.5,1.5,3.5}{
\foreach \x in {-4.5,-2.5,...,3.5}{
\filldraw[fill=black,draw=black] (\x,\y) circle (3.0pt);}}
%This line draws half the circles in white\filldraw[fill=white,draw=black] (\x,\y) circle (3.0pt);}}
\foreach \y in {0.5,2.5}{
\foreach \x in {-3.5,-1.5,...,2.5}{
\filldraw[fill=black,draw=black] (\x,\y) circle (3.0pt);}}
\end{tikzpicture}

\caption{The square lattice $L$ drawn diagonally, and its medial rapidity lattice $\mathscr{L}$, consisting of directed lines labelled by the four rapidity variables $u$, $u'$, $v$, $v'$.}
\label{fig-lattice}
\end{figure}
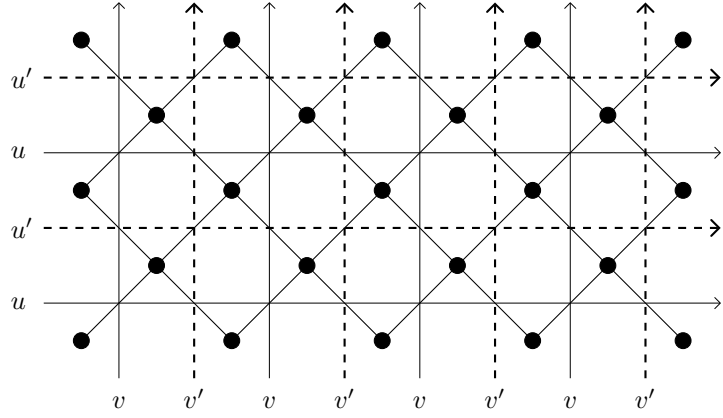

Spin variables $\sigma_i$ are assigned to vertices $i\in V(L)$, and take values
\begin{gather*}
\sigma_i=\{(x_{i,1},m_{i,1}),(x_{i,2},m_{i,2}),\ldots,(x_{i,n},m_{i,n})\},
\end{gather*}
where $n=1,2,\ldots$, and
\begin{gather*}
x_{i,j}\in[0,2\pi),\qquad m_{i,j}\in\mathbb{Z}_r,\qquad j=1,2,\ldots,n,
\end{gather*}
are respectively the real and discrete valued components of a spin $\sigma_i$. These spin components are subject to the constraints
\begin{gather}\label{measure}
\sum_{a=1}^nx_{i,a}=0,\qquad\sum_{a=1}^nm_{i,a}=0.
\end{gather}
Taking these constraints into account, the integration measure for a spin $\sigma_i$ is denoted by
\begin{gather}\label{measure2}
\int {\rm d}\sigma_i:= \hspace{-7mm} \sum_{\substack{m_{i,1}=0\\ \hspace{1.3cm}\sum\limits_{j=1}^{n}m_{i,j}=0}}^{r-1}\hspace{-1.0cm}\cdots \sum_{m_{i,n-1}=0}^{r-1}\;\;\int_{\substack{~\\[0.45cm] \hspace{-1.25cm}0\\ \sum\limits_{j=1}^{n}x_{i,j}=0}}^1 \hspace{-1.0cm}\cdots \int^1_0\quad \prod_{k=1}^{n-1}{{\rm d}x_{i,k}}.
\end{gather}

The crossing of rapidity lines on edges $(ij)\in E(L)$ in Fig.~\ref{fig-lattice} distinguish the four dif\/ferent types of edges shown explicitly in Fig.~\ref{fig-crosses}. The two types of Boltzmann weights $W_\alpha(\sigma_i,\sigma_j)$, $\olW_\alpha(\sigma_i,\sigma_j)$ are assigned to the four types of edges as indicated in Fig.~\ref{fig-crosses}, and depend on the value of the spins at the vertices, and the value of the rapidity variables crossing the edge. The ordering of spins variables matters here, i.e., in general
\begin{gather*}
W_{\alpha}(\sigma_i,\sigma_j)\neq W_{\alpha}(\sigma_j,\sigma_i),\qquad\olW_\alpha(\sigma_i,\sigma_j)\neq \olW_\alpha(\sigma_j,\sigma_i).
\end{gather*}

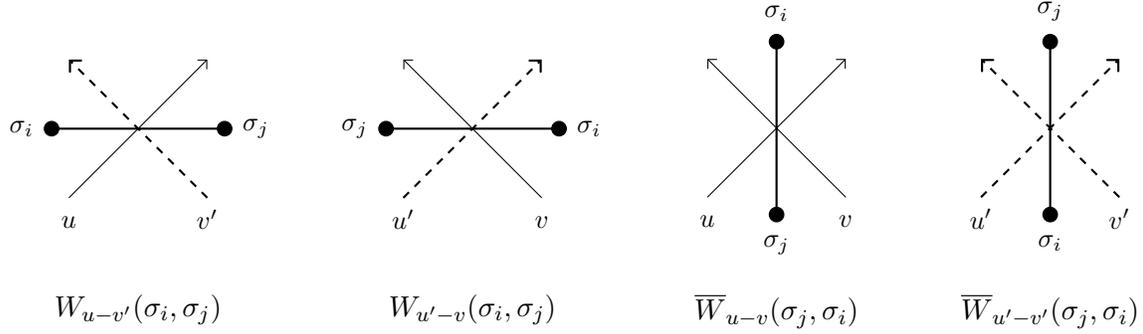
\begin{figure}[htb]
\centering
\begin{tikzpicture}[scale=2.3]
\draw[-,thick] (-0.5,2)--(0.5,2);
\draw[->,thick,dashed] (0.4,1.6)--(-0.4,2.4);
\fill[white!] (0.4,1.6) circle (0.01pt)
node[below=0.5pt]{\color{black}\small $v'$};
\draw[->] (-0.4,1.6)--(0.4,2.4);
\fill[white!] (-0.4,1.6) circle (0.01pt)
node[below=3.1pt]{\color{black}\small $u$};
\filldraw[fill=black,draw=black] (-0.5,2) circle (1.2pt)
node[left=3pt]{\color{black}\small $\sigma_i$};
%\filldraw[fill=white,draw=black] (0.5,2) circle (1.2pt)
\filldraw[fill=black,draw=black] (0.5,2) circle (1.2pt)
node[right=3pt]{\color{black}\small $\sigma_j$};

\fill (0,1.1) circle(0.01pt)
node[below=0.05pt]{\color{black} $W_{u-v'}(\sigma_i,\sigma_j)$};

\begin{scope}[xshift=55pt]
\draw[-,thick] (-0.5,2)--(0.5,2);
\draw[->] (0.4,1.6)--(-0.4,2.4);
\fill[white!] (0.4,1.6) circle (0.01pt)
node[below=3.3pt]{\color{black}\small $v$};
\draw[->,thick,dashed] (-0.4,1.6)--(0.4,2.4);
\fill[white!] (-0.4,1.6) circle (0.01pt)
node[below=0.5pt]{\color{black}\small $u'$};
%\filldraw[fill=white,draw=black] (-0.5,2) circle (1.2pt)
\filldraw[fill=black,draw=black] (-0.5,2) circle (1.2pt)
node[left=3pt]{\color{black}\small $\sigma_j$};
\filldraw[fill=black,draw=black] (0.5,2) circle (1.2pt)
node[right=3pt]{\color{black}\small $\sigma_i$};

\fill (0,1.1) circle(0.01pt)
node[below=0.05pt]{\color{black} $W_{u'-v}(\sigma_i,\sigma_j)$};
\end{scope}

\begin{scope}[xshift=105pt,yshift=57pt]
\draw[-,thick] (0,-0.5)--(0,0.5);
\draw[->] (-0.4,-0.4)--(0.4,0.4);
\fill[white!] (-0.4,-0.4) circle (0.01pt)
node[below=3.2pt]{\color{black}\small $u$};
\draw[->] (0.4,-0.4)--(-0.4,0.4);
\fill[white!] (0.4,-0.4) circle (0.01pt)
node[below=3.2pt]{\color{black}\small $v$};
%\filldraw[fill=white,draw=black] (0,-0.5) circle (1.2pt)
\filldraw[fill=black,draw=black] (0,-0.5) circle (1.2pt)
node[below=4pt]{\color{black}\small $\sigma_j$};
\filldraw[fill=black,draw=black] (0,0.5) circle (1.2pt)
node[above=4pt]{\color{black}\small $\sigma_i$};

\fill (0,-0.9) circle(0.01pt)
node[below=0.05pt]{\color{black} $\olW_{u-v}(\sigma_j,\sigma_i)$};
\end{scope}

\begin{scope}[xshift=150pt,yshift=57pt]
\draw[-,thick] (0,-0.5)--(0,0.5);
\draw[->,thick,dashed] (-0.4,-0.4)--(0.4,0.4);
\fill[white!] (-0.4,-0.4) circle (0.01pt)
node[below=0.5pt]{\color{black}\small $u'$};
\draw[->,thick,dashed] (0.4,-0.4)--(-0.4,0.4);
\fill[white!] (0.4,-0.4) circle (0.01pt)
node[below=0.5pt]{\color{black}\small $v'$};
\filldraw[fill=black,draw=black] (0,-0.5) circle (1.2pt)
node[below=4pt]{\color{black}\small $\sigma_i$};
%\filldraw[fill=white,draw=black] (0,0.5) circle (1.2pt)
\filldraw[fill=black,draw=black] (0,0.5) circle (1.2pt)
node[above=4pt]{\color{black}\small $\sigma_j$};

\fill (0,-0.9) circle(0.01pt)
node[below=0.05pt]{\color{black} $\olW_{u'-v'}(\sigma_j,\sigma_i)$};
\end{scope}

\end{tikzpicture}
\caption{Four dif\/ferent types of edges and their associated Boltzmann weights in \eqref{BWdef}.}\label{fig-crosses}
\end{figure}

The Boltzmann weights are conveniently expressed in terms of a function $\Phi(z,m)$, which is def\/ined in terms of the lens elliptic gamma function \eqref{legf2} as
\begin{gather*}
\Phi(z,m)=\Gamma\left(\frac{\sigma+\tau}{2}-z,-m;\sigma,\tau\right).
\end{gather*}
Then the Boltzmann weights are given by
\begin{gather}
 W_{\alpha}(\sigma_a,\sigma_b) = \prod_{i,j=1}^n\Phi(x_{a,i}-x_{b,j}+\ii\alpha,m_{a,i}-m_{b,j}), \nonumber\\
 \overline{W}_\alpha(\sigma_a,\sigma_b) = \sqrt{S(\sigma_a)S(\sigma_b)}\,W_{\eta-\alpha}(\sigma_a,\sigma_b),\label{BWdef}
\end{gather}
where
\begin{gather*}
S(\sigma_a)=\prod_{1\leq i<j\leq n}\Phi(-\ii\eta+x_{a,i}-x_{a,j},m_{a,i}-m_{a,j})\,\Phi(-\ii\eta+x_{a,j}-x_{a,i},m_{a,j}-m_{a,i}).
\end{gather*}
The parameter $\eta$ is known as the crossing parameter, and it relates the Boltzmann weight $W_\alpha(\sigma_a,\sigma_b)$ to the Boltzmann weight $\overline{W}_\alpha(\sigma_a,\sigma_b)$. It is given here by
\begin{gather*}
\eta=-\frac{\ii}{2}(\sigma+\tau).
\end{gather*}

Note that the Boltzmann weights \eqref{BWdef} physically represent an interaction energy between the two spins connected by the edge of the lattice. It is therefore often desirable that Boltzmann weights are positive and real valued, however the conditions for a regime where this condition is satisf\/ied are unfortunately not known.

Now let $E^{(1)}$, $E^{(2)}$, $E^{(3)}$, $E^{(4)}$, be respectively the sets of the four types of edges of $L$, that are depicted from left to right in Fig.~\ref{fig-crosses}. Then the partition function of the model is def\/ined as
\begin{gather}
Z=\int\prod_{(ij)\in E^{(1)}(L)}W_{u-v'}(\sigma_i,\sigma_j)\prod_{(ij)\in E^{(2)}(L)} W_{u'-v}(\sigma_i,\sigma_j)\prod_{(ij)\in E^{(3)}(L)} \olW_{u-v}(\sigma_i,\sigma_j)\nonumber\\
\hphantom{Z=}{}\times \prod_{(ij)\in E^{(4)}(L)} \olW_{u'-v'}(\sigma_i,\sigma_j) \prod_{i=1}^N{\rm d}\sigma_{i},\label{Zdefm}
\end{gather}
where the products are taken over the four types of edges of $L$ given in Fig.~\ref{fig-crosses}, and the integration is taken over $N$ spins $\sigma_i$ interior to the lattice, with boundary spins kept f\/ixed. Note that the integration is taken with respect to \eqref{measure2}.

In addition to the edge formulation given in Fig.~\ref{fig-crosses}, the model may be formulated as an interaction-round-a-face (IRF) model \cite{Baxter:1982zz}, in terms of either of the four-edge stars depicted in Fig.~\ref{fig-IRF}. These stars are associated two dif\/ferent Boltzmann weights $W^{(1)}_{\bu\bv}$, $W^{(2)}_{\bu\bv}$, according to Fig.~\ref{fig-crosses},
and are given by the expressions
\begin{gather}
W^{(1)}_{\bu\bv}\left( \begin{matrix}\sigma_i&\sigma_j\\ \sigma_k&\sigma_l\end{matrix}\right)=
\int {\rm d}\sigma_h \overline{W}_{u-v}(\sigma_k,\sigma_h)\overline{W}_{u'-v'}(\sigma_j,\sigma_h)W_{u'-v}(\sigma_h,\sigma_i)W_{u-v'}(\sigma_h,\sigma_l),\nonumber\\
W^{(2)}_{\bu\bv}\left( \begin{matrix}\sigma_i&\sigma_j\\ \sigma_k&\sigma_l\end{matrix} \right)=
\int {\rm d}\sigma_h \overline{W}_{u-v}(\sigma_h,\sigma_j)\overline{W}_{u'-v'}(\sigma_h,\sigma_k)W_{u'-v}(\sigma_l,\sigma_h)W_{u-v'}(\sigma_i,\sigma_h),
\label{IRFweights}
\end{gather}
where $\bu=\{u,u'\}$, and $\bv=\{v,v'\}$.

\begin{figure}[htb]
\centering
\begin{tikzpicture}[scale=1.2]

\draw[-,thick] (-1,-1)--(1,1);
\draw[-,thick] (1,-1)--(-1,1);

%\filldraw[fill=white!,draw=black!] (-1,-1) circle (2.2pt)
\filldraw[fill=black!,draw=black!] (-1,-1) circle (2.2pt)
node[below=1.5pt]{\color{black}\small $\sigma_k$};
%\filldraw[fill=white!,draw=black!] (1,-1) circle (2.2pt)
\filldraw[fill=black!,draw=black!] (1,-1) circle (2.2pt)
node[below=1.5pt]{\color{black}\small $\sigma_l$};
%\filldraw[fill=white!,draw=black!] (-1,1) circle (2.2pt)
\filldraw[fill=black!,draw=black!] (-1,1) circle (2.2pt)
node[above=1.5pt]{\color{black}\small $\sigma_i$};
%\filldraw[fill=white!,draw=black!] (1,1) circle (2.2pt)
\filldraw[fill=black!,draw=black!] (1,1) circle (2.2pt)
node[above=1.5pt]{\color{black}\small $\sigma_j$};

\filldraw[fill=black!,draw=black!] (0,0) circle (2.2pt)
node[below=2.5pt]{\color{black}\small $\sigma_h$};

\draw[->,thick,dashed] (-1.2,0.5) -- (1.2,0.5);
\draw[white!] (-1.2,0.5) circle (0.01pt)
node[left=1.5pt]{\color{black}\small $u'$};
\draw[->] (-1.2,-0.5) -- (1.2,-0.5);
\draw[white!] (-1.3,-0.5) circle (0.01pt)
node[left=1.5pt]{\color{black}\small $u$};
\draw[->] (-0.5,-1.2) -- (-0.5,1.2);
\draw[white!] (-0.5,-1.27) circle (0.01pt)
node[below=1.5pt]{\color{black}\small $v$};
\draw[->,thick,dashed] (0.5,-1.2) -- (0.5,1.2);
\draw[white!] (0.5,-1.2) circle (0.01pt)
node[below=1.5pt]{\color{black}\small $v'$};

\draw[white!] (0,-1.8) circle (0.01pt)
node[below=0.1pt]{\color{black}\small $W^{(1)}_{\bu\bv}\left( \begin{matrix}\sigma_i&\sigma_j\\ \sigma_k&\sigma_l\end{matrix}\right)$};

\begin{scope}[xshift=150pt]

\draw[-,thick] (-1,-1)--(1,1);
\draw[-,thick] (1,-1)--(-1,1);

%\filldraw[fill=white!,draw=black!] (-1,-1) circle (2.2pt)
\filldraw[fill=black!,draw=black!] (-1,-1) circle (2.2pt)
node[below=1.5pt]{\color{black}\small $\sigma_k$};
%\filldraw[fill=white!,draw=black!] (1,-1) circle (2.2pt)
\filldraw[fill=black!,draw=black!] (1,-1) circle (2.2pt)
node[below=1.5pt]{\color{black}\small $\sigma_l$};
%\filldraw[fill=white!,draw=black!] (-1,1) circle (2.2pt)
\filldraw[fill=black!,draw=black!] (-1,1) circle (2.2pt)
node[above=1.5pt]{\color{black}\small $\sigma_i$};
%\filldraw[fill=white!,draw=black!] (1,1) circle (2.2pt)
\filldraw[fill=black!,draw=black!] (1,1) circle (2.2pt)
node[above=1.5pt]{\color{black}\small $\sigma_j$};

%\filldraw[fill=white!,draw=black!] (0,0) circle (2.2pt)
\filldraw[fill=black!,draw=black!] (0,0) circle (2.2pt)
node[below=2.5pt]{\color{black}\small $\sigma_h$};

\draw[->] (-1.2,0.5) -- (1.2,0.5);
\draw[white!] (-1.3,0.5) circle (0.01pt)
node[left=1.5pt]{\color{black}\small $u$};
\draw[->,thick,dashed] (-1.2,-0.5) -- (1.2,-0.5);
\draw[white!] (-1.2,-0.5) circle (0.01pt)
node[left=1.5pt]{\color{black}\small $u'$};
\draw[->,thick,dashed] (-0.5,-1.2) -- (-0.5,1.2);
\draw[white!] (-0.5,-1.2) circle (0.01pt)
node[below=1.5pt]{\color{black}\small $v'$};
\draw[->] (0.5,-1.2) -- (0.5,1.2);
\draw[white!] (0.5,-1.27) circle (0.01pt)
node[below=1.5pt]{\color{black}\small $v$};

\draw[white!] (0,-1.8) circle (0.01pt)
node[below=0.1pt]{\color{black}\small $W^{(2)}_{\bu\bv}\left( \begin{matrix}\sigma_i&\sigma_j\\ \sigma_k&\sigma_l\end{matrix}\right)$};

\end{scope}
\end{tikzpicture}

\caption{Two types of four-edge stars and associated Boltzmann weights in \eqref{IRFweights}.}
\label{fig-IRF}
\end{figure}
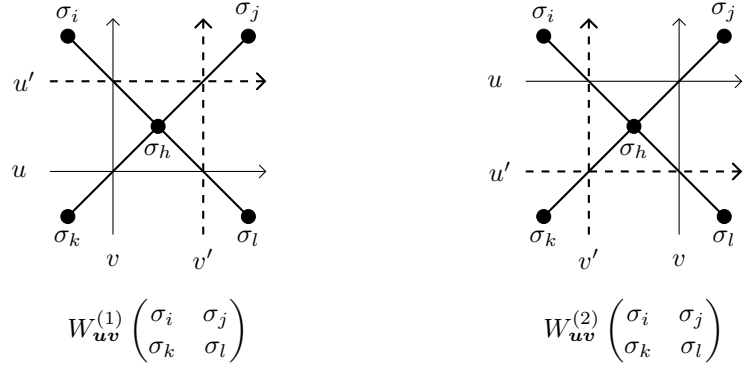

The lattice $L$ may be produced by periodic translations of either one of the two four-edge stars depicted in Fig.~\ref{fig-IRF}. The partition function \eqref{Zdefm} may then be written in equivalent forms (up to boundary ef\/fects), in terms of either of the
two Boltzmann weights in \eqref{IRFweights}. For example, let $V^{(1)}$, and $V^{(2)}$, denote two disjoint subsets of $V$, where $V^{(1)}$ is the set of all vertices of the type associated to $\sigma_h$ on the left hand side of Fig.~\ref{fig-IRF}, and $V^{(2)}$ is the set of vertices of the remaining type on the right hand side of Fig.~\ref{fig-IRF}. Then \eqref{Zdefm} may be written as
\begin{gather}\label{Z-IRF}
Z=\int \prod_{h\in V^{(1)}} W^{(1)}_{\bu\bv}\left( \begin{matrix}\sigma_i&\sigma_j\\ \sigma_k&\sigma_l\end{matrix}\right) \prod_{a\in V^{(2)}} {\rm d}\sigma_a.
\end{gather}
Note that the integration over vertices $h\in V^{(1)}$ is already made through the def\/inition of $W^{(1)}_{\bu\bv}$ in \eqref{IRFweights}, and the expression \eqref{Z-IRF} contains the integration over the remaining internal vertices $a\in V^{(2)}$.

\subsection[Star-star relation and $A_n$ sum/integral transformation]{Star-star relation and $\boldsymbol{A_n}$ sum/integral transformation}

An important property of the Boltzmann weights \eqref{IRFweights} is that they satisfy the following {\it star-star relation}
\begin{gather}
W_{v'-v}(\sigma_l,\sigma_k) W_{u'-u}(\sigma_l,\sigma_j) W^{(1)}_{\bu\bv} \left( \begin{matrix}\sigma_i&\sigma_j\\ \sigma_k&\sigma_l\end{matrix} \right)\nonumber\\
\qquad{} =W_{v'-v}(\sigma_j,\sigma_i) W_{u'-u}(\sigma_k,\sigma_i) W^{(2)}_{\bu\bv} \left( \begin{matrix}\sigma_i&\sigma_j\\ \sigma_k&\sigma_l\end{matrix} \right) .\label{ssr}
\end{gather}
This relation is depicted graphically in Fig.~\ref{ssfig}.

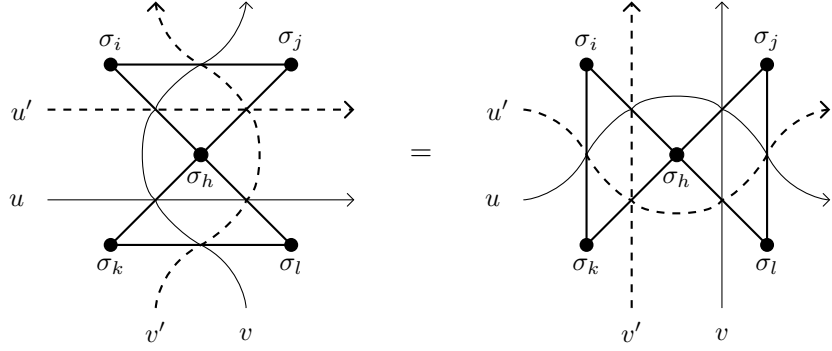
\begin{figure}[htb]\centering
\begin{tikzpicture}[scale=1.2]

\draw[-,thick] (-1,-1)--(1,1);
\draw[-,thick] (1,-1)--(-1,1);
\draw[-,thick] (-1,1)--(1,1);
\draw[-,thick] (-1,-1)--(1,-1);

\fill[black!] (-1,-1) circle (2.2pt)
node[below=1.5pt]{\color{black}\small $\sigma_k$};
\fill[black!] (1,-1) circle (2.2pt)
node[below=1.5pt]{\color{black}\small $\sigma_l$};
\fill[black!] (-1,1) circle (2.2pt)
node[above=1.5pt]{\color{black}\small $\sigma_i$};
\fill[black!] (1,1) circle (2.2pt)
node[above=1.5pt]{\color{black}\small $\sigma_j$};

\filldraw[fill=black!,draw=black!] (0,0) circle (2.2pt)
node[below=2.5pt]{\color{black}\small $\sigma_h$};

\draw[->,thick,dashed] (-1.7,0.5) -- (1.7,0.5);
\draw[white!] (-1.7,0.5) circle (0.01pt)
node[left=1.5pt]{\color{black}\small $u'$};
\draw[->] (-1.7,-0.5) -- (1.7,-0.5);
\draw[white!] (-1.8,-0.5) circle (0.01pt)
node[left=1.5pt]{\color{black}\small $u$};
\draw[->,thick,dashed] (-0.5,-1.7) .. controls (-0.5,-1.6) and (-0.4,-1.2) .. (0,-1) .. controls (0.2,-0.9) and (0.5,-0.6) .. (0.5,-0.5) .. controls (0.7,-0.4) and (0.7,0.4) .. (0.5,0.5) .. controls (0.5,0.6) and (0.2,0.9) .. (0,1) .. controls (-0.4,1.2) and (-0.5,1.6) .. (-0.5,1.7);
\draw[white!] (-0.5,-1.7) circle (0.01pt)
node[below=1.5pt]{\color{black}\small $v'$};
\draw[->] (0.5,-1.7) .. controls (0.5,-1.6) and (0.4,-1.2) .. (0,-1) .. controls (-0.2,-0.9) and (-0.5,-0.6) .. (-0.5,-0.5) .. controls (-0.7,-0.4) and (-0.7,0.4) .. (-0.5,0.5) .. controls (-0.5,0.6) and (-0.2,0.9) .. (0,1) .. controls (0.4,1.2) and (0.5,1.6) .. (0.5,1.7);
\draw[white!] (0.5,-1.77) circle (0.01pt)
node[below=1.5pt]{\color{black}\small $v$};

\draw[white!] (2.2,0) circle (0.01pt)
node[right=0.1pt]{\color{black}=};

\begin{scope}[xshift=150pt]

\draw[-,thick] (-1,-1)--(1,1);
\draw[-,thick] (1,-1)--(-1,1);
\draw[-,thick] (-1,-1)--(-1,1);
\draw[-,thick] (1,-1)--(1,1);

\fill[black!] (-1,-1) circle (2.2pt)
node[below=1.5pt]{\color{black}\small $\sigma_k$};
\fill[black!] (1,-1) circle (2.2pt)
node[below=1.5pt]{\color{black}\small $\sigma_l$};
\fill[black!] (-1,1) circle (2.2pt)
node[above=1.5pt]{\color{black}\small $\sigma_i$};
\fill[black!] (1,1) circle (2.2pt)
node[above=1.5pt]{\color{black}\small $\sigma_j$};

\filldraw[fill=black!,draw=black!] (0,0) circle (2.2pt)
node[below=2.5pt]{\color{black}\small $\sigma_h$};

\draw[->,thick,dashed] (-1.7,0.5) .. controls (-1.6,0.5) and (-1.2,0.4) .. (-1,0) .. controls (-0.9,-0.2) and (-0.6,-0.5) .. (-0.5,-0.5) .. controls (-0.4,-0.7) and (0.4,-0.7) .. (0.5,-0.5) .. controls (0.6,-0.5) and (0.9,-0.2) .. (1,0) .. controls (1.2,0.4) and (1.6,0.5) .. (1.7,0.5);
\draw[white!] (-1.7,0.5) circle (0.01pt)
node[left=1.5pt]{\color{black}\small $u'$};
\draw[->] (-1.7,-0.5) .. controls (-1.6,-0.5) and (-1.2,-0.4) .. (-1,0) .. controls (-0.9,0.2) and (-0.6,0.5) .. (-0.5,0.5) .. controls (-0.4,0.7) and (0.4,0.7) .. (0.5,0.5) .. controls (0.6,0.5) and (0.9,0.2) .. (1,0) .. controls (1.2,-0.4) and (1.6,-0.5) .. (1.7,-0.5);
\draw[white!] (-1.8,-0.5) circle (0.01pt)
node[left=1.5pt]{\color{black}\small $u$};
\draw[->,thick,dashed] (-0.5,-1.7) -- (-0.5,1.7);
\draw[white!] (-0.5,-1.7) circle (0.01pt)
node[below=1.5pt]{\color{black}\small $v'$};
\draw[->] (0.5,-1.7) -- (0.5,1.7);
\draw[white!] (0.5,-1.77) circle (0.01pt)
node[below=1.5pt]{\color{black}\small $v$};

\end{scope}
\end{tikzpicture}

\caption{The star-star relation \eqref{ssr}. This provides a graphical representation of the $A_n$ transformation in Theorem \ref{mainthm}, for the case $m=n$.}
\label{ssfig}
\end{figure}

The particular solution of the star-star relation \eqref{ssr} given by Boltzmann weights~\eqref{BWdef} and~\eqref{IRFweights} was given by the second author \cite{Yamazaki:2013nra}.

The main result for this section is showing that the star-star relation \eqref{ssr} is equivalent to Theorem \ref{mainthm} in the case $m=n$. This ends up being rather straightforward.

Indeed, consider the new variables
\begin{alignat}{3}
& t_j=\ii(u-v)- x_{c,j},\qquad &&s_j=-(u'-v-\eta)+ x_{a,j}, & \nonumber\\
& t_{n+j}=\ii(u'-v')- x_{b,j}, \qquad && s_{n+j}=-(u-v'-\eta)+ x_{d,j},& \label{ssrvar1}
\end{alignat}
and
\begin{gather}\label{ssrvar2}
a_j=-m_{c,j},\qquad a_{n+j}=-m_{b,j},\qquad b_j=+m_{a,j},\qquad b_{n+j}=+m_{d,j},
\end{gather}
where the $x_i$, and the $m_i$ satisfy \eqref{measure}.

Then the Boltzmann weights \eqref{IRFweights}, are seen to be equivalent to the $A_n$ sum/integrals \eqref{AnIntDef} in the form
\begin{gather}
\label{ssrlhs}
W^{(1)}_{\bu\bv}\left(\begin{matrix}\sigma_a&\sigma_b\\ \sigma_c&\sigma_d\end{matrix}\right)=I^{n-1}_{A_{n-1}}(\vect,\veca;\vecs,\vecb),
\end{gather}
and
\begin{gather}\label{ssrrhs}
W^{(2)}_{\bu\bv}\left(\begin{matrix}\sigma_a&\sigma_b\\ \sigma_c&\sigma_d\end{matrix}\right)=I^{n-1}_{A_{n-1}}(\tvect,\tveca;\tvecs,\tvecb),
\end{gather}
where the $\tvect$, $\tveca$, $\tvecs$, $\tvecb$ are the variables \eqref{ssrvar1}, \eqref{ssrvar2}, transformed according to \eqref{transrule}.

We also have
\begin{gather}\label{ssrfac}
\frac{W_{v'-v}(\sigma_j,\sigma_i)\,W_{u'-u}(\sigma_k,\sigma_i)}{W_{v'-v}(\sigma_l,\sigma_k)\,W_{u'-u}(\sigma_l,\sigma_j)}
=\prod_{i,j=1}^n\Gamma(t_j+s_i,a_j+b_i).
\end{gather}

The star-star relation \eqref{ssr} then follows from Theorem \ref{mainthm} with \eqref{ssrlhs}, \eqref{ssrrhs}, and \eqref{ssrfac}.

\subsection*{Acknowledgements}

The main results in Theorem \ref{mainthm}, and Theorem \ref{secondthm}, were presented in March 2017 at the workshop ``Elliptic Hypergeometric Functions in Combinatorics, Integrable Systems and Physics'', at the Erwin Schr\"{o}dinger Institute, in Vienna, and APK thanks the participants and organisers, particularly V.P.~Spiridonov, for their comments. We also thank the anonymous referees for helpful comments, which led us to use a change of variables to write the~$A_n$ transformation in a way where the right hand side of~\eqref{transdef} is manifestly periodic in both the complex and integer variables. Particularly, the periodicity is necessary to allow the $(\textrm{mod }2r)$, and $(\textrm{mod }r)$, in the balancing condition~\eqref{balancing}.

MY would like to thank Harvard university for hospitality where part of this work was performed.
APK is an overseas researcher under Postdoctoral Fellowship of Japan Society for the Promotion of Science (JSPS).
MY is supported by WPI program (MEXT, Japan), by JSPS Program for Advancing Strategic International Networks to Accelerate the Circulation of Talented Researchers, by JSPS KAKENHI Grant No.~15K17634, and by JSPS-NRF research fund.

\pdfbookmark[1]{References}{ref}
\LastPageEnding


\begin{thebibliography}{99}
\footnotesize\itemsep=0pt

\bibitem{AuYang:1987zc}
Au-Yang H., Perk J.H.H., McCoy B.M., Tang S., Yan M.L., Commuting transfer
 matrices in the chiral {P}otts models: solutions of star-triangle equations
 with genus {$>1$}, \href{https://doi.org/10.1016/0375-9601(87)90065-X}{\textit{Phys. Lett.~A}} \textbf{123} (1987), 219--223.

\bibitem{Baxter:1972hz}
Baxter R.J., Partition function of the eight-vertex lattice model, \href{https://doi.org/10.1016/0003-4916(72)90335-1}{\textit{Ann.
 Physics}} \textbf{70} (1972), 193--228.

\bibitem{Baxter:1982zz}
Baxter R.J., Exactly solved models in statistical mechanics, Academic Press,
 Inc., London, 1982.

\bibitem{Baxter:1997tn}
Baxter R.J., Star-triangle and star-star relations in statistical mechanics,
 \href{https://doi.org/10.1142/S0217979297000058}{\textit{Internat.~J. Modern Phys.~B}} \textbf{11} (1997), 27--37.

\bibitem{Baxter:1987eq}
Baxter R.J., Perk J.H.H., Au-Yang H., New solutions of the star-triangle
 relations for the chiral {P}otts model, \href{https://doi.org/10.1016/0375-9601(88)90896-1}{\textit{Phys. Lett.~A}} \textbf{128}
 (1988), 138--142.

\bibitem{Bazhanov:2013bh}
Bazhanov V.V., Kels A.P., Sergeev S.M., Comment on star-star relations in
 statistical mechanics and elliptic gamma-function identities,
 \href{https://doi.org/10.1088/1751-8113/46/15/152001}{\textit{J.~Phys.~A: Math. Theor.}} \textbf{46} (2013), 152001, 7~pages,
 \href{https://arxiv.org/abs/1301.5775}{arXiv:1301.5775}.

\bibitem{Bazhanov:2016ajm}
Bazhanov V.V., Kels A.P., Sergeev S.M., Quasi-classical expansion of the
 star-triangle relation and integrable systems on quad-graphs,
 \href{https://doi.org/10.1088/1751-8113/49/46/464001}{\textit{J.~Phys.~A: Math. Theor.}} \textbf{49} (2016), 464001, 44~pages,
 \href{https://arxiv.org/abs/1602.07076}{arXiv:1602.07076}.

\bibitem{Bazhanov:2007mh}
Bazhanov V.V., Mangazeev V.V., Sergeev S.M., Faddeev--{V}olkov solution of the
 {Y}ang--{B}axter equation and discrete conformal symmetry, \href{https://doi.org/10.1016/j.nuclphysb.2007.05.013}{\textit{Nuclear
 Phys.~B}} \textbf{784} (2007), 234--258, \href{https://arxiv.org/abs/hep-th/0703041}{hep-th/0703041}.

\bibitem{Bazhanov:2007vg}
Bazhanov V.V., Mangazeev V.V., Sergeev S.M., Exact solution of the
 {F}addeev--{V}olkov model, \href{https://doi.org/10.1016/j.physleta.2007.10.053}{\textit{Phys. Lett.~A}} \textbf{372} (2008),
 1547--1550, \href{https://arxiv.org/abs/0706.3077}{arXiv:0706.3077}.

\bibitem{Bazhanov:2011mz}
Bazhanov V.V., Sergeev S.M., Elliptic gamma-function and multi-spin solutions
 of the {Y}ang--{B}axter equation, \href{https://doi.org/10.1016/j.nuclphysb.2011.10.032}{\textit{Nuclear Phys.~B}} \textbf{856}
 (2012), 475--496, \href{https://arxiv.org/abs/1106.5874}{arXiv:1106.5874}.

\bibitem{Bazhanov:2010kz}
Bazhanov V.V., Sergeev S.M., A master solution of the quantum {Y}ang--{B}axter
 equation and classical discrete integrable equations, \href{https://doi.org/10.4310/ATMP.2012.v16.n1.a3}{\textit{Adv. Theor.
 Math. Phys.}} \textbf{16} (2012), 65--95, \href{https://arxiv.org/abs/1006.0651}{arXiv:1006.0651}.

\bibitem{Benini:2011nc}
Benini F., Nishioka T., Yamazaki M., 4d index to 3d index and 2d {TQFT},
 \href{http://dx.doi.org/10.1103/PhysRevD.86.065015}{\textit{Phys. Rev.~D}} \textbf{86} (2012), 065015, 10~pages,
 \href{https://arxiv.org/abs/1109.0283}{arXiv:1109.0283}.

\bibitem{DJKMO}
Date E., Jimbo M., Kuniba A., Miwa T., Okado M., Exactly solvable {SOS} models.
 {II}.~{P}roof of the star-triangle relation and combinatorial identities, in
 Conformal Field Theory and Solvable Lattice Models ({K}yo\-to, 1986),
 \textit{Adv. Stud. Pure Math.}, Vol.~16, Academic Press, Boston, MA, 1988,
 17--122.

\bibitem{Dolan:2008qi}
Dolan F.A., Osborn H., Applications of the superconformal index for protected
 operators and {$q$}-hy\-per\-geometric identities to {${\mathcal N}=1$} dual
 theories, \href{https://doi.org/10.1016/j.nuclphysb.2009.01.028}{\textit{Nuclear Phys.~B}} \textbf{818} (2009), 137--178,
 \href{https://arxiv.org/abs/0801.4947}{arXiv:0801.4947}.

\bibitem{Faddeev:1993pe}
Faddeev L., Volkov A.Yu., Abelian current algebra and the {V}irasoro algebra on
 the lattice, \href{https://doi.org/10.1016/0370-2693(93)91618-W}{\textit{Phys. Lett.~B}} \textbf{315} (1993), 311--318,
 \href{https://arxiv.org/abs/hep-th/9307048}{hep-th/9307048}.

\bibitem{Fateev:1982wi}
Fateev V.A., Zamolodchikov A.B., Self-dual solutions of the star-triangle
 relations in {$Z_{N}$}-models, \href{https://doi.org/10.1016/0375-9601(82)90736-8}{\textit{Phys. Lett.~A}} \textbf{92} (1982),
 37--39.

\bibitem{Frenkel1997}
Frenkel I.B., Turaev V.G., Elliptic solutions of the {Y}ang--{B}axter equation
 and modular hypergeometric functions, in The {A}rnold--{G}elfand Mathematical
 Seminars, \href{https://doi.org/10.1007/978-1-4612-4122-5_9}{Birkh\"auser Boston}, Boston, MA, 1997, 171--204.

\bibitem{GahramanovKels}
Gahramanov I., Kels A.P., The star-triangle relation, lens partition function,
 and hypergeometric sum/integrals, \href{https://doi.org/10.1007/JHEP02(2017)040}{\textit{J.~High Energy Phys.}} \textbf{2017}
 (2017), no.~2, 040, 41~pages, \href{https://arxiv.org/abs/1610.09229}{arXiv:1610.09229}.

\bibitem{Gahramanov:2015cva}
Gahramanov I., Spiridonov V.P., The star-triangle relation and {$3d$}
 superconformal indices, \href{https://doi.org/10.1007/JHEP08(2015)040}{\textit{J.~High Energy Phys.}} \textbf{2015} (2015),
 no.~8, 040, 23~pages, \href{https://arxiv.org/abs/1505.00765}{arXiv:1505.00765}.

\bibitem{Kashiwara:1986tu}
Kashiwara M., Miwa T., A class of elliptic solutions to the star-triangle
 relation, \href{https://doi.org/10.1016/0550-3213(86)90591-2}{\textit{Nuclear Phys.~B}} \textbf{275} (1986), 121--134.

\bibitem{Kels:2013ola}
Kels A.P., A new solution of the star-triangle relation, \href{https://doi.org/10.1088/1751-8113/47/5/055203}{\textit{J.~Phys.~A:
 Math. Theor.}} \textbf{47} (2014), 055203, 6~pages, \href{https://arxiv.org/abs/1302.3025}{arXiv:1302.3025}.

\bibitem{Kels:2015bda}
Kels A.P., New solutions of the star-triangle relation with discrete and
 continuous spin variables, \href{https://doi.org/10.1088/1751-8113/48/43/435201}{\textit{J.~Phys.~A: Math. Theor.}} \textbf{48}
 (2015), 435201, 19~pages, \href{https://arxiv.org/abs/1504.07074}{arXiv:1504.07074}.

\bibitem{Kinney:2005ej}
Kinney J., Maldacena J., Minwalla S., Raju S., An index for 4 dimensional super
 conformal theories, \href{https://doi.org/10.1007/s00220-007-0258-7}{\textit{Comm. Math. Phys.}} \textbf{275} (2007), 209--254,
 \href{https://arxiv.org/abs/hep-th/0510251}{hep-th/0510251}.

\bibitem{Narukawa2004247}
Narukawa A., The modular properties and the integral representations of the
 multiple elliptic gamma functions, \href{https://doi.org/10.1016/j.aim.2003.11.009}{\textit{Adv. Math.}} \textbf{189} (2004),
 247--267, \href{https://arxiv.org/abs/math.QA/0306164}{math.QA/0306164}.

\bibitem{RainsT}
Rains E.M., Transformations of elliptic hypergeometric integrals, \href{https://doi.org/10.4007/annals.2010.171.169}{\textit{Ann.
 of Math.}} \textbf{171} (2010), 169--243, \href{https://arxiv.org/abs/math.QA/0309252}{math.QA/0309252}.

\bibitem{Razamat:2013opa}
Razamat S.S., Willett B., Global properties of supersymmetric theories and the
 lens space, \href{https://doi.org/10.1007/s00220-014-2111-0}{\textit{Comm. Math. Phys.}} \textbf{334} (2015), 661--696,
 \href{https://arxiv.org/abs/1307.4381}{arXiv:1307.4381}.

\bibitem{Ruijsenaars:1997:FOA}
Ruijsenaars S.N.M., First order analytic dif\/ference equations and integrable
 quantum systems, \href{https://doi.org/10.1063/1.531809}{\textit{J.~Math. Phys.}} \textbf{38} (1997), 1069--1146.

\bibitem{Seiberg:1994pq}
Seiberg N., Electric-magnetic duality in supersymmetric non-abelian gauge
 theories, \href{https://doi.org/10.1016/0550-3213(94)00023-8}{\textit{Nuclear Phys.~B}} \textbf{435} (1995), 129--146,
 \href{https://arxiv.org/abs/hep-th/9411149}{hep-th/9411149}.

\bibitem{SpiridonovEBF}
Spiridonov V.P., On the elliptic beta function, \href{https://doi.org/10.1070/rm2001v056n01ABEH000374}{\textit{Russian Math. Surveys}}
 \textbf{56} (2001), 185--186.

\bibitem{SpiridonovTheta}
Spiridonov V.P., Theta hypergeometric integrals, \href{https://doi.org/10.1090/S1061-0022-04-00839-8}{\textit{St.~Petersburg
 Math.~J.}} \textbf{15} (2004), 929--967, \href{https://arxiv.org/abs/math.CA/0303205}{math.CA/0303205}.

\bibitem{SpiridonovShortProofs}
Spiridonov V.P., Short proofs of the elliptic beta integrals,
 \href{https://doi.org/10.1007/s11139-006-0252-2}{\textit{Ramanujan~J.}} \textbf{13} (2007), 265--283, \href{https://arxiv.org/abs/math.CA/0408369}{math.CA/0408369}.

\bibitem{Spiridonov-essays}
Spiridonov V.P., Essays on the theory of elliptic hypergeometric functions,
 \href{https://doi.org/10.1070/RM2008v063n03ABEH004533}{\textit{Russian Math. Surveys}} \textbf{63} (2008), 405--472,
 \href{https://arxiv.org/abs/0805.3135}{arXiv:0805.3135}.

\bibitem{Spiridonov:2010em}
Spiridonov V.P., Elliptic beta integrals and solvable models of statistical
 mechanics, in Algebraic Aspects of {D}arboux Transformations, Quantum
 Integrable Systems and Supersymmetric Quantum Mechanics, \href{https://doi.org/10.1090/conm/563/11170}{\textit{Contemp.
 Math.}}, Vol.~563, Amer. Math. Soc., Providence, RI, 2012, 181--211,
 \href{https://arxiv.org/abs/1011.3798}{arXiv:1011.3798}.

\bibitem{rarified}
Spiridonov V.P., Raref\/ied elliptic hypergeometric functions,
 \href{https://arxiv.org/abs/1609.00715}{arXiv:1609.00715}.

\bibitem{Spiridonov:2009za}
Spiridonov V.P., Vartanov G.S., Elliptic hypergeometry of supersymmetric
 dualities, \href{https://doi.org/10.1007/s00220-011-1218-9}{\textit{Comm. Math. Phys.}} \textbf{304} (2011), 797--874,
 \href{https://arxiv.org/abs/0910.5944}{arXiv:0910.5944}.

\bibitem{Spiridonov:2011hf}
Spiridonov V.P., Vartanov G.S., Elliptic hypergeometry of supersymmetric
 dualities {II}.~{O}rthogonal groups, knots, and vortices, \href{https://doi.org/10.1007/s00220-013-1861-4}{\textit{Comm. Math.
 Phys.}} \textbf{325} (2014), 421--486, \href{https://arxiv.org/abs/1107.5788}{arXiv:1107.5788}.

\bibitem{STOKMAN2005119}
Stokman J.V., Hyperbolic beta integrals, \href{https://doi.org/10.1016/j.aim.2003.12.003}{\textit{Adv. Math.}} \textbf{190}
 (2005), 119--160, \href{https://arxiv.org/abs/math.QA/0303178}{math.QA/0303178}.

\bibitem{Terashima:2012cx}
Terashima Y., Yamazaki M., Emergent 3-manifolds from 4d superconformal indices,
 \href{http://dx.doi.org/10.1103/PhysRevLett.109.091602}{\textit{Phys. Rev. Lett.}} \textbf{109} (2012), 091602, 4~pages,
 \href{https://arxiv.org/abs/1203.5792}{arXiv:1203.5792}.

\bibitem{Volkov:1992uv}
Volkov A.Yu., Quantum {V}olterra model, \href{https://doi.org/10.1016/0375-9601(92)90270-V}{\textit{Phys. Lett.~A}} \textbf{167}
 (1992), 345--355.

\bibitem{Yamazaki:2012cp}
Yamazaki M., Quivers, {YBE} and 3-manifolds, \href{https://doi.org/10.1007/JHEP05(2012)147}{\textit{J.~High Energy Phys.}}
 \textbf{2012} (2012), no.~5, 147, 50~pages, \href{https://arxiv.org/abs/1203.5784}{arXiv:1203.5784}.

\bibitem{Yamazaki:2013nra}
Yamazaki M., New integrable models from the gauge/{YBE} correspondence,
 \href{https://doi.org/10.1007/s10955-013-0884-8}{\textit{J.~Stat. Phys.}} \textbf{154} (2014), 895--911, \href{https://arxiv.org/abs/1307.1128}{arXiv:1307.1128}.

\end{thebibliography}
\end{document}